\newcommand{\blind}{1}
\DeclareBoldMathCommand{\ba}{a}
\DeclareBoldMathCommand{\bb}{b}
\DeclareBoldMathCommand{\be}{e}
\DeclareBoldMathCommand{\br}{r}
\DeclareBoldMathCommand{\bg}{g}
\DeclareBoldMathCommand{\bt}{t}
\DeclareBoldMathCommand{\bu}{u}
\DeclareBoldMathCommand{\bv}{v}
\DeclareBoldMathCommand{\be}{e}
\DeclareBoldMathCommand{\bw}{w}
\DeclareBoldMathCommand{\bx}{x}
\DeclareBoldMathCommand{\bz}{z}
\DeclareBoldMathCommand{\bm}{m}
\DeclareBoldMathCommand{\by}{y}
\DeclareBoldMathCommand{\bh}{h}
\DeclareBoldMathCommand{\bA}{A}
\DeclareBoldMathCommand{\bD}{D}
\DeclareBoldMathCommand{\bB}{B}
\DeclareBoldMathCommand{\bY}{Y}
\DeclareBoldMathCommand{\bX}{X}
\DeclareBoldMathCommand{\bZ}{Z}
\DeclareBoldMathCommand{\bM}{M}
\DeclareBoldMathCommand{\bN}{N}
\DeclareBoldMathCommand{\bP}{P}
\DeclareBoldMathCommand{\bI}{I}
\DeclareBoldMathCommand{\bT}{T}
\DeclareBoldMathCommand{\bU}{U}
\DeclareBoldMathCommand{\bS}{S}
\DeclareBoldMathCommand{\bQ}{Q}
\DeclareBoldMathCommand{\bW}{W}
\DeclareBoldMathCommand{\bE}{E}
\DeclareBoldMathCommand{\bV}{V}
\DeclareBoldMathCommand{\bJ}{J}
\DeclareBoldMathCommand{\bL}{L}
\DeclareBoldMathCommand{\bR}{R}
\DeclareBoldMathCommand{\bG}{G}
\DeclareBoldMathCommand{\bzero}{0}
\DeclareBoldMathCommand{\bone}{1}
\DeclareBoldMathCommand{\balpha}{\alpha}
\DeclareBoldMathCommand{\bxi}{\xi}
\DeclareBoldMathCommand{\bbeta}{\beta}
\DeclareBoldMathCommand{\bkappa}{\kappa}
\DeclareBoldMathCommand{\brho}{\rho}
\DeclareBoldMathCommand{\btau}{\tau}
\DeclareBoldMathCommand{\beeta}{\eta}
\DeclareBoldMathCommand{\btheta}{\theta}
\DeclareBoldMathCommand{\bdelta}{\delta}
\DeclareBoldMathCommand{\bPhi}{\Phi}
\DeclareBoldMathCommand{\bzeta}{\kappa}
\DeclareBoldMathCommand{\bgamma}{\gamma}
\DeclareBoldMathCommand{\bSigma}{\Sigma}
\DeclareBoldMathCommand{\bmathcaly}{\mathcal{Y}}
\DeclareBoldMathCommand{\bTheta}{\Theta}
\DeclareBoldMathCommand{\bmu}{\mu}
\DeclareBoldMathCommand{\bpsi}{\psi}
\DeclareBoldMathCommand{\bzeta}{\zeta}
\DeclareBoldMathCommand{\bvarepsilon}{\varepsilon}
\DeclareBoldMathCommand{\bepsilon}{\epsilon}
\DeclareBoldMathCommand{\rn}{\frac{\phi}{\phi+\|\tilde{\bZ}_n\|_2^2}}
\DeclareMathOperator{\Tr}{Tr}
\def\bTheta{\boldsymbol{\Theta}}
\def\bSigma{\boldsymbol{\Sigma}}
\def\bOmega{\boldsymbol{\Omega}}
\newcommand{\pw}{\bP_\bW}
\newcommand{\px}{\bP_\bX}
\newcommand{\pwn}{\bP_{\bW_n}}
\newcommand{\pxn}{\bP_{\bX_n}}
\newtheorem{theorem}{Theorem}[section]
\newtheorem{proposition}[theorem]{Proposition}
\newtheorem{lemma}[theorem]{Lemma}
\newtheorem{corollary}[theorem]{Corollary}
\newtheorem{definition}[theorem]{Definition}
\begin{document}

\def\spacingset#1{\renewcommand{\baselinestretch}%
{#1}\small\normalsize} \spacingset{1}


\if1\blind
{
  \title{\bf Anytime-Valid Inference in Linear Models and Regression-Adjusted Causal Inference}
  \author{Michael Lindon$^*$,   Dae Woong Ham$^\dagger$, Martin Tingley$^\star$,\\ and Iavor Bojinov$^\diamond$}
  \maketitle
} \fi

\if0\blind
{
  \bigskip
  \bigskip
  \bigskip
  \begin{center}
    {\LARGE\bf Anytime-Valid Inference in Linear Models and Regression-Adjusted Causal Inference}
\end{center}
  \medskip
} \fi

\bigskip
\begin{abstract}

Linear models are foundational tools in statistics and ubiquitous across the applied sciences. However, conventional statistical inference —such as $t$-tests and $F$-tests— 
are only valid at fixed sample sizes,
making them unsuitable for sequential settings such as online A/B testing. 
We develop an anytime-valid theory of inference for the linear model, introducing
sequential analogues of classical tests and confidence sets that provide Type-I error control and coverage guarantees uniformly over all sample sizes.
Our construction is based on likelihood ratios of invariantly sufficient statistics, yielding simple closed-form expressions of ordinary least squares estimators and standard errors.
The resulting tests are optimal in the GROW/REGROW sense for both frequentist and Bayesian alternative hypotheses.
We then relax the linear model assumptions to provide heteroskedasticity-robust asymptotic sequential tests and confidence sequences, which enable sequential regression-adjusted inference for causal estimands in randomized controlled experiments.
This formally allows experiments to be continuously monitored for significance, stopped early, and safeguards against statistical malpractices in data collection. 
We demonstrate the practical utility of our approach through simulations and applications to real A/B test data from Netflix.

\end{abstract}

\noindent%
{\it Keywords:} e-processes, safe testing, martingales, Bayes factors, sequential testing, confidence sequences, anytime-valid
\vfill
\footnotesize
\qquad
\\
$^*$Netflix, Los Gatos, CA. michael.s.lindon@gmail.com
\\
$^\dagger$Ross School of Business, Ann Arbor, MI. daewoong@umich.edu
\\
$^\star$Microsoft, Redmond, WA. martin.tingley@gmail.com
\\
$^\diamond$Harvard School of Business, Harvard, MA. ibojinov@hbs.edu
\normalsize
\newpage
\spacingset{1.9} 
\section{Introduction}
\label{sec:intro}

We develop an anytime-valid approach to inference in linear models using $e$-variables.
Linear regression models are fundamental to statistics with many practical applications across the applied sciences.
In causal inference they are often employed to estimate and test hypotheses about treatment effects in randomized experiments \citep{lin}.
Their ability to achieve variance reduction through regression adjustment and their robustness to model misspecification under randomized assignment result in their widespread recommendation in many official research guidelines \citep{ye}.
Linear models can also be computed recursively, updating parameters as new data arrives, making them a scalable option for the analysis of streaming data common in online A/B tests.

However, current testing and interval estimation procedures are limited in that their Type-I error and coverage guarantees only hold at a fixed, often pre-determined, sample size (which we refer to as ``fixed-$n$'').
This is a severe limitation in online A/B tests, where data arrives sequentially, as it is desirable to continuously monitor experiments for nonzero treatment effects.
It also causes problems in connection with the replication crisis of published research \citep{asapvalue, redefinestatisticalsignificance, sagarin2014}.
For example, it is estimated that only one-third of statistically significant results published in top psychology journals can be reproduced in follow-up experiments \citep{oss}.
A recognized issue is that researchers often do not disclose details of the data collection process at the time of submission \citep{ioannidis2005}.
In an anonymous survey of 2000 published researchers in psychology, approximately 60\% of respondents admitted to collecting more data after seeing whether the results were significant, with approximately 20\% admitting to stopping data collection after achieving a significant result \citep{john2012}.
The high numbers of published false positives can at least partially be attributed to the inability of current methods to control Type-I errors under optional stopping.

The paper is structured as follows.
In Section~\ref{sec:safe_testing} we review the literature and key concepts of safe anytime-valid inference.
In Section~\ref{sec:group_invariance} we introduce the prerequisites of group-invariance and its applications to statistics.
In Section~\ref{sec:linear_model_results} we develop anytime-valid inference for the linear model through group-invariance principles, presenting sequential analogues of classical $t-$tests, $F-$tests, and confidence sets.
In Section \ref{sec:heteroskedastic} we relax the assumption of homoskedasticity and provide robust standard errors, while also introducing asymptotic ``t'' confidence sequences.
In Section~\ref{sec:ate} we consider randomized experiments, in which we are able to completely relax the assumptions of the linear model, providing asymptotic confidence sequences and $e$-processes for causal estimands.
In Section~\ref{sec:examples} we provide both simulated and real examples, which demonstrate robustness to model misspecification and the practical utility of our approach.

\section{Safe Testing and Anytime-Valid Inference}
\label{sec:safe_testing}
In this section we introduce key concepts and definitions for safe anytime-valid inference (see \citet{e-book} for a thorough introduction)
Consider a set of probability measures $\mathcal{P}$ with a subset $\mathcal{H}_0 \subset \mathcal{P}$ defining a null hypothesis.
An $e$-\textit{variable} \citep{grunwald} relative to a collection of distributions $\mathcal{H}_0$ is a nonnegative random variable satisfying
$\mathbb{E}_P[E] \leq 1$ for all $P \in \mathcal{H}_0$.
$e$-variables are used to reject the null hypothesis when $E\geq\alpha^{-1}$, obtaining a Type-I error guarantee $\mathbb{P}_P[E \geq \alpha^{-1}] \leq \mathbb{E}_P[E] / \alpha^{-1} = \alpha$ for all $P \in \mathcal{H}_0$ by Markov's inequality.
A \textit{conditional} $e$-\textit{variable} $C_n$ on a sample space equipped with a filtration $\mathcal{F}_n$, is a nonnegative $\mathcal{F}_n$-measurable random variable satisfying 
$\mathbb{E}_P[C_n | \mathcal{F}_{n-1}] \leq 1$ for all $ P \in \mathcal{H}_0$.
A sequence $(E_n)_{n=1}^\infty$ adapted to $\mathcal{F}_n$ is a \textit{test martingale} \citep{shafer} relative to $H_0$ if $E_n$ is a nonnegative supermartingale for all $P \in \mathcal{H}_0$.
Clearly, $E_n \coloneqq \prod_{i=1}^n C_i$ defines a test-martingale because $\mathbb{E}_P[E_n | \mathcal{F}_{n-1}] \leq E_{n-1}$ for all $P \in \mathcal{H}_0$.
Test martingales are used to reject the null hypothesis when $E_n \geq \alpha^{-1}$, obtaining a Type-I error guarantee $\mathbb{P}_P[ \exists n \in \mathbb{N} : E_n \geq \alpha^{-1}] \leq \alpha$ for all $P \in \mathcal{H}_0$ by Ville's inequality \citep{ville}.
Alternatively we may say that the stopped process $E_\tau$, for any stopping time $\tau$ with respect to $\mathcal{F}_n$, is an $e$-variable relative to $\mathcal{H}_0$ \citep{howard}.

A test martingale is actually a stronger condition than necessary to obtain the former time-uniform Type-I error guarantee. 
Instead, it suffices to show that for all $P \in \mathcal{H}_0$ there exists a $P$-nonnegative supermartingale $(M_n^P)_{n=1}^\infty$ such that $E_n\leq M_n^P$ $\forall n \in \mathbb{N}$ $P$-almost surely.
Such a process is called an $e$-\textit{process} \citep{grunwaldramdas}.
A $p$-\textit{process} $(p_n)_{n=1}^\infty$ is a sequence of nonnegative random variables satisfying $\mathbb{P}_P[\exists n \in \mathbb{N} : p_n\leq \alpha ] \leq \alpha$ for all $P \in \mathcal{H}_0$, and is readily obtained using $p_n = 1/E_n$.
Similarly, the stopped process $p_\tau$ is a $p$-\textit{variable} (more commonly $p$-\textit{value}).
A \textit{confidence sequence} $(C_n^\alpha)_{n=1}^\infty$ for a sequence of estimands $(\theta_n)_{n=1}^\infty$ is a sequence of sets satisfying $\mathbb{P}_P[\theta_n \in C_n^\alpha\,\, \forall n\in\mathbb{N}] \geq 1-\alpha$ for all $P \in \mathcal{H}_0$, providing a time-uniform $1-\alpha$ \textit{coverage} guarantee.
Consider the $\alpha$-level sequential test which rejects the null hypothesis using the stopping time $\tau = \inf\{n \in \mathbb{N} : E_n \geq \alpha^{-1}\}$.
Such a test is said to be \textit{power 1} \citep{power1} if $\mathbb{P}_P[\tau < \infty] = 1$ for all $P \in \mathcal{H}_1 \coloneqq \mathcal{P}\setminus \mathcal{H}_0$.
A sufficient condition for power 1 is to show $\underset{n \rightarrow \infty}{\lim} (1/n)\log E_n = c > 0$ $P$-almost surely for all $P \in \mathcal{H}_1$.

In the second half of this paper we leverage asymptotic arguments. A process $(\bar{E}_n)_{n=1}^\infty$ is an \textit{asymptotic $e$-process} \citep{asymptoticcs} for $\mathcal{H}_0$ if there exists a nonasymptotic $e$-process $(E_n)_{n=1}^\infty$ for $\mathcal{H}_0$ such that $\log(\bar{E}_n) / \log(E_n) \rightarrow 1$ $P$-almost surely for all $P$.
Similarly a sequence of sets $(\bar{C}_n^\alpha)_{n=1}^\infty$ is an \textit{asymptotic confidence sequence} for $(\theta_n)_{n=1}^\infty$ if there exists a nonasymptotic confidence sequence $(C_n^\alpha)_{n=1}^\infty$ such that the normalized measure of the symmetric difference vanishes almost surely i.e.
$\mu(C_n^\alpha \Delta \bar{C}_n^\alpha) / \mu(C_n^\alpha) \rightarrow 0$ almost surely for an appropriate measure $\mu$.

\section{Composite Nulls and Group-Invariance}
\label{sec:group_invariance}

Constructing a valid $e$-process for a composite null (i.e. with nuisance parameters so that $\mathcal{H}_0$ is not a singleton) can be challenging.
Instead, we seek to construct a sequence of statistics $(m_i(\bY_i))_{i=1}^\infty$ whose distribution does not depend on the nuisance parameters, reducing a composite null to a simple null.
Additionally, we further desire that the reduction of the data to a statistic retains as much information about the parameters of interest as possible, being optimal in some sense to be defined.
To this end we appeal to group-invariance principles.

Let $\mathcal{P} = \{P_\theta : \theta \in \Theta\}$ be a parametric family of probability measures defined on a measurable space $(\mathcal{Y}, \mathcal{B})$.
Let $G$ be a group of transformations such that every $g\in G$ is a a one-to-one measurable function from $\mathcal{Y}$ to $\mathcal{Y}$. 
For any $P \in \mathcal{P}$ and $g \in G$ the probability measure $gP$ is defined by $gP(B) = P(g^{-1}B)$ for any $B \in \mathcal{B}$.
A statistical model is $G$-\textit{invariant} if for each $P \in\mathcal{P}$, $gP\in\mathcal{P}$ for all $g \in G$.
For parametric models this implies that if $y \sim P_\theta$ for some $\theta \in \Theta$, then for any $g \in G$, we have $gy\sim P_{\theta'}$ for some $\theta' \in \Theta$ i.e. the law of $gy$ belongs to the same parametric family.
This in turn induces a group of transformations on the parameter space defined by $\theta' = g\theta$.

Now consider a sequence $(y_i)_{i=1}^\infty$ of i.i.d. outcomes from $P_{\theta}$ for some $\theta \in \Theta$ and let $\bY_n = (y_1, \dots, y_n) \in \mathbb{R}^n$ denote first $n$ outcomes.
A statistic $m_n(\bY_n)$ is said to be \textit{invariant} if $m_n(g \bY_n) = m_n(\bY_n)$ for all $g \in G$.
The \textit{orbit} of $\bY_n$ is the set $\mathcal{O}(\bY_n) = \{g\bY_n : g \in G\}$.
A \textit{maximal invariant} statistic is an invariant statistic that is constant on orbits and assigns a unique value to each orbit i.e., $m(\bY_n')=m(\bY_n)$ $\Rightarrow$ $\bY_n' = g\bY_n$ for some $g \in G$.
A standard result is that any invariant statistic can be written as a function of a maximal invariant statistic.
An invariant statistic can be written $m_n(\bY_n) \sim P_{t(\theta)}$ where $t(\theta)$ is an invariant function of the parameters i.e. $t(g \theta) = t(\theta)$.
Without loss of generality it is helpful to simply redefine the parameter space as $\theta = (\rho, \xi)$ where $\rho \in \mathrm{P}$ is a vector of nuisance parameters
and $\xi \in \Xi$ is a $G$-invariant function of the parameters of interest. For example, if $y_i \sim N(\mu, \sigma^2)$, we may write $\theta = (\sigma, \mu / \sigma)$. 
The group action $g: \bY_n \rightarrow c \bY_n$ sends $(\sigma, \mu / \sigma) \rightarrow (c\sigma, \mu/\sigma)$, as the standardized mean $\xi \coloneqq \mu / \sigma$ is invariant under $G$.
The statistic $m_n(\bY_n) = (y_1 / |y_1|, \dots, y_n / |y_1|)$ is a maximal invariant under $G$, whose distribution depends only on the parameter of interest $\xi$ and not on the nuisance parameter $\sigma$.

Consider testing $H_0: \xi = \xi_0$ against either a frequentist alternative $H_1: \xi = \xi_1$ or a Bayesian alternative $H_1: \xi \sim \Pi$.
In either case the only free parameters are the nuisance parameters $\rho$, so we can write $H_0: \bY_n \sim P_\rho$ vs $H_1: \bY_n \sim Q_\rho$ for some $\rho \in \mathrm{P}$ ($Q_\rho$ is the Bayes marginal for the Bayesian alternative)\footnote{Note we refrain from introducing additional notation to distinguish between the probability measures for single outcomes $P$, vectors $P^{(n)}$, and sequences $P^{(\infty)}$, as it is clear from the context.}. 
If $m_n(\bY_n)$ is a maximal invariant then we can simply write $m_n(\bY_n) \sim P$ and $m_n(\bY_n) \sim Q$ (dropping the index $\rho$) as the distribution of $m_n(\bY_n)$ is unaffected by $\rho$.
\begin{lemma}
    \label{lem:invariant_lr}
Let $(m_i(\bY_i))_{i=1}^\infty$ be a sequence of invariant statistics. The likelihood ratio
\begin{equation}
    \label{eq:simple_lr}
    E_n \coloneqq \frac{q(m_1(\bY_1), \dots, m_n(\bY_n))}{p(m_1(\bY_1), \dots, m_n(\bY_n))}
\end{equation}
is a test martingale for $H_0$ with respect to the filtration $\mathcal{F}_n = \sigma(m_i(\bY_i) : i\leq n)$.
If $m_i(\bY_i)$ are maximally invariant, then equation \eqref{eq:simple_lr} simplifies to
\begin{equation}
    \label{eq:simple_mi_lr}
    E_n = \frac{q(m_n(\bY_n))}{p(m_n(\bY_n ))}.
\end{equation}
\end{lemma}
The first result follows because the factor $\frac{q(m_n(\bY_n) | m_1(\bY_1),\dots,m_{n-1}(\bY_{n-1}))}{p(m_n(\bY_n) | m_1(\bY_1),\dots,m_{n-1}(\bY_{n-1}))}$ is a conditional $e$-variable.
The second result follows because $m_i(\bY_i)$ for $i<n$ are invariant functions of $\bY_n$ and can therefore be written as functions of $m_n(\bY_n)$.
Hence $E_n$ is both by itself an $e$-variable and the sequence $(E_n)_{n \in \mathbb{N}}$ is a test martingale (and therefore also an $e$-process) for $\mathcal{H}_0$.
The advantage of working with an invariant statistic, therefore, is that it reduces a composite null to a simple null, permitting the use of $E_n$ in equation \eqref{eq:simple_lr} to test $H_0$.
The advantage of working with a \textit{maximal} invariant statistic, is that it simplifies computation of $E_n$ to equation \eqref{eq:simple_mi_lr}.
The primary advantage of working with a maximal invariant statistic, however, is that the resulting $e$-process is GROW and REGROW optimal \citep{muriel}.
\begin{definition}[GROW/REGROW $e$-Variable \citep{grunwald}]
    \label{def:grow}
Let $\mathcal{E}_0 = \{T_n : \mathbb{E}_{P_\rho}[T_n] \leq 1 \,\forall \rho \in \mathrm{P}\}$ denote the set of $e$-variables for $\{P_\rho\}_{\rho \in \mathrm{P}}$. The growth rate optimal in the worst case (GROW) $e$-variable $E_n$ for $\{P_\rho\}_{\rho \in \mathrm{P}}$ against $\{Q_\rho\}_{\rho \in \mathrm{P}}$ satisfies
\begin{equation}
    \label{eq:grow}
    \inf_{\rho}\mathbb{E}_{Q_\rho}[\log E_n] = \sup_{T_n \in \mathcal{E}_0} \inf_{\rho}\mathbb{E}_{Q_\rho}[\log T_n],
\end{equation}
and the relatively GROW $e$-variable $E_n$ satisfies
\begin{equation}
    \label{eq:regrow}
    \inf_{\rho}\{\mathbb{E}_{Q_\rho}[\log E_n] - \sup_{T_n' \in \mathcal{E}_0}[ \mathbb{E}_{Q_\rho} \log T_n']\} = \sup_{T_n \in \mathcal{E}_0}\inf_{\rho}\{\mathbb{E}_{Q_\rho}[\log T_n] - \sup_{T_n' \in \mathcal{E}_0} \mathbb{E}_{Q_\rho} [\log T_n']\}
\end{equation}
\end{definition}
The GROW $e$-variable maximizes the worst case expected logarithmic growth, while the REGROW $e$-variable maximizes the worst case expected logarithmic growth relative to an oracle that knows the values of the nuisance parameters.

The challenge remains in identifying maximal invariant statistics in general.
There are at least three approaches to finding maximal invariant statistics.
The first approach is via Ansatz --- postulate a statistic via problem intuition, verify that it is a maximal invariant, and derive the densities under null and alternative.
The second approach is constructive via the following lemma.
\begin{lemma}[Wijsman's Representation Theorem]
    \label{lem:wijsman}
Let $\pi(\rho)$ denote the corresponding right-Haar prior for the group $G$, then
\begin{equation}
    \label{eq:wijsman}
    \frac{q(m_n(\bY_n))}{p(m_n(\bY_n ))} = \frac{\int q_\rho(\bY_n)\pi(\rho)d\rho}{\int p_\rho(\bY_n)\pi(\rho)d\rho},
\end{equation}
\end{lemma}
The proof can be read in \citet{andersson1982}. 
While lemma \ref{lem:wijsman} does not require finding a maximal invariant statistic upfront, the integral representation can be difficult to calculate in closed form.

The third path proceeds via a reduction to sufficient statistics first, followed by applying invariance arguments.
Let $S_n(\bY_n) \in \mathcal{S}$ denote a set of sufficient statistics for the parameters. 
The group of transformations on the outcome space induces a group of transformations on the space of sufficient statistics.
For example, consider $y_i \sim N(\mu, \sigma^2)$ with $\theta = (\sigma, \mu / \sigma)$.
The sufficient statistic is $S_n = (\bar{y}_n, \sum_{i=1}^n y_i^2)$. The group action $g: \bY \rightarrow c \bY$ induces a group action $g: (\bar{y}_n, \sum_{i=1}^n y_i^2) \rightarrow (c\bar{y}_n, c^2\sum_{i=1}^n y_i^2)$. 

\begin{theorem}[C. Stein]
    \label{thm:stein}
Under the assumptions of \citet[Section ~II.3]{hall}, if $S_n$ is sufficient for $\theta$, and $m_n^s(S_n)$ is a maximal invariant under the induced group action on $S_n$, then $m_n^s(S_n)$ is sufficient for the invariant parameter $\xi$ and
\begin{equation}
    \label{eq:stein_mi_lr}
    E_n = \frac{q(m_n(\bY_n))}{p(m_n(\bY_n ))} = \frac{q(m^s_n(S_n))}{p(m^s_n(S_n))}.
\end{equation} 
\end{theorem}
A maximal invariant statistic $m_n^s(S_n)$ that is sufficient for $\xi$ is called \textit{invariantly sufficient}.
As the distribution of $m_n(\bY_n)$ depends only on $\xi$ under both the null and alternative, and $m_n^s(S_n)$ is sufficient for $\xi$, the identity in equation \eqref{eq:stein_mi_lr} follows from the Fisher-Neyman factorization theorem.
For example, the usual $t$-statistic $t_n(\bar{y}_n, \sum_{i=1}^n y_i^2) = \bar{y}_n / \sqrt{\sum_{i=1}^n y_i^2 - \bar{y}_n^2}$ is an invariantly sufficient statistic,
and the likelihood ratio of this is arguably easier to calculate than the likelihood ratio of the maximal invariant $m_n(\bY_n) = (y_1 / |y_1|, \dots, y_n / |y_1|)$ 

\section{Anytime-Valid Inference for Linear Models}
\label{sec:linear_model_results}
Consider a sequence of observations from the parametric (relaxed later) linear model 
\begin{equation}
\label{eq:base_model}
    y_i| \bbeta, \bdelta, \sigma^2 \overset{iid}\sim N(\bx_i'\bbeta + \bz_i'\bdelta,\sigma^2),
\end{equation}
with $\bbeta \in \mathbb{R}^{p}$ and $\bdelta \in \mathbb{R}^d$.
The parameterization in  \eqref{eq:base_model} splits the parameters into those of interest, $\bdelta$, and those that are nuisance, $\rho = (\bbeta,\sigma^2)$.
In matrix notation, $\bY_n|\bbeta, \bdelta, \sigma^2 \sim N(\bX_n\bbeta + \bZ_n\bdelta, \sigma^2 \bI_n)$, which we write concisely as $\bY_n| \bgamma, \sigma^2 \sim N(\bW_n \bgamma, \sigma^2 \bI_n)$ where $\bW_n = [\bX_n, \bZ_n]$ and $\bgamma' = [\bbeta', \bdelta']$.
Let $\mathcal{C}(\bA) = \{\sum_{i} c_i \bA_i : c_i \in \mathbb{R}, \bA_i = \bA \be_i\}$ denote the column space of a matrix $\bA$, $\mathcal{C}(\bA)^\perp$ the orthogonal complement of $\mathcal{C}(\bA)$, and $P_\bA = \bA (\bA ' \bA)^{-1}\bA'$ the orthogonal projection operator onto $\mathcal{C}(\bA)$.

Without loss of generality we take the null hypothesis $H_0: \bdelta = \bdelta_0 = \bzero$ as a test for nonzero $\bdelta_0$ can be obtained by applying the resulting test to $\bY_n - \bZ_n\bdelta_0$.
To apply invariance arguments we redefine the parameters as $(\bbeta, \sigma, \bxi)$, where $\bxi = \bdelta / \sigma$ are the standardized coefficients and $\rho = (\bbeta,\sigma)$ are the nuisance parameters, so that the null becomes $H_0: \bxi = \bxi_0 = 0$.
Notice that the transformation $g: \bY_n \rightarrow c\bY_n + \bX_\alpha$ leaves the model invariant by inducing a group on the parameters defined by $g: (\bbeta, \sigma, \bxi) \rightarrow (c\bbeta + \bX_n\balpha, c\sigma, \bxi)$
and that the standardized coefficients $\bxi$ are $G$-invariant. The set of summary statistics is $S(\bY_n) = (\hat{\bbeta}_n, \hat{\bdelta}_n, s^2_n) \in \mathcal{S} \coloneqq \mathbb{R}^p \times \mathbb{R}^d \times \mathbb{R}_{\geq 0}$.
The induced group action on $\mathcal{S}$ is $g: (\hat{\bbeta}_n, \hat{\bdelta}_n, s^2_n) \rightarrow (c\hat{\bbeta}_n + \balpha, c \hat{\bdelta}_n, c^2 s^2_n)$. 

Recall in a classical linear model theory the sampling distribution of the ordinary least squares estimator for $\bdelta$ is $\hat{\bdelta}_n | \sigma^2 \sim N(\bdelta, \sigma^2(\tilde{\bZ}_n'\tilde{\bZ}_n)^{-1})$, where $\tilde{\bZ}_n'\tilde{\bZ}_n = \bZ_n'(\pwn - \pxn)\bZ_n=\bZ_n'(\bI_n - \pxn)\bZ_n$.
Let $s^2_n(\bY_n) = \bY_n'(\bI_n-\pwn)\bY_n / \nu_n$ where $\nu_n = n-p-d$ denote the linear model estimator of $\sigma^2$.
We can estimate the parameter $\bxi$ via $\hat{\bxi}_n = \hat{\bdelta}_n / \sqrt{s^2_n(\bY_n)}$ and form the $F$-statistic
\begin{equation}
    F_n = \frac{\hat{\bxi}_n \tilde{\bZ}_n'\tilde{\bZ}_n \hat{\bxi}_n}{d}= \frac{\bY_n(\pwn - \pxn) \bY_n / d}{ \bY_n (\bI_n - \pwn) \bY_n / (n-p-d) },
\end{equation}
which has a noncentral $F$-distribution $F_n | \bxi \sim F(d, \nu_n, \bxi' \tilde{\bZ}_n'\tilde{\bZ}_n\bxi)$ (see Appendix \ref{app:classicalftest} for a review).
Hence, we can test the null hypothesis $\bxi = \bzero$ by comparing $F_n$ to an $F(d,\nu_n)$ distribution.

We now turn our attention to the anytime-valid analogue of this procedure.
A null hypothesis on the vector $H_0: \bdelta = \bdelta_0$ implies a collection of null probability measures $\mathcal{H}_0 = \mathcal{N}_{\bdelta_0}$.
It is pedagogically easier to first consider Bayesian alternatives with $\bdelta \sim N(\bzero, \sigma^2 \bPhi^{-1})$ and then frequentist point alternatives in Section \ref{sec:simple_alternative}.
Let $p(\bY_n | \bbeta, \sigma^2)$ and $q(\bY_n | \bbeta, \sigma^2)$ denote $N(\bX_n \bbeta, \sigma^2 \bI_n)$ and Bayes marginal $N(\bX_n\bbeta, \sigma^2 (\bI_n+\bZ_n\bPhi^{-1}\bZ_n'))$ densities, respectively.
Let $t_{\nu_n}(\cdot | \bzero, \bSigma)$ denote a multivariate $t$-density with $\nu_n$ degrees of freedom and mean $\bzero$ and covariance $\bSigma$.
\begin{theorem}
    \label{thm:bayesian_lm_e_variable}
    Under the linear model in \eqref{eq:base_model} with right-Haar prior $\pi(\bbeta,\sigma^2) \propto 1/\sigma^2$, the process
    \begin{equation}
        \label{eq:main_e_variable}
            \begin{split}
                E_n \coloneqq& \frac{\int q(\bY_n| \bbeta, \sigma^2) \pi(\bbeta, \sigma^2)  d\bbeta d\sigma^2}{\int p(\bY_n| \bbeta, \sigma^2) \pi(\bbeta, \sigma^2) d\bbeta d\sigma^2}\\
                =&\sqrt{\frac{\det(\bPhi)}{\det(\bPhi + \tilde{\bZ}_n'\tilde{\bZ}_n)}} \frac{\left(1+\frac{\hat{\bxi}_n'(\tilde{\bZ}_n'\tilde{\bZ}_n - \tilde{\bZ}_n'\tilde{\bZ}_n(\bPhi + \tilde{\bZ}_n'\tilde{\bZ}_n)^{-1}\tilde{\bZ}_n'\tilde{\bZ}_n)\hat{\bxi}_n}{\nu_n}\right)^{-\frac{\nu_n + d}{2}}}{\left(1+\frac{\hat{\bxi}_n'\tilde{\bZ}_n'\tilde{\bZ}_n\hat{\bxi}_n}{\nu_n}\right)^{-\frac{\nu_n + d}{2}}},
            \end{split}
        \end{equation}
        is a test martingale and also the GROW/REGROW $e$-variable for $H_0: \bdelta = \bzero$ against $H_1: \bdelta \sim N(\bzero, \sigma^2\bPhi^{-1})$. The statistic $\hat{\bxi}_n =  \hat{\bdelta}_n/ \sqrt{s^2_n}$ is invariantly sufficient and $E_n$ can be written
        \begin{equation}
            \label{eq:e_lr}
            E_n = \frac{t_{\nu_n}(\hat{\bxi}_n| \bzero, \bPhi^{-1} + (\tilde{\bZ}_n'\tilde{\bZ}_n)^{-1})}{t_{\nu_n}(\hat{\bxi}_n| \bzero, (\tilde{\bZ}_n'\tilde{\bZ}_n)^{-1})}.
        \end{equation}
\end{theorem}
The proof is broken down into a sequence of steps in Appendix \ref{sec:direct_calculation}. We derive expression \ref{eq:main_e_variable} by computing the integrals in the Wijsman representation (Lemma~\ref{lem:wijsman}).
Inspecting the statistic through which the outcomes $\bY_n$ enter the expression, we then prove by counterexample that $\hat{\bxi}_n$ is not a maximal invariant.
Instead, we prove $\hat{\bxi}_n$ is an invariantly sufficient statistic, which allows $E_n$ to be written as the likelihood ratio in Equation~\eqref{eq:e_lr} by Theorem~\ref{thm:stein}.
The GROW and REGROW optimality properties of $E_n$ follow from \citet{muriel}, which is discussed further in Section~\ref{sec:muriel}.
We now show that test martingale $E_n$ has power 1, under the definition in Section~\ref{sec:safe_testing}, when $\bxi \neq \bzero$.

\begin{theorem}
    \label{thm:almost_sure_limit}
    Suppose $\frac{1}{n}\bW_n'\bW_n$ converges almost surely to a positive definite matrix $\bOmega_{\bW}$, then the sequential test is consistent with asymptotic growth rate
    \begin{equation}
        \frac{\log E_n}{n} \overset{a.s.}{\rightarrow} \frac{1}{2} \log(1 + \bxi'\bOmega_{\tilde{\bZ}}\bxi),
    \end{equation}
    where $\bOmega_{\tilde{\bZ}}$ is the almost sure limit of $\frac{1}{n}\tilde{\bZ}_n'\tilde{\bZ}_n$.
\end{theorem}
The strong convergence of $\frac{1}{n}\bW_n'\bW_n$ to $\bOmega_{\bW}$ is automatically satisfied if the covariates are i.i.d.
Theorem \ref{thm:almost_sure_limit} implies that $E_n$ diverges almost surely under the alternative $\bxi \neq \bzero$, and thus $E_n$ crosses the rejection boundary of $\alpha^{-1}$ with probability 1.
Informally, this means the sequential test is guaranteed to eventually reject the null hypothesis in finite time, provided it is not stopped prematurely.

Recall that to test a nonzero point null $H_0: \bdelta_0 \neq \bzero$ we may simply replace $\bY_n$ with $\bY_n - \bZ_n\bdelta_0$, in which case $s^2_n(\bY_n - \bZ_n\bdelta_0) = s^2_n(\bY_n)$ and $\hat{\bxi}_n(\bY_n - \bZ_n\bdelta_0) = (\hat{\bdelta}_n - \bdelta_0) / s^2_n$.

\begin{corollary}
\label{cor:nonasymp_cs_F}
    Define $E_n^{\bdelta_0} \coloneqq E_n^{\bdelta_0}(\bY_n) = E_n(\bY_n - \bZ_n \bdelta_0)$, then $E_n^{\bdelta_0}$ is a test martingale for testing $H_0: \bdelta=\bdelta_0$.
    Define $p^{\bdelta_0}_n = 1/E^{\bdelta_0}_n$, then $p_n$ is a sequential $p$-value satisfying $\mathbb{P}_P[\exists n \in \mathbb{N}: p^{\bdelta_0}_n \leq \alpha^{-1}] \leq \alpha$ for all $P \in \mathcal{H}_0$. Define $C_n^\alpha =  \lbrace \bdelta \in \mathbb{R}^d : E^{\bdelta}_n \leq \alpha^{-1}\rbrace = \lbrace \bdelta \in \mathbb{R}^d : p^{\bdelta}_n > \alpha\rbrace$. 
    Then
    \begin{equation*}
    \label{eq:confidence_set}
                C_n^\alpha =\left\{ \bdelta \in \mathbb{R}^d : \|\bdelta - \hat{\bdelta}_n(\bY_n)\|^2_{\bA_n}\leq \nu_n s^2_n(\bY_n)\left(1-\left(\frac{\alpha^2 \det (\bPhi)}{\det(\bPhi + \tilde{\bZ}_n'\tilde{\bZ}_n)} \right)^{\frac{1}{\nu_n + d}}\right)\right\},\\
    \end{equation*}
    where
    \begin{equation*}
        \bA_n = \left(\left(\frac{\alpha^2 \det (\bPhi)}{\det(\bPhi + \tilde{\bZ}_n'\tilde{\bZ}_n)} \right)^{\frac{1}{\nu_n + d}}-1\right)\tilde{\bZ}_n'\tilde{\bZ}_n +\tilde{\bZ}_n'\tilde{\bZ}_n(\bPhi + \tilde{\bZ}_n'\tilde{\bZ}_n)^{-1}\tilde{\bZ}_n'\tilde{\bZ}_n,
    \end{equation*}
    is a confidence sequence satisfying $\mathbb{P}_P[\bdelta \in C_n^\alpha \text{ for all } n\in\mathbb{N}]\geq 1-\alpha$ for all $P$.
\end{corollary}

Lastly we define tests for linear hypotheses such as contrasts. 
\begin{corollary}
\label{cor:contrasts}
Under the linear model in \eqref{eq:base_model} and a full row rank matrix $\bE \in \mathbb{R}^{k \times d}$, then the process
\begin{equation}
    E_n=\sqrt{\frac{\det(\bPhi)}{\det(\bPhi + \tilde{\bZ}_n^{\dagger '}\tilde{\bZ}_n^\dagger)}} \frac{\left(1+\frac{\hat{\bxi}_n^{\dagger '}(\tilde{\bZ}_n^{\dagger '}\tilde{\bZ}_n^\dagger - \tilde{\bZ}_n^{\dagger '}\tilde{\bZ}_n^\dagger(\bPhi + \tilde{\bZ}_n^{\dagger '}\tilde{\bZ}_n^\dagger)^{-1}\tilde{\bZ}_n^{\dagger '}\tilde{\bZ}_n^\dagger)\hat{\bxi}_n}{\nu_n}\right)^{-\frac{\nu_n + k}{2}}}{\left(1+\frac{\hat{\bxi}_n^{\dagger '}\tilde{\bZ}_n^{\dagger '}\tilde{\bZ}_n^\dagger\hat{\bxi}_n^\dagger}{\nu_n}\right)^{-\frac{\nu_n + k}{2}}},
\end{equation}
is a test martingale for testing $H_0: \bE\bdelta = \bzero$, where $\tilde{\bZ}_n^\dagger = \tilde{\bZ}_n\bE' = (\bI-\pxn)\bZ_n\bE'$, $\hat{\bxi}_n^\dagger =\hat{\bdelta}_n^\dagger / \sqrt{s^2_n}$, $\hat{\bdelta}_n^\dagger = (\tilde{\bZ}_n^{\dagger '}\tilde{\bZ}_n^{\dagger})^{-1} \tilde{\bZ}_n^{\dagger ' }\bY_n$ .
\end{corollary}
Classically we would use the $F$-statistic $F_n = \hat{\bdelta}_n'\bE'(\bE(\tilde{\bZ}_n'\tilde{\bZ}_n)^{-1}\bE')^{-1}\bE\hat{\bdelta}_n  / (ks^2_n) \sim F(k, \nu_n)$ to test $H_0: \bE\bdelta = \bzero$.
In the appendix we show that the term in the denominator $\hat{\bxi}_n^{\dagger '}\tilde{\bZ}_n^{\dagger '}\tilde{\bZ}_n^\dagger\hat{\bxi}_n^\dagger$ is simply equal to  $k F_n$. As before, a test of $\bE\bdelta = \be_0$ can be obtained by replacing $\bY_n$ with $\bY_n - \bZ_n\bE'\be_0$.
\subsection{Simple Alternatives}
\label{sec:simple_alternative}
In the development so far, we have considered composite alternatives and created a test martingale using the method of mixtures.
Alternatively, given our invariantly sufficient statistic $\hat{\bxi}_n$, we can also consider simple alternatives.
Observe $\hat{\delta}_n / \sigma - \delta / \sigma \sim N(\bzero, (\tilde{\bZ}_n'\tilde{\bZ}_n)^{-1})$,
while $s^2_n / \sigma^2 \sim \chi^2_{\nu_n}$,
and therefore
\begin{equation*}
    \hat{\bxi}_n = \frac{\bdelta / \sigma + \hat{\bdelta}_n / \sigma - \bdelta / \sigma}{\sqrt{s^2_n / \sigma^2}} \sim t^{nc}_{\nu_n}(\bxi_, (\tilde{\bZ}_n'\tilde{\bZ}_n)^{-1}),
\end{equation*}
a $d$-dimensional noncentral multivariate $t$-distribution with $\nu_n$. 
The $e$-variable
\begin{equation}
    E_n = \frac{t^{nc}_{\nu_n}(\hat{\bxi}_n | \bxi_1, (\tilde{\bZ}_n'\tilde{\bZ}_n)^{-1})}{t^{nc}_{\nu_n}(\hat{\bxi}_n | \bxi_0, (\tilde{\bZ}_n'\tilde{\bZ}_n)^{-1})},
\end{equation}
is therefore a test martingale for $H_0: \bxi = \bxi_0$ and is GROW with respect to $H_1: \bxi = \bxi_1$.

\subsection{Comparison to \citet{muriel}}
\label{sec:muriel}
\citet{muriel} consider the problem of testing a \textit{univariate} simple null hypothesis $\xi = \xi_0$ vs a simple alternative $\xi = \xi_1$.
Their construction of an $e$-process is based on a likelihood ratio of a maximal invariant statistic, following the Ansatz approach described in Section~\ref{sec:group_invariance}.
Specifically, they propose a test statistic $\bm_n\coloneqq \bm_n(\bY_n) = \bA_n \bY_n / \|\bA_n \bY_n\|_2 \in \mathbb{R}^{n-p}$ where $\bA_n$ is an $(n - p) \times n$ matrix with rows forming an orthonormal basis $C(\bX_n)^\perp$. 
This can be achieved with any matrix $\bA_n$ satisfying $\bA_n'\bA_n = \bI_n - \pxn$. 
The statistic $\bm_n$ is actually a unit vector in $\mathbb{R}^{n-p}$ belonging to the $(n-p)$-dimensional unit-sphere and is a maximal invariant statistic.
Under the null hypothesis $\bm_n$ is uniformly distributed over the unit sphere.
Under the alternative hypothesis $\bm_n$ is concentrated around a unit vector as shown in Figure~\ref{fig:U_scatter}. 
Importantly, this distribution is not the von Mises–Fisher distribution; instead, it follows a nonstandard form attributed to \citet{Bhowmik2007} who, citing \citet{mathematica}, give the density

\begin{equation}
    \label{eq:muriel_density}
            p(\bm_n \mid \xi) = \frac{1}{2} \Gamma\left(\frac{k}{2}\right) \pi^{-\frac{k}{2}} e^{c(\xi)}\left\{A_k(a(\bm_n,\xi)) + B_k(a(\bm_n,\xi)) \right\}
\end{equation}
where
\begin{equation}
\begin{split}
            A_k(a(\bm_n,\xi)) &= {}_1F_1\left(\frac{k}{2}, \frac{1}{2}, \frac{a^2(\bm_n, \xi)}{2}\right),\\
            B_k(a(\bm_n,\xi)) &= \sqrt{2} a(\bm_n, \xi) \frac{\Gamma\left(\frac{1 + k}{2}\right)}{\Gamma\left(\frac{k}{2}\right)} 
    {}_1F_1\left(\frac{1 + k}{2}, \frac{3}{2}, \frac{a^2(\bm_n, \xi)}{2}\right),\\
    \end{split}
\end{equation}
$k = n-p$, $a(\bm_n, \xi) = \xi \bZ_n'\bA_n'\bm_n$, $c(\xi) = -\frac{1}{2}\xi^2 \bZ_n' \bA_n' \bA_n \bZ_n$ and ${}_1F_1$ is the confluent hypergeometric function. 
As noted by \citet[Chapter 2.3.3]{tyronmaster}, this density is cumbersome to work with.

In contrast, our construction uses the \textit{invariantly sufficient} statistic $\hat{\bxi}_n$.
By theorem \ref{thm:stein}, the likelihood ratio of the maximal invariant statistic $\bm_n$ must coincide with the likelihood ratio of our invariantly sufficient statistic $\hat{\xi}_n$. 
Specifically,
\begin{equation}
    E_n = \frac{p(\bm_n|\xi_1)}{p(\bm_n|\xi_0)}= \frac{p(\hat{\xi}_n|\xi_1)}{p(\hat{\xi}_n|\xi_0)}  = \frac{t^{nc}_{\nu_n}(\hat{\xi}_n | \xi_1, \|\tilde{\bZ}_n\|_2^{-2})}{t^{nc}_{\nu_n}(\hat{\xi}_n | \xi_0,\|\tilde{\bZ}_n\|_2^{-2})},
\end{equation}
In summary, we view our contribution as complementary to \citet{muriel}, with several key distinctions.
First, whereas \citet{muriel} focus on a univariate test of a single coefficient, our approach accommodates multivariate tests of multiple coefficients. 
Second, their method targets a fixed alternative $\xi = \xi_1$, while we consider composite alternatives $\bxi \neq \bxi_0$. 
Third, our approach yields confidence sequences that shrink as the sample size $n$ increases, a property not shared by inverting tests against point alternatives.
Fourthly, we avoid complications of working with nonstandard densities, as noncentral and location-scale $t$ densities come standard in most statistical software.
This makes our method more accessible and easier to implement in practice.
Lastly, we avoid additional complications of working with a nonstandard maximal invariant statistic $\bm_n$.
Instead, our method only requires the OLS estimate $\hat{\delta}_n$ and residual variance estimate $\sqrt{s^2_n}$ (or standard error) to construct our invariantly sufficient statistic $\hat{\bxi}_n$, which are typically provided in published works,
and provided as outputs in statistical software\footnote{Using standard notation for the univariate case, the density $t^{nc}_{\nu_n}(\hat{\xi}_n | \xi_1, \|\tilde{\bZ}_n\|_2^{-2})$ would be evaluated using pseudocode nct\_pdf($\hat{\xi}_n$, df = $\nu_n$, noncentrality = $\|\tilde{\bZ}_n\|_2 \xi_1$, scale = $1/\|\tilde{\bZ}_n\|_2$)}.

\subsection{Tuning the Mixture Hyperparameter $\Phi$}
\label{sec:hyperparameter_choice}
The $e$-process in equation \eqref{eq:main_e_variable} is a Bayes factor constructed using a $\bdelta \sim N(\bzero, \sigma^2\bPhi^{-1})$ ($\Rightarrow \bxi \sim N(\bzero, \bPhi^{-1})$) prior over alternatives, and is growth rate optimal against such alternatives.
In large internet companies with large scale experimentation efforts, one can choose $\bPhi$ so that the prior resemble the observed historical distribution of effects.

Alternatively, we can ask what the optimal choice of $\bPhi$ should be to maximize growth against frequentist alternatives for a specific $\bxi^\star$.
We present this result for the univariate $d=1$ case.
\begin{theorem}
    \label{thm:optimal_phi_frequentist}
    Let $\Phi^{\star}_n$ be the value of $\Phi$ that maximizes $\mathbb{E}[\log E_n]$, where the expectation is with respect to the frequentist alternative $\hat{\xi}_n \sim t_{\nu_n}^{nc}(\xi^\star, 1 / \|\tilde{\bZ_n}\|_2^2)$.
    Under the assumption $\|\tilde{\bZ}_n\|_2^2 = \Omega(n)$, then $\underset{n\rightarrow \infty}{\lim} \Phi^\star_n = (\xi^{\star})^{-2}$.
\end{theorem}
Theorem \ref{thm:optimal_phi_frequentist} shows that the optimal choice of $\Phi$, in the sense of maximizing the value of $\mathbb{E}[\log E_n]$ for large $n$, under a frequentist alternative, is given by matching the standard deviation of the prior equal to the true standardized effect $\Phi^{-1/2} = \xi^\star$.
Naturally, $\xi^\star$ is not known, so one can consider a minimum detectable effect (MDE) $\xi_{MDE}$. Setting $\Phi = \xi_{MDE}^{-2}$ therefore maximizes the worst case growth over $\xi \in [\xi_{MDE}, \infty)$.
In contrast, the Bayesian approach maximizes a weighted average growth rate, where the weighted average is determined by $N(\bzero, \bPhi^{-1})$, chosen the practitioner to upweight performance toward more plausible alternatives at the expense of downweighting performance away from less plausible alternatives.
The approach using an MDE, however, does not generalize easily to multivariate $\bdelta$. In the next section we discuss choosing $\bPhi$ in the context when little prior information is available.

\subsubsection{Automatic Defaults with $g$-Priors}
\label{sec:zellner_g_prior}
Suppose $\frac{1}{n}\tilde{\bZ}_n'\tilde{\bZ}_n$ converges almost surely to a known positive definite matrix $\bOmega_{\tilde{\bZ}}$.
For example if $\bZ_n$ is a column vector with $z_i = T_i-\rho$ where $T_i \overset{iid}{\sim} \text{Bernoulli}(\rho)$ is a treatment indicator, then $\bOmega_{\tilde{\bZ}} = \rho(1-\rho)$.
A valid prior would be the choice $\bPhi = g\bOmega_{\tilde{\bZ}}$ for some $g$.
In this case the information about $\bdelta$ in the likelihood and prior is approximately $n\bOmega_{\tilde{\bZ}}$ and $g\bOmega_{\tilde{\bZ}}$ respectively,
and so for $n > g$ we argue that the behavior of $E_n$ is driven predominantly by the data while the mixture has diminishing influence.
In particular $g=1$ resembles a unit information prior.

When $\bOmega_{\tilde{\bZ}}$ is not known, a popular choice of objective/default prior in the Bayesian literature for a fixed-$n$ analysis is Zellner's $g$-prior \citep{zellner}. 
The version that best suits our purposes here is $\bdelta | \sigma^2 \sim N(0, \sigma^2  \left(\frac{g}{n}\tilde{\bZ}_n'\tilde{\bZ}_n\right)^{-1})$.
Subjective Bayesians may object to this prior as it depends on the data through $\tilde{\bZ}_n$. Moreover, how can this prior be used in a sequential context, when it depends on $\tilde{\bZ}_n$?
We argue that Zellner's $g$-prior is an approximation to the fully anytime-valid procedure described earlier using $\bPhi = g\bOmega_{\tilde{\bZ}}$, except $\frac{1}{n}\tilde{\bZ}_n'\tilde{\bZ}_n$
is used as a plugin estimator for the unknown $\bOmega_{\tilde{\bZ}}$. This approximation yields an \textit{asymptotic} $e$-process that converges rapidly to the a valid $e$-process.

\begin{theorem}
    \label{thm:zellner_e_variable}
    Let $E_n$ be defined as in Theorem \ref{thm:bayesian_lm_e_variable} with $\bPhi = g\bOmega_{\tilde{\bZ}}$. Define for any $g\in\mathbb{N}$
    \begin{equation}
        \label{eq:zellner_e_process}
        G_n = \left(\frac{g}{g+n}\right)^{\frac{d}{2}}\left(\frac{1 + \frac{g}{g+n} \frac{d}{\nu_n} F_n}{1+\frac{d}{\nu_n} F_n}\right)^{-\frac{\nu_n + d}{2}}.
    \end{equation}
    Under the assumption $\frac{1}{n}\tilde{\bZ}_n'\tilde{\bZ}_n \overset{a.s.}{\rightarrow} \bOmega_{\tilde{\bZ}}$, then the approximation error is  
    $$|\log E_n - \log G_n| = \frac{1}{2}\Tr(\bOmega_{\tilde{\bZ}}^{-1}\bR_n) - \frac{g}{2}\bxi_n'\bR_n\bxi_n + O_{a.s.}\left(\frac{\bR_n}{n}\right),$$
    where $\bR_n = \bOmega_{\tilde{\bZ}} - \frac{1}{n}\tilde{\bZ}_n'\tilde{\bZ}_n.$
\end{theorem}
The appeal of $G_n$ is threefold.
First, the expression for $G_n$ is elegantly simple compared to $E_n$ in equation \eqref{eq:main_e_variable}.
It only depends on the $f$-statistic $F_n$ and a real-valued tuning parameter $g > 0$, and can therefore be considered a simple rule for converting a classical $f$-test to an anytime-valid test.

Second, the tuning parameter $g$ can be chosen to minimize the confidence region at a particular sample size $n$, independent of the covariates. The confidence set $C_n^\alpha$ from corollary \ref{cor:nonasymp_cs_F} with $\bPhi = \frac{g}{n}\tilde{\bZ}_n'\tilde{\bZ}_n$ reduces to
\begin{equation}
    \label{eq:F_confidence_sequence}
    \begin{split}
        C_n^\alpha &= \{\bdelta \in \mathbb{R}^d : F_n^\bdelta \leq R(g,n,\alpha)\},\\
        \text{where } R(g,n,\alpha) &= \frac{\nu_n}{d} \frac{1 - \left(\frac{\alpha^2 g^d}{(n+g)^d}\right)^{\frac{1}{\nu_n + d}}}{\max\left(0, \left(\frac{\alpha^2 g^d}{(n+g)^d}\right)^{\frac{1}{\nu_n + d}} - \frac{g}{n+g}\right)},
    \end{split}
\end{equation}
and $F_n^\bdelta = (\bdelta - \hat{\bdelta}_n)'\tilde{\bZ}_n'\tilde{\bZ}_n(\bdelta - \hat{\bdelta}_n) / (d s_n^2)$ is the $f$-statistic for testing the null $\bdelta$.
For a given $n$ and $\alpha$, the value of $g$ that provides the smallest confidence set can be determined by numerically minimizing the bound on the $f$-statistic $R(g,n,\alpha)$.

Thirdly, the choice $\bPhi = \frac{g}{n}(\tilde{\bZ}_n'\tilde{\bZ}_n)^{-1}$ possesses the following \textit{marginalization-consistency} property.
Consider two regressions, dropping the subscript $n$ for clarity
\begin{equation}
    \label{eq:subregression}
    \begin{split}
        \bY &= \bX\bbeta + \bZ\bdelta + \bvarepsilon,\\
        \bY &= \bX^\star\bbeta^\star + \bZ_1\bdelta_1 + \bvarepsilon',
    \end{split}
\end{equation}
where $\bZ = [\bZ_1, \bZ_2]$, $\bdelta' = (\bdelta_1', \bdelta_2')$ and where the second regression has absorbed $\bdelta_2$ via $\bX^\star = [\bX, \bZ_2]$ and $\bbeta^\star = [\bbeta', \bdelta_2']$.
A default prior procedure is marginalization consistent if the prior on the subvector $\bdelta_1$, obtained as the marginal of the default prior on the encompassing vector $\bdelta$, is exactly the same as the prior one would obtain by applying the procedure to $\bdelta_1$ directly.
In the following proposition we show that the $g$-prior is marginalization consistent.
\begin{proposition}
    \label{prop:marginalization_consistency}
    Consider the regressions in equation \eqref{eq:subregression} and let $\tilde{\bZ} = (\bI - \bP_{\bX})\bZ$ and $\tilde{\bZ}_1^\star = (\bI - \bP_{\bX^\star})\bZ$. Then $((\tilde{\bZ}'\tilde{\bZ})^{-1})_{11} = (\tilde{\bZ}_1^{\star ' }\tilde{\bZ}_1^\star)^{-1}$
\end{proposition}
Thus for $\bPhi = \frac{g}{n}(\tilde{\bZ}_n'\tilde{\bZ}_n)^{-1}$, the prior on any subset of coefficients is identical whether formed from direct construction or marginalizing an encompassing prior, which is crucial for testing nested models.

While $G_n$ only approximates $E_n$, we argue that the approximation error diminishes rapidly by theorem \ref{thm:zellner_e_variable} and that in practical applications the linear model itself is likely the dominant source of error, as rarely do we expect the linear model to hold exactly in the first place.
In the next section, we relax the assumption of Gaussian homoskedastic residuals, followed by an additional relaxation of the linearity assumption altogether in the context of randomized experiments.

\section{Heteroskedasticity Robust Asymptotic $e$-Processes}
\label{sec:heteroskedastic}
In this section we relax the assumption of Gaussian homoskedastic residuals. Specifically, consider the linear model
 $y_i = \bw_i'\bgamma + \varepsilon_i$ under the weaker assumption that $(\bw_i, \varepsilon_i)$ are i.i.d.\ with $\mathbb{E}[\varepsilon_i \bw_i]=0$ and $\mathbb{E}[\varepsilon_i \bw_i \bw_i'] = \bSigma$. 
This accommodates covariate level heteroskedasticity via $\mathbb{E}[\varepsilon_i^2  |\bw_i] = \sigma_i^2 = \sigma^2(\bw_i)$.
The OLS estimator $\hat{\bgamma}_n$ is a strongly consistent estimator of $\bgamma$ provided $\mathbb{E}[\varepsilon_i \bw_i]=0$ and $(1/n)\bW_n'\bW_n \overset{a.s.}{\rightarrow} \bOmega_{\bW}$, 

An asymptotic fixed-$n$ analysis proceeds by working with the asymptotic distribution of the OLS estimator, written
$\sqrt{n}(\hat{\bgamma}_n - \bgamma) = \left(\frac{1}{n}\sum_{i=1}^n \bw_i \bw_i'\right)^{-1}\left(\frac{1}{\sqrt{n}}\sum_{i=1}^n\bw_i\varepsilon_i\right)$. 
We sketch the argument here to parallel the anytime-valid construction.
Let $\bOmega_{\bW} = \mathbb{E}[\bw_i \bw_i']$, positive definite, then by the weak law $\hat{\bOmega}_{\bW_n} \coloneqq \frac{1}{n}\bW_n'\bW_n \overset{p}{\rightarrow} \bOmega_{\bW}$ and continuity of the inverse $\hat{\bOmega}_n^{-1}  \overset{p}{\rightarrow}\bOmega_{\bW}^{-1}$.
By the the central limit theorem $\frac{1}{\sqrt{n}}\sum_{i=1}^n \bw_i \varepsilon_i \overset{d}{\rightarrow} N(\bzero, \bSigma)$, 
and by Slutsky's theorem, $\sqrt{n}(\hat{\bgamma}_n - \bgamma) \overset{d}{\rightarrow} N(\bzero, \bOmega_{\bW}^{-1}\bSigma \bOmega_{\bW}^{-1})$.
Let $k = p + d$ and $\bE \in \mathbb{R}^{r\times k}$ full row rank matrix of linear hypotheses, then $\sqrt{n}\bE(\hat{\bgamma}_n - \bgamma) \overset{d}{\rightarrow} N(\bzero, \bSigma_{\bE})$ and $n (\hat{\gamma}_n - \bgamma)'\bE ' \bSigma_{\bE}^{-1} \bE (\hat{\bgamma}_n - \bgamma) \overset{d}{\rightarrow} \chi^2_r$ where $\bSigma_{\bE} = \bE\bOmega_{\bW}^{-1}\bSigma \bOmega_{\bW}^{-1}\bE'$.
Finally, $\hat{\bSigma}_n\coloneqq \frac{1}{n}\sum_{i=1}^n(y_i - \bw_i'\hat{\bgamma}_n)^2 \bw_i \bw_i'$ is a consistent estimator for $\bSigma$, and 
by Slutsky $n(\hat{\bgamma}_n - \bgamma)'\bE'\hat{\bSigma}_{\bE}^{-1}\bE(\hat{\bgamma}_n - \bgamma) \overset{d}{\rightarrow} \chi^2_r$, where $\hat{\bSigma}_{\bE}=\bE\hat{\bOmega}_{\bW}^{-1}\hat{\bSigma}_n\hat{\bOmega}_{\bW}^{-1}\bE'$.

To develop an asymptotic $e$-process, we follow the steps introduced by \citet{asymptoticcs}.
First, we construct an anytime-valid analogue of the fixed-$n$ $\chi^2$-test for Gaussian vectors.
\begin{lemma}
    \label{lem:multivariate_e_process}
Let $(\bG_i)_{i=1}^n$ be a sequence of i.i.d.\ $N(\bmu, \bSigma)$ random vectors in $\mathbb{R}^k$, then for any $g > 0$
\begin{equation}
    E_n \coloneqq \left(\frac{g}{g + n}\right)^{\frac{k}{2}}e^{\frac{1}{2}\frac{n}{g+n}Q_n},
\end{equation}
where $Q_n = n(\bar{\bG}_n-\bmu) \bSigma^{-1}(\bar{\bG}_n-\bmu) \sim \chi^2_k$ and $\bar{\bG} = \frac{1}{n}\sum_{i=1}^n \bG_i$, is a nonnegative supermartingale. 
\end{lemma}

Second, analogous to the asymptotic distribution of $\frac{1}{\sqrt{n}}\sum_{i=1}^n\bw_i \varepsilon_i$ via the central limit theorem, we leverage a strong approximation by \citet{einmahl1989} to approximate the sample path of $\sum_{i=1}^n\bw_i \varepsilon_i$ by
a sum of Gaussian vectors.
\begin{proposition}
    \label{prop:w_ie_i_strong_approximation}
Under the assumptions $\mathbb{E}[\varepsilon_i \bw_i] = 0$, $\mathrm{Cov}(\varepsilon_i\bw_i) = \bSigma \in \mathbb{R}^{k\times k}$, and
    $\mathbb{E}\bigl[\|\varepsilon_i\bw_i\|^s\bigr] \leq \infty$ for some $s > 3$.
    Then on a sufficiently rich probability space there is a sequence $(\bG_i)_{i=1}^\infty$ of i.i.d.\ Gaussian random vectors $\bG_i\sim N(\bzero,\bSigma)$, such that
    \begin{equation}
        \bigl\|\,\sum_{i=1}^n \varepsilon_i \bw_i - \sum_{i=1}^n \bG_i\,\bigr\|=O_{a.s.}\!\bigl(n^{\alpha}\bigr)
    \end{equation}
    for some $0<\alpha<\tfrac12$.
\end{proposition}

Last, we need to construct a consistent estimator of the nuisance parameter $\bSigma$.
\begin{proposition}
    \label{prop:strong_consistency}
    Let
            $\bSigma = \mathbb{E}[\varepsilon_i^2 \bw_i \bw_i']$, $\hat{\varepsilon}_{in} = y_i - \bw_i'\hat{\bgamma}_n$ and 
            $\hat{\bSigma}_n = \frac{1}{n-p-d}\sum_{i=1}^n \hat{\varepsilon}_{in}^2 \bw_i \bw_i'$.
Under the assumptions $\hat{\bOmega}_{\bW_n}\overset{a.s.}{\rightarrow} \bOmega_{\bW}$ and $\mathbb{E}[\bw_i\varepsilon_i] = 0$, then $\hat{\bgamma}_n \overset{a.s.}\rightarrow \bgamma$. Under additional assumptions $\mathbb{E}[\|\bw_i\|^4] < \infty$, $\mathbb{E}[|\varepsilon_i|\|\bw_i\|^3]<\infty$,  then $\hat{\bSigma}_n \overset{a.s.}{\rightarrow} \bSigma$ and $\hat{\bSigma}_{\bE}=\bE\hat{\bOmega}_{\bW}^{-1}\hat{\bSigma}_n\hat{\bOmega}_{\bW}^{-1}\bE'$ is a strongly consistent estimator of $\bSigma_{\bE}$.
\end{proposition}
With full details in the appendix, in addition to some required supplementary results, these three key results yield the following heteroskedasticity robust $e$-process.
\begin{theorem}
    \label{thm:robust_e_process}
    Let $\{(y_i,\bw_i)\}_{i=1}^\infty$ be a sequence of i.i.d.\ outcomes where $y_i = \bw_i'\bgamma + \varepsilon_i$ such that $\mathbb{E}[\bw_i\varepsilon_i] = 0$, $\mathbb{E}[\varepsilon_i^2 \bw_i \bw_i'] = \bSigma$, $\mathbb{E}[\bw_i \bw_i'] = \bOmega_{\bW}$, $\mathbb{E}[\|\bw_i\varepsilon_i\|^s]< \infty$ for some $s > 3$, $\mathbb{E}[\|\bw_i\|^4]< \infty$, $\mathbb{E}[|\varepsilon_i|\|\bw_i\|^3]<\infty$, and $\mathbb{E}[\bw_i y_i] < \infty$.
    For any full row rank matrix $\bE \in \mathbb{R}^{r \times k}$ and $g > 0 $ define
    \begin{equation}
        \label{eq:robust_e_variable}
        E_n := \left(\frac{g}{g+n}\right)^{\frac{r}{2}}\left(\frac{1 + \frac{g}{g+n} \frac{r}{\nu_n} \frac{Q_n}{r}}{1+\frac{r}{\nu_n} \frac{Q_n}{r}}\right)^{-\frac{\nu_n + r}{2}}.
    \end{equation}
    where $Q_n = n(\hat{\bgamma}_n )'\bE'\hat{\bSigma}_{\bE}^{-1}\bE (\hat{\bgamma}_n - \bgamma)$, $\hat{\bSigma}_{\bE}=(\bE\hat{\bOmega}_{\bW_n}^{-1}\hat{\bSigma}_n \hat{\bOmega}_{\bW_n}^{-1}\bE')$, $\hat{\bSigma}_n = \frac{1}{n-p-d}\sum_{i=1}^n(y_i - \bw_i'\hat{\bgamma}_n)^2 \bw_i \bw_i'$ and $\hat{\bOmega}_{\bW_n} = \frac{1}{n}\bW_n'\bW_n$. The process $(E_n)_{n=1}^\infty$ is an asymptotic $e$-process.
\end{theorem}
Note this gracefully recovers the previous results in the homoskedastic case. If we use $s^2_n$ to estimate $\sigma^2$, and instead use $\hat{\bSigma}_n = n s^2_n \hat{\bOmega}_{\bW_n}$, then $Q_n / r$ reduces to the $f$-statistic $F_n = (\hat{\bgamma}_n - \bgamma)\bE'(\bE (\bW_n'\bW_n)^{-1}\bE')^{-1}\bE(\hat{\bgamma}_n - \bgamma) / (r s^2_n)$.
For example, if  $\bE = (\bzero_{d\times p}, \bI_d)$ (now $r = d$) is used to test $\bdelta$, then $Q_n / r = (\hat{\bdelta}_n - \bdelta)'\tilde{\bZ}_n'\tilde{\bZ}_n(\hat{\bdelta}_n - \bdelta) / (d s_n^2) = F_n$, and equation \eqref{eq:robust_e_variable} recovers equation \eqref{eq:zellner_e_process}.
\subsection{Asymptotic ``$t$'' Confidence Sequences}
\label{sec:asymptotic_t_cs}
The heteroskedasticity robust $e$-process $E_n$ from equation \eqref{eq:robust_e_variable} provides an asymptotic ``$t$'' confidence sequence $C_n^\alpha = \{ \bgamma \in \mathbb{R}^k : Q_n  / r \leq R(g,n,\alpha)\}$, 
which differs from the asymptotic \textit{Gaussian} (``Robbins'') confidence sequence $C_n^{G,\alpha} = \{ \bgamma \in \mathbb{R}^k : Q_n  / r \leq R^G(g,n,\alpha)\}$,  used by \citet{asymptoticcs}, where
\begin{equation}
    \label{eq:radii}
        R(g,n,\alpha) = \frac{\nu_n}{r} \frac{1 - \left(\frac{\alpha^{\frac{2}{r}} g}{n+g}\right)^{\frac{r}{\nu_n + r}}}{\max\left(0, \left(\frac{\alpha^{\frac{2}{r}} g}{n+g}\right)^{\frac{r}{\nu_n + r}} - \frac{g}{n+g}\right)},\hspace{0.5cm}
        R^G(g, n, \alpha) = \frac{g + n}{n} \log\left(\frac{g+n}{\alpha^{\frac{2}{r}}g}\right).
\end{equation}
There are two key advantages to asymptotic $t$ confidence sequence. 
First, it aligns with the confidence sequence introduced in the first half of the paper for the parametric linear model. 
Second, while always a superset, it converges in the limit, made formal by the following theorem.
\begin{theorem}
    \label{thm:asmyptotic_radius}
    For any $\alpha \in (0,1)$ and $g > 0$, $R(g,n,\alpha) > R^G(g,n,\alpha)$ for all $n \in \mathbb{N}$ and $\underset{n\rightarrow \infty}{\lim}R(g,n,\alpha) / R^G(g,n,\alpha) \rightarrow 1$.
\end{theorem}
The increased size at small $n$ reflects of the $t$ confidence sequence's explicit accommodation of variance estimation uncertainty, in contrast to the Gaussian sequence, which assumes known variance and substitutes a plug-in estimator.
This property is especially important in the anytime-valid setting, where valid coverage is required uniformly over all $n$.
In such cases, the Gaussian sequence has been observed to over-reject at small sample sizes due to poor asymptotic approximations.
This concern motivates the “delayed-start” modifications in \citet[Section~2.6]{asymptoticcs} and \citet{bibaut2022}.
The $t$ confidence sequence naturally incorporates a delayed start: its width is infinite until $\left(\frac{\alpha^{\frac{2}{r}} g}{n+g}\right)^{\frac{r}{\nu_n + r}} > \frac{g}{n+g}$.
For example, using $r=k=1$, $\alpha=0.05$ and $g = 10^4$, requires a minimum sample size of $247$.
Alternatively, using $g=10^2$ requires a minimum sample size of $27$.
Figure \ref{fig:radii_comparison} in the appendix illustrates the difference between $R(g, n, \alpha)$ and $R^G(g, n, \alpha)$. 
The additional width at small $n$ provides robustness against early false positives, 
illustrated in Section~\ref{sec:examples}, allowing more time for the variance
estimate to stabilize and for the asymptotic behavior to become reliable.

\section{Treatment Effects in Randomized Experiments}
\label{sec:ate}
In this section we additionally remove the linearity assumption altogether within the context of randomized experiments. In particular, we allow the outcome variable $y_i$ to depend \textit{nonlinearly} on $\bw_i$.
In a potential outcomes framework we observe $y_i = T_i y_i(1) + (1-T_i)y_i(0)$, where $y_i(1)$ and $y_i(0)$ are the potential outcomes under treatment and control respectively and $T_i \sim \text{Bernoulli}(\rho)$ is a binary treatment indicator with assignment probability $\rho$.
The goal of many randomized experiments is to perform inference on the average treatment effect $\mathbb{E}[y_i(1) - y_i(0)]$ and conditional average treatment effects $\mathbb{E}[y_i(1) - y_i(0) | \bm_i]$.
In this section we make the stable unit treatment value assumption (SUTVA) and assume simple random sampling with one outcome per unit.
Despite the nonlinearity, consider the following lemma
\begin{lemma}
    \label{lem:population_least_squares}
Consider the population least squares defined as  
\begin{equation}
    \label{eq:population_least_squares}
    \begin{split}
        \bgamma^\star &\coloneqq (\alpha^\star, \bzeta^\star, \tau^\star, \beeta^\star) \\
        &= \underset{\alpha, \bzeta, \tau, \beeta}{\arg\min}\, \mathbb{E}\left[ (y_i - \alpha - (\bm_i - \bmu_m)'\bzeta - (T_i-\rho) \tau - T_i(\bm_i - \bmu_m)'\beeta)^2\right].
    \end{split}
\end{equation}
where $\alpha$ is an intercept, $T_i \sim \text{Bernoulli}(\rho)$ is a treatment indicator with assignment probability $\rho$, $\bm_i - \bmu_m$ are centered pre-treatment covariates, 
and $T_i(\bm_i - \bmu_m)$ is an interaction term. Write the regression as $y_i - \bw_i'\bgamma$. Under the assumption $\bOmega_{\bW}=\mathbb{E}[\bw_i\bw_i]'$ is positive definite and $\mathbb{E}[\bw_i y_i]$ is finite, then
$\bgamma{\star} = \bOmega_{\bW}^{-1}\mathbb{E}[\bw_i \by_i]$. In particular $\alpha^{\star} = \mathbb{E}[y_i] = \rho\mathbb{E}[y_i(1)] + (1-\rho)\mathbb{E}[y_i(0)]$ and $\bzeta^{\star}=\bOmega_{\bm}^{-1} \mathbb{E}[(\bm_i - \bmu_{\bm})y_i(0)]$, while
\begin{equation*}
\begin{split}
    \tau^{\star} =& \mathbb{E}[y_i(1) - y_i(0)]\\
    \beeta^{\star} =& \bOmega_{\bm}^{-1}\mathbb{E}[(\bm_i - \bmu_{\bm})y_i(1)] - \bOmega_{\bm}^{-1}\mathbb{E}[(\bm_i - \bmu_{\bm})y_i(0)],
\end{split}
\end{equation*}
where $\bOmega_{\bm} = \mathbb{E}[(\bm_i - \bmu_m)(\bm_i - \bmu_m)']$.
\end{lemma}
Observe $\tau^{\star}$ is the average treatment effect.
Also observe that treatment effect homogeneity, $\mathbb{E}[y_i(1) - y_i(0)| \bm_i = \bm] = \tau^\star$ for all $\bm$, implies $\beeta^\star = \bzero$.
The central idea is to estimate the population least squares using the sample ordinary least squares estimator, which we can use to test hypotheses about $\tau^\star$ and $\beeta^\star$.
First, observe the following lemma.

\begin{lemma}
    \label{lem:asymptotic_ols_params}
    Define $\varepsilon_i = y_i - \bw_i'\bgamma^\star$, then $\mathbb{E}[\bw_i \varepsilon_i] = \bzero$
\end{lemma}
In Section \ref{sec:heteroskedastic} we assumed a generative model $y_i = \bw_i'\bgamma + \varepsilon_i$. 
Instead, here we \textit{define} the residuals to be $\varepsilon_i = y_i - \bw_i'\bgamma^\star$, so that $y_i = \bw_i'\bgamma^\star + \varepsilon_i$ is true \textit{by definition}.
Hence the conditions described in Section \ref{sec:heteroskedastic} are satisfied and we can apply the preceding results for average treatment effects even if the outcome depends nonlinearly on the covariates.
In particular, under the assumptions of proposition \ref{prop:strong_consistency}, $\hat{\bgamma}_n$ and $\hat{\bSigma}_n$ are strongly consistent estimators of $\bgamma^\star$ and $\bSigma = \mathbb{E}[\varepsilon_i \bw_i \bw_i']$,
and theorem \ref{thm:robust_e_process} can be used to test for average or conditional average treatment effects.
For example, $\bE = (0, \bzero', 1, \bzero')$ allows us to test the average treatment effect $\bE(\hat{\bgamma}_n - \bgamma^\star) = \hat{\tau}_n - \tau^\star$. Alternatively,
$\bE$ can be defined such that $\bE(\hat{\bgamma} - \bgamma^\star) = \hat{\bv}_n - \bv^\star$ where $\hat{\bv}_n' = (\hat{\tau}_n, \hat{\beeta}_n')$ and $(\bv^\star)' = (\tau^\star, (\beeta^\star)')$ allowing us to test for \textit{any} treatment effects, or simply defined such that $\bE(\hat{\bgamma} - \bgamma^\star) = \hat{\beeta}_n - \hat{\beeta}^\star$, allowing us to test for the presence of conditional average treatment effects.

\section{Examples}
\label{sec:examples}
The regression specified in equation \eqref{eq:population_least_squares} is the \textit{fully-interacted} regression as it includes interaction terms, which are optional.
In the following simulations we simply fit the regression $\mathbb{E}[y_i | \bw_i] = \alpha + \bm_i'\bzeta + T_i \tau = \bw_i'\bgamma$.
We examine the asymptotic $t$, asymptotic Gaussian, and linear model confidence sequences i.e.
$\hat{\tau}_n \pm rse(\hat{\tau}_n) \sqrt{R(g,n,\alpha)}$, $\hat{\tau}_n \pm rse(\hat{\tau}_n) \sqrt{R^G(g,n,\alpha)}$, and $\hat{\tau}_n \pm se(\hat{\tau}_n) \sqrt{R(g,n,\alpha)}$ respectively, where $R$ and $R^G$ are defined in equation \eqref{eq:radii} with dimensionality $r=1$.
The standard and robust standard errors are $se(\hat{\tau}_n) = \sqrt{s^2_n} (\bW_n\bW_n)^{-1}_{\tau\tau}$ and $rse(\hat{\tau}_n) = \sqrt{((\bW_n'\bW_n)^{-1}\bW_n'\hat{\bSigma}_n\bW_n(\bW_n'\bW_n)^{-1})_{\tau\tau}}$ respectively, where $s^2_n = \frac{1}{ \nu_n}\sum_{i=1}^n e_i^2$, $\hat{\bSigma}_n = \frac{1}{\nu_n}\sum_{i=1}^n e_i^2 \bw_i \bw_i'$, and $e_i = y_i - \bw_i'\hat{\bgamma}_n$.
The corresponding (asymptotic) $e$-processes are 

\begin{equation}
    \label{eq:E_n_E_n_G_sim}
        E_n = \left(\frac{g}{g+n}\right)^{\frac{1}{2}}\left(\frac{1 + \frac{g}{g+n} \frac{Q_n}{\nu_n}}{1+\frac{Q_n}{\nu_n}}\right)^{-\frac{\nu_n + 1}{2}},\quad
        E^G_n = \left(\frac{g}{g+n}\right)^{\frac{1}{2}}e^{\frac{1}{2}\frac{n}{g+n}Q_n},
\end{equation}
and
\begin{equation}
    \label{eq:G_n_sim}
        G_n = \left(\frac{g}{g+n}\right)^{\frac{1}{2}}\left(\frac{1 + \frac{g}{g+n} \frac{F_n}{\nu_n}}{1+\frac{F_n}{\nu_n}}\right)^{-\frac{\nu_n + 1}{2}}
\end{equation}
where $Q_n = (\hat{\tau}_n/ rse(\hat{\tau}_n))^2$, $F_n = (\hat{\tau}_n / se(\hat{\tau}_n))^2$.
We examine the first time the $e$-process exceeds $\alpha^{-1}$ $\iff$ $(1-\alpha)$ confidence sequence excludes zero.
Note the OLS estimate $\hat{\tau}_n$ is identically equal to the AIPW estimate using the same regression function, provided the assignment probabilities are constant, and so
the asymptotic Gaussian confidence sequence is equal to the AIPW procedure of \citet{asymptoticcs} except no sample-splitting is required.
\subsection{Simulated Nonlinear Regression with $t$-residuals}
\label{sec:simulated_nonlinear}
We examine the robustness to model misspecification by simulating from a known model with clear violations of the linear model assumptions.
Figure~\ref{fig:asymp_t_simulated_combined} visualizes the probability of rejection when outcomes are generated from a model which is nonlinear in the covariates, has heavy-tailed, heteroskedastic residuals, and has an ATE of $\delta$.
With only three degrees of freedom, the $t$ residuals have finite variance but infinite higher order moments.
Figure~\ref{fig:asymp_t_simulated_null} demonstrates that the asymptotic $t$ confidence sequence provides better protection against early false positives than the asymptotic Gaussian confidence sequence.
It further demonstrates that the Type-I error is better calibrated at the nominal $\alpha$ for larger $g$.
Note also that while the empirical Type-I error may appear less than the nominal $\alpha$, we are only examining the rejections over $[1, 10^4]$, whereas the guarantee is for all $n \in \mathbb{N}$.

\begin{figure}[!ht]
  \centering
  
  \begin{subfigure}{\textwidth}
    \centering
    \includegraphics[width=\linewidth]{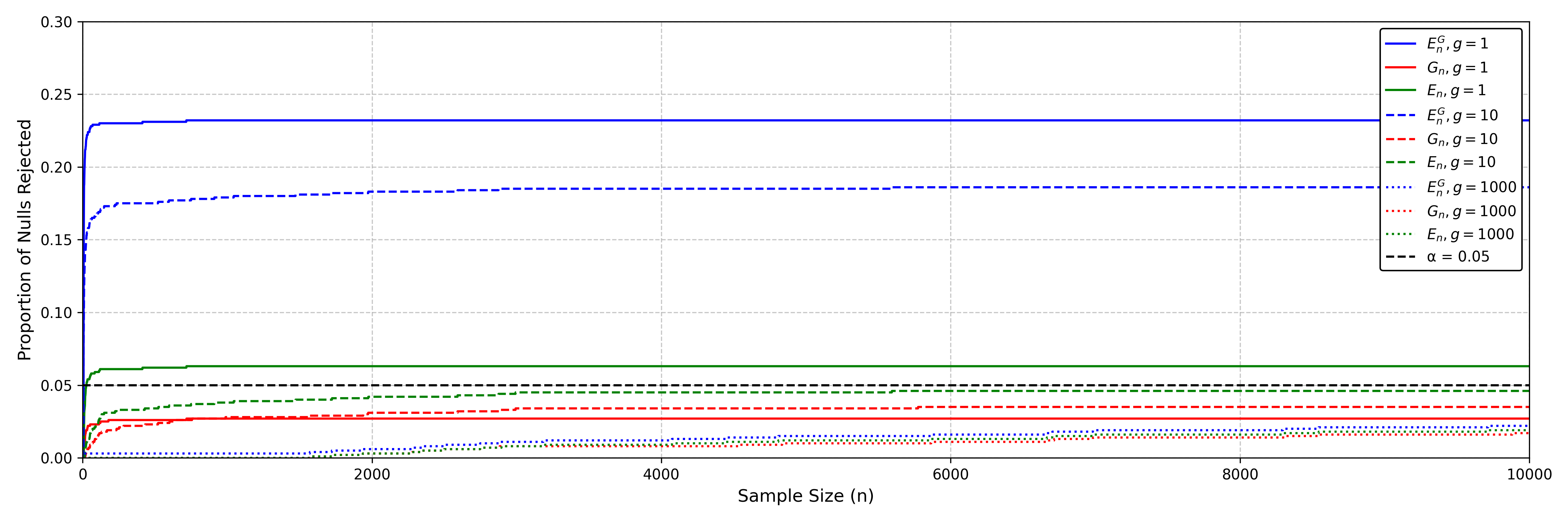}
    \caption{Null Simulation: $\delta = 0$}
    \label{fig:asymp_t_simulated_null}
  \end{subfigure}

  \begin{subfigure}{\textwidth}
    \centering
    \includegraphics[width=\linewidth]{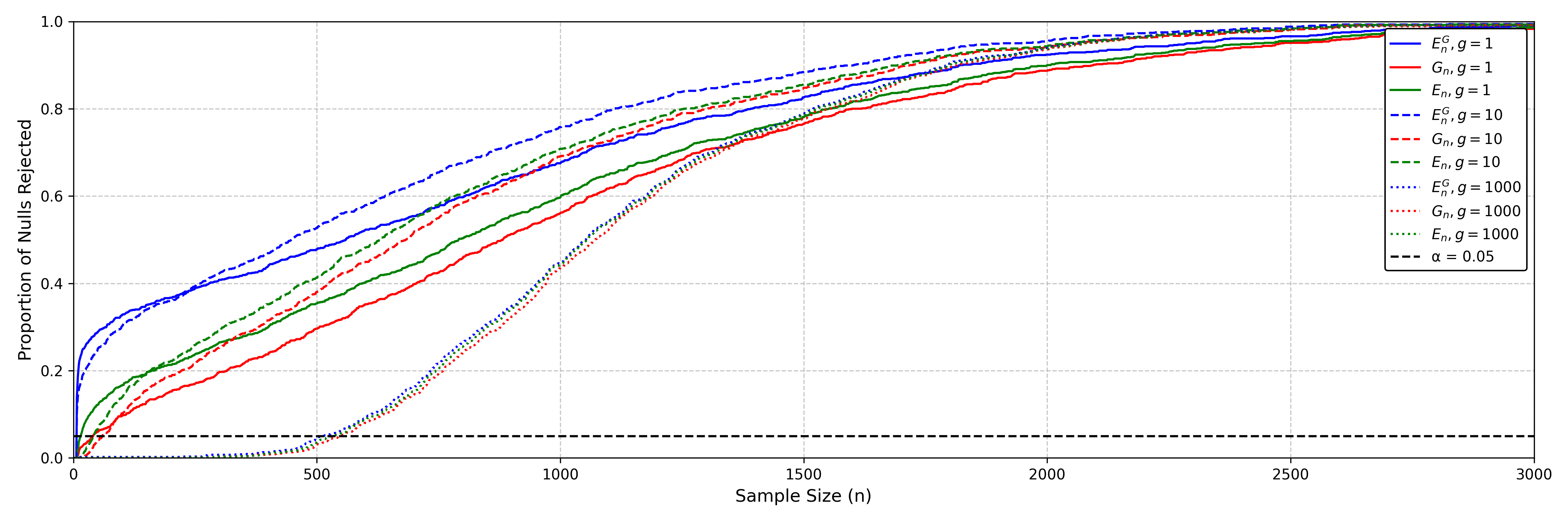}
    \caption{Alternative Simulation: $\delta = 1$}
    \label{fig:asymp_t_simulated_alt}
  \end{subfigure}
  \caption{Proportion of $e$-processes which have crossed the rejection threshold $\alpha^{-1}$ ($\alpha = 0.05$) by sample size $n$, based on $10^3$ simulations from the model
  $y_i = 1 - 2x_{i1}^2 - 2\sin x_{i2} + 3|x_{i3}| + z_i \delta + (1.5 + x_{i1}^2)\epsilon_i$,
where $\epsilon_i \sim t_3$ and $\bx_i \sim N(\bzero_3, \bSigma)$ with $\bSigma_{ij}=0.8^{|i-j|}$, $z_i \sim \text{Bernoulli}(1/2)$. 
Colors correspond to $e$-processes in equations \eqref{eq:E_n_E_n_G_sim} and \eqref{eq:G_n_sim}. Linestyles correspond to $g \in \{1,10,1000\}$.}
  \label{fig:asymp_t_simulated_combined}
\end{figure}

\subsection{PlayDelay: Heavy-Tailed and Right-Skewed Outcomes}
\label{sec:play_delay}
This dataset comes from a real Netflix A/B test in which devices were randomly assigned to a new software version. The outcome, \textit{PlayDelay}, measures the time between a playback request and its start. 
Figure~\ref{fig:jasa_play_delay_cdf} (appendix) shows empirical CDFs of the normalized PlayDelay (scaled by the dataset’s maximum value).
The skew and kurtosis of the data are approximately 100 and 13,000 respectively, indicating an extremely heavy-tailed distribution with significant right-skew and a high propensity for extreme outliers.
The $99.9$-percentile is $0.02$ while the largest value is $1$.
We fit the adjusted regression $y_i = \alpha + m_i \zeta + T_i \tau + \varepsilon_i$
where $m_i$ is a pre-treatment measurement of PlayDelay, strongly correlated with $y_i$, and $T_i$ is a treatment indicator.
As the outcomes are bounded we additionally compare with the Empirical Bernstein confidence sequence of \citet{howard} (Appendix~\ref{sec:aipw}).
The empirical Type-I error is shown in Figure \ref{fig:asymp_t_playdelay_null}. Note that testing only begins when $\nu_n > 0$.
Again, we observe better protection against early false positives for the asymptotic $t$ confidence sequence, which improves with larger $g$.
Considering the alternative simulation, we remark that these procedures rarely stop before $2^{20}$ observations. 
Extreme‐value outliers inflate the empirical variance on each arrival, which in turn broadens the confidence sequence and delays stopping.
By contrast, a log‐transform compresses the tail mass and stabilizes variance, yielding much earlier stopping times. For these reasons, we present the alternative simulation in Figure~\ref{fig:asymp_t_playdelay_alt} for log-PlayDelay.
In both cases, we observe that the price paid for the nonasymptotic guarantees provided by the Empirical Bernstein confidence sequence is a much larger stopping time.
\begin{figure}[!ht]
  \centering
  
  \begin{subfigure}{\textwidth}
    \centering
    \includegraphics[width=\textwidth]{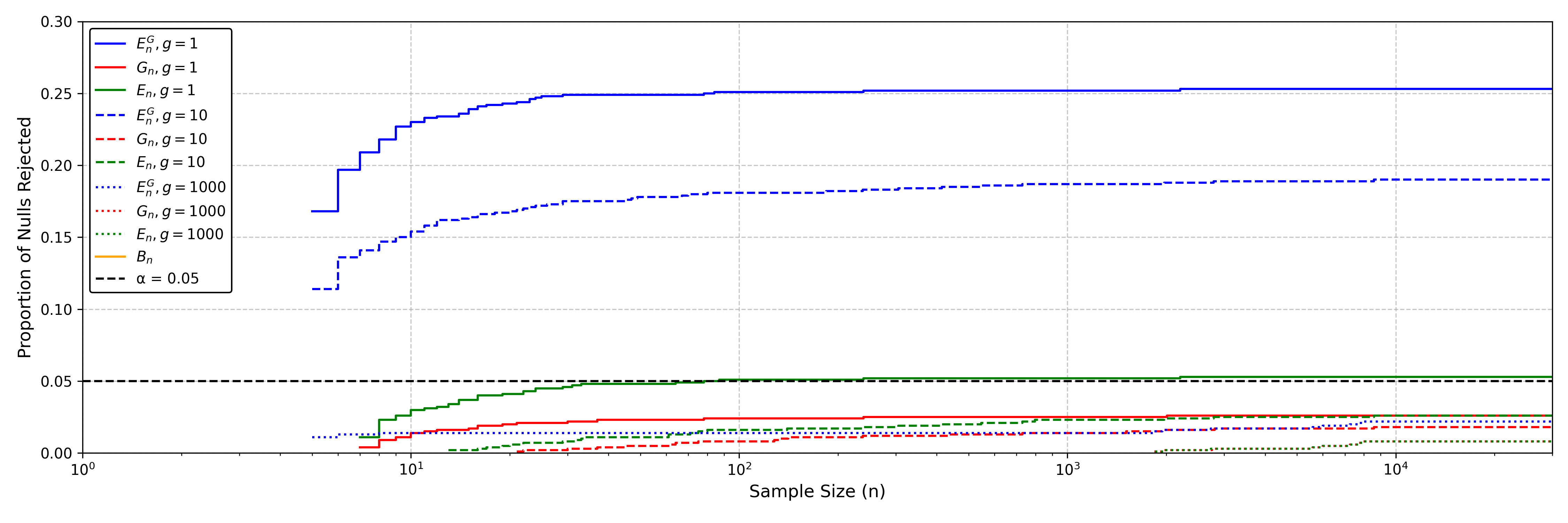}
    \caption{Null Simulation: A $T_i\sim\text{Bernoulli}(1/2)$ treatment indicator is drawn and the outcome is sampled with replacement from the control dataset.}
    \label{fig:asymp_t_playdelay_null}
  \end{subfigure}
  
  \begin{subfigure}{\textwidth}
    \centering
    \includegraphics[width=\textwidth]{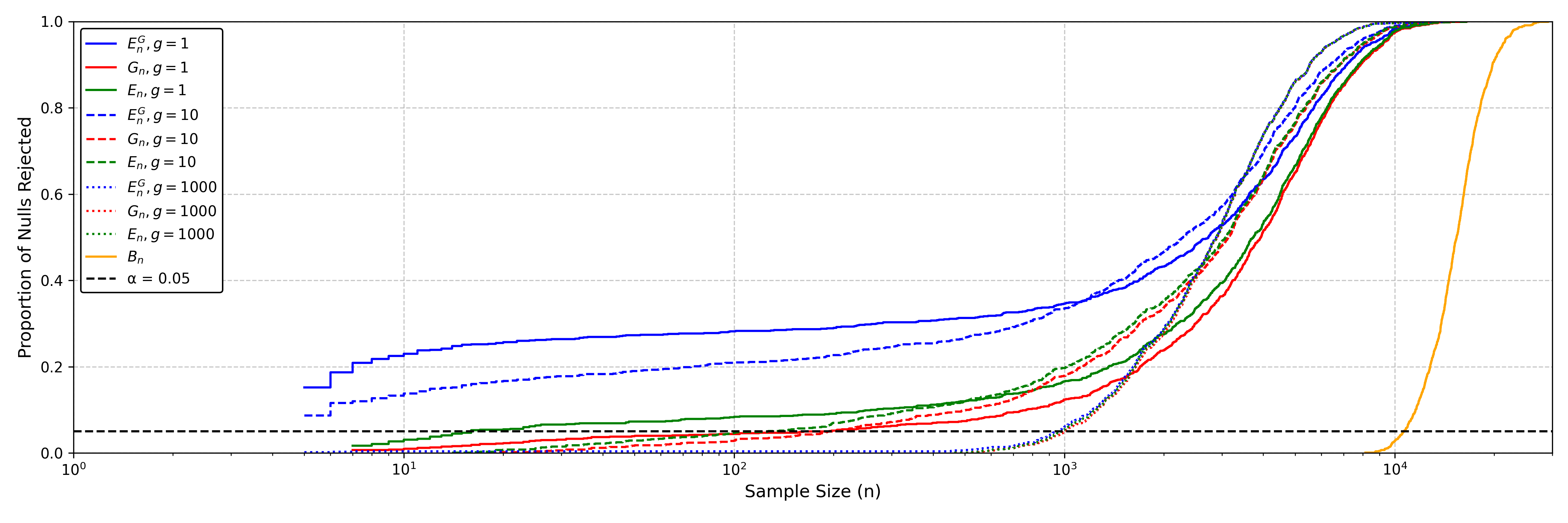}
    \caption{Alternative Simulation: A $T_i\sim \text{Bernoulli}(1/2)$ treatment indicator is drawn and the log-outcome is sampled with replacement from the control (treatment) dataset if $T_i = 0$ ($T_i=1$).}
    \label{fig:asymp_t_playdelay_alt}
  \end{subfigure}
  \caption{Proportion of $e$-processes which have crossed the rejection threshold $\alpha^{-1}$ ($\alpha = 0.05$) by sample size $n$, based on $10^3$ simulations.
Colors correspond to $e$-processes defined in equations \eqref{eq:E_n_E_n_G_sim}, \eqref{eq:G_n_sim} and Appendix~\ref{sec:aipw}. Linestyles correspond to $g \in \{1,10,1000\}$.}
  \label{fig:asymp_t_playdelay_combined}
\end{figure}

\section{Conclusion}
\label{sec:conclusion}
Motivated by the foundational role of linear models in practice, we outlined how to perform safe anytime-valid inference in this setting. 
The proposed tests (confidence sequences) provide \textit{time-uniform} Type-I error control (coverage) and are straightforward to implement:
the test statistic is simply the likelihood ratio of the standardized OLS vector $\hat{\bxi}_n = \hat{\bdelta}_n / \sqrt{s^2_n}$.
The resulting $e$-variable is optimal in the GROW and REGROW senses, for both point and composite alternatives. 
For composite alternatives, we introduced an approximate, automatic construction with a closed-form expression, enabling practitioners to convert $F$-statistics into anytime-valid $e$-processes.

In the second half of the paper, we relaxed linear model assumptions—first allowing heteroskedasticity, 
then dropping linearity altogether in the context of randomized experiments. 
This led to asymptotic, nonparametric $e$-processes for average and conditional average treatment effects under simple randomization. 
These tools allow practitioners to continuously monitor experiments and rigorously detect harmful or beneficial treatment effects sooner.

Our analysis is limited to independent outcomes. 
An important extension is to dependent outcomes, common in longitudinal and panel data settings with repeated measurements and within-unit serial correlation. 
We conjecture that the form of the heteroskedastic $e$-process may still apply in this case, but we leave this to future work.

\bibliographystyle{agsm}
\bibliography{main}

\appendix
\newpage
\begin{center}
{\large\bf SUPPLEMENTARY MATERIAL}
\end{center}

\section{Review: The classical fixed-$n$ $F$-Test} \label{app:classicalftest}
As this section concerns the fixed-$n$ case, we drop the $n$ superscript from $\bY_n$ to simplify the exposition.
An $\alpha$-level test of $H_0: \bdelta=0$, without loss of generality,  can be obtained by examining the likelihood ratio test statistic
\begin{equation}
\label{eq:likelihood-ratio}
    \Lambda(\bY) := \frac{\underset{\btheta \in \bTheta_1}{\sup} p(\bY|\btheta)}{\underset{\btheta \in \bTheta_0}{\sup} p(\bY|\btheta)}
\end{equation}
and rejecting the null hypothesis when $\Lambda(\bY) > c_\alpha$ for some constant $c_\alpha>0$ suitably chosen to provide a Type-I error probability of at most $\alpha$. The following lemma recalls the classical likelihood ratio test construction of the $F$-test.
\begin{theorem}
The likelihood ratio can be expressed
\begin{equation}
    \Lambda(\bY) = 1+ \frac{d}{n-p-d}f(\bY)
\end{equation}
where the $f$-statistic is defined as
\begin{equation}
\label{eq:fstatistic}
       f(\bY)= \frac{\frac{\bY'(\pw-\px)\bY}{d}}{\frac{\bY'(\bI-\pw)\bY}{n-p-d}} =\frac{\hat{\bdelta}(\bY)\tilde{\bZ}'\tilde{\bZ}\hat{\bdelta}(\bY)}{d s^2(\bY)}= \frac{\bt(\bY)'\bt(\bY)}{d}.
\end{equation}
Then $\Lambda(\bY) > c_\alpha$ $\iff$ $f(\bY) > f_\alpha$ for some $f_\alpha > 0$. 
The distributions of the $f$-statistic under $H_1$ and $H_0$ are
\begin{equation}
    \begin{split}
        f(\bY)|\bbeta, \bdelta, \sigma^2, H_1 &\sim F (d, n-p-d, \bdelta'\tilde{\bZ}'\tilde{\bZ}\bdelta/\sigma^2) \\ 
        f(\bY)|\bbeta, \sigma^2, H_0 &\sim F (d, n-p-d, 0) \\ 
    \end{split}
\end{equation}
Rejecting when $f(Y) > f_\alpha$, with $f_\alpha$ denoting the $1-\alpha$ quantile $F (d, n-p-d,0)$ yields a fixed-$n$ test with Type-I error probability $\alpha$.
\label{thm:classicalf}
\end{theorem}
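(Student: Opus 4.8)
The plan is the classical two-step reduction: first concentrate the Gaussian likelihood to turn $\Lambda(\bY)$ into a ratio of residual sums of squares, then apply standard distribution theory for quadratic forms in normal vectors. For Step~1, I would start from the Gaussian likelihood $(2\pi\sigma^2)^{-n/2}\exp(-\|\bY-\bmu\|^2/2\sigma^2)$ with mean $\bmu$ constrained to a subspace $\mathcal{M}$; maximising over $\bmu$ gives the OLS fit with residual sum of squares $\mathrm{RSS}_\mathcal{M}=\bY'(\bI-\bP_\mathcal{M})\bY$, and then maximising over $\sigma^2$ gives $\hat\sigma^2=\mathrm{RSS}_\mathcal{M}/n$, so the profile likelihood equals $(2\pi e\,\mathrm{RSS}_\mathcal{M}/n)^{-n/2}$. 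Applying this with $\mathcal{M}=\mathcal{C}(\bW)$ in the numerator of \eqref{eq:likelihood-ratio} and $\mathcal{M}=\mathcal{C}(\bX)$ in the denominator turns $\Lambda(\bY)$, up to the monotone $2/n$ power (which does not affect the induced test), into $\mathrm{RSS}_0/\mathrm{RSS}_1$ with $\mathrm{RSS}_0=\bY'(\bI-\px)\bY$ and $\mathrm{RSS}_1=\bY'(\bI-\pw)\bY$.

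For Step~2, I would use that $\mathcal{C}(\bX)\subseteq\mathcal{C}(\bW)$, which gives $\px\pw=\pw\px=\px$, so that $\bI-\px=(\bI-\pw)+(\pw-\px)$ is an orthogonal sum of projections of ranks $n-p-d$ and $d$. Hence $\mathrm{RSS}_0=\mathrm{RSS}_1+\bY'(\pw-\px)\bY$ and $\Lambda(\bY)=1+\bY'(\pw-\px)\bY/\bY'(\bI-\pw)\bY=1+\frac{d}{n-p-d}f(\bY)$ by the definition of $f$. The equivalent expressions for $f$ in \eqref{eq:fstatistic} follow from Frisch--Waugh--Lovell: $\pw-\px$ is exactly the orthogonal projection onto $\mathcal{C}((\bI-\px)\bZ)$, so $\pw-\px=(\bI-\px)\bZ(\tilde{\bZ}'\tilde{\bZ})^{-1}\bZ'(\bI-\px)$ with $\tilde{\bZ}'\tilde{\bZ}=\bZ'(\pw-\px)\bZ=\bZ'(\bI-\px)\bZ$ and $\hat{\bdelta}(\bY)=(\tilde{\bZ}'\tilde{\bZ})^{-1}\bZ'(\bI-\px)\bY$, which yields $\bY'(\pw-\px)\bY=\hat{\bdelta}(\bY)'\tilde{\bZ}'\tilde{\bZ}\hat{\bdelta}(\bY)=\bt(\bY)'\bt(\bY)\,s^2(\bY)$. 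Since $t\mapsto 1+\frac{d}{n-p-d}t$ is strictly increasing, $\Lambda(\bY)>c_\alpha \iff f(\bY)>f_\alpha$ for the matching threshold.

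For Step~3, writing $\bmu=\bX\bbeta+\bZ\bdelta$ so that $\bY\sim N(\bmu,\sigma^2\bI)$, I would invoke Cochran's theorem on the two quadratic forms: $\bY'(\pw-\px)\bY/\sigma^2\sim\chi^2(d,\bmu'(\pw-\px)\bmu/\sigma^2)$ and $\bY'(\bI-\pw)\bY/\sigma^2\sim\chi^2(n-p-d,0)$, the latter central because $(\bI-\pw)\bmu=0$, and they are independent since $(\pw-\px)(\bI-\pw)=0$ makes $(\pw-\px)\bY$ and $(\bI-\pw)\bY$ uncorrelated jointly Gaussian vectors. The crucial computation is the numerator's noncentrality: $(\pw-\px)\bX\bbeta=0$ because $\bX\bbeta\in\mathcal{C}(\bX)$ is fixed by both $\pw$ and $\px$, so $(\pw-\px)\bmu=(\pw-\px)\bZ\bdelta$ and the noncentrality equals $\bdelta'\bZ'(\pw-\px)\bZ\bdelta/\sigma^2=\|\tilde{\bZ}\bdelta\|_2^2/\sigma^2$, \emph{depending on $\bdelta$ alone}. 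The ratio of the two independent scaled chi-squares is then $F(d,n-p-d,\|\tilde{\bZ}\bdelta\|_2^2/\sigma^2)$, central under $H_0$; choosing $f_\alpha$ as its $1-\alpha$ quantile gives $\mathbb{P}_{\btheta}[f(\bY)>f_\alpha]=\alpha$ for every $\btheta\in\bTheta_0$, i.e.\ a size-$\alpha$ test.

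I expect the only genuine care-point to be the bookkeeping in Step~3 establishing $(\pw-\px)\bX\bbeta=0$: this is precisely what decouples the noncentrality from the nuisance location $\bbeta$ and makes the rejection probability equal $\alpha$ \emph{uniformly} over the composite null. The remaining verifications (ranks of $\pw-\px$ and $\bI-\pw$, vanishing of their product, and the Frisch--Waugh--Lovell identity) are routine.
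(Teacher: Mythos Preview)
Your proposal is correct and follows essentially the same route as the paper: profile the Gaussian likelihood over $(\bbeta,\sigma^2)$ under each hypothesis to get the ratio of residual sums of squares, then use the orthogonal decomposition $\bI-\px=(\bI-\pw)+(\pw-\px)$ to extract $f(\bY)$, and finally rewrite the numerator via $(\pw-\px)\bY=(\bI-\px)\bZ\hat{\bdelta}(\bY)$. If anything, you are more careful than the paper, which silently drops the $n/2$ power in passing from $\Lambda(\bY)=\bigl(\mathrm{RSS}_0/\mathrm{RSS}_1\bigr)^{n/2}$ to $1+\tfrac{d}{n-p-d}f(\bY)$ (you correctly flag this as a monotone transformation), and more complete, since you supply the Cochran-theorem argument for the $F$ distribution that the paper's written proof omits.
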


Note that the $f$ statistic can be written in terms of the maximal invariant statistic $\bt(\bY)$. In the case of $d=1$, when there is only a single main effect, then $f$ can be identified as the square of the usual $t$-statistic ($t \sim t_{n-p-1} \Rightarrow t^2 \sim F(1,n-p-1)$). A $p$-value can be calculated by computing $\mathbb{P}[f \geq f(Y)]$ under the null $F(d,n-p-d,0)$ distribution. 

\begin{proof}
Starting with the denominator in \eqref{eq:likelihood-ratio}, consider first expressing the quadratic form in the Guassian likelihood as a component in $\mathcal{C}(\bX)$ and a component in $\mathcal{C}(\bX)^\perp$.
\begin{equation}
\begin{split}
        \|\bY - \bX\bbeta \|_2^2 &= \|\px(\bY - \bX\bbeta)\|_2^2+\|(\bI-\px)(\bY - \bX\bbeta)\|_2^2\\
        &=\|\px\bY - \bX\bbeta\|_2^2+\|(\bI-\px)\bY\|_2^2
\end{split}
\end{equation}
This is minimized by setting $\hat{\bbeta} = (\bX'\bX)\bX'\bY$, which sets the first term to zero.
The likelihood is then maximized by setting $\hat{\sigma^2} = \|(\bI-\px)\bY\|_2^2 / n$. It follows that
\begin{align}
\underset{\btheta \in \bTheta_0}{\sup}p(\bY|\btheta) = \left(\frac{n}{2\pi}\right)^{\frac{n}{2}}\left(\frac{1}{\|(\bI-\px)\bY\|_2^2}\right)^{\frac{n}{2}}e^{-\frac{n}{2}}
\end{align}
Now consider the numerator in \eqref{eq:likelihood-ratio}, expressing the quadratic form in the Gaussian likelihood as a component in $\mathcal{C}(\bW)$ a component in $\mathcal{C}(\bW)^\perp$.
\begin{equation}
\begin{split}
        \|\bY - \bX\bbeta - \bZ\bdelta\|_2^2  = \|\bY - \bW \bgamma\|_2^2 &= \|\pw(\bY - \bW\bgamma)\|_2^2+\|(\bI-\pw)(\bY - \bW\bgamma)\|_2^2\\
        &= \|\pw(\bY - \bW\bgamma)\|_2^2+\|(\bI-\pw)\bY\|_2^2\\
\end{split}
\end{equation}
where $\bW = [\bX, \bZ]$ and $\bgamma' = (\bbeta', \bdelta')$. Applying the same reasoning as before, this is minimized by setting $\hat{\bgamma} = (\bW'\bW)^{-1}\bW'Y$, which sets the first term to zero. The likelihood is then maximized by setting $\hat{\sigma}^2 = \|(\bI-\pw)\bY\|_2^2/n$. It follows that 
\begin{align}
\underset{\btheta \in \bTheta_1}{\sup}p(\bY|\btheta) = \left(\frac{n}{2\pi}\right)^{\frac{n}{2}}\left(\frac{1}{\|(\bI-\pw)\bY\|_2^2}\right)^{\frac{n}{2}}e^{-\frac{n}{2}}
\end{align}
and therefore
\begin{equation}
    \Lambda(\bY) = \left(\frac{\|(\bI-\px)\bY\|_2^2}{\|(\bI-\pw)\bY\|_2^2}\right)^{\frac{n}{2}}.
\end{equation}
However, the vector in the numerator be expressed as a component in $\mathcal{C}(\bW)$ and a component in $\mathcal{C}(\bW)^\perp$.
\begin{equation}
\begin{split}
        \|(\bI-\px)\bY\|_2^2 &= \|\pw(\bI-\px)\bY\|_2^2 + \|(\bI-\pw)(\bI-\px)\bY\|_2^2\\
         &= \|(\pw-\px)\bY\|_2^2 + \|(\bI-\pw)\bY\|_2^2\\
        &= \|(\pw-\px)\bY\|_2^2 + \|(\bI-\pw)\bY\|_2^2\\
\end{split}
\end{equation}
and so the likelihood ratio can be written in terms of the $f$-statistic as
\begin{equation}
    \Lambda(\bY) = 1+ \frac{d}{n-p-d}f(\bY).
\end{equation}
To show $f(\bY)$ can be expressed in terms of $\hat{\bdelta}(\bY)$ as in equation \eqref{eq:fstatistic}, note simply that $(\pw-\px)\bY = (\bI-\px)\pw\bY = (\bI-\px)(\bX\hat{\bbeta}(\bY) + \bZ\hat{\bdelta}(\bY)) = (\bI-\px)\bZ\hat{\bdelta}(\bY) = \tilde{\bZ}\hat{\bdelta}(\bY)$.
\end{proof}

A test of the null hypothesis $H_0: \bdelta=\bdelta_0$ can easily be obtained from a hypothesis test of $\bdelta = \bzero$ by replacing $\bY$ with $\bY-\bZ\bdelta_0$. In this case, the $f$- statistic becomes

\begin{equation}
       f(\bY)= \frac{\frac{(\bY-\bZ\bdelta_0)'(\pw-\px)(\bY-\bZ\bdelta_0)}{d}}{\frac{(\bY-\bZ\bdelta_0)'(\bI-\pw)(\bY-\bZ\bdelta_0)}{n-p-d}} =\frac{(\hat{\bdelta}(\bY)-\bdelta_0)'\tilde{\bZ}'\tilde{\bZ}(\hat{\bdelta}(\bY)-\bdelta_0)}{d s^2(\bY)}
\end{equation}

By finding the set of null-values that would not be rejected by this test one obtains a confidence set for the vector $\bdelta$.
\begin{corollary}
A $1-\alpha$ confidence set for $\bdelta$ is provided by
\begin{equation}
    \label{eq:classicalconfidenceset}
    \mathcal{C}_\alpha(\bY) := \{\bdelta : (\hat{\bdelta}(\bY)-\bdelta)'\tilde{\bZ}'\tilde{\bZ}(\hat{\bdelta}(\bY)-\bdelta) \leq d s^2(\bY) f_\alpha\},
\end{equation}
\end{corollary}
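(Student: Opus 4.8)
The plan is to invoke the standard duality between a family of hypothesis tests and the associated confidence set, using the $F$-tests of $H_0:\bdelta=\bdelta_0$ indexed by $\bdelta_0\in\mathbb{R}^d$. By the remark preceding the corollary, the test of $H_0:\bdelta=\bdelta_0$ is obtained from the test of $H_0:\bdelta=\bzero_d$ in Theorem~\ref{thm:classicalf} through the substitution $\bY\mapsto\bY-\bZ\bdelta_0$, and it rejects precisely when $f(\bY-\bZ\bdelta_0)>f_\alpha$, where $f_\alpha$ is the $1-\alpha$ quantile of the central $F(d,n-p-d,0)$ law. The first step is to record the two algebraic identities that make this substitution transparent. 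Since $(\bI-\pw)\bZ=\bzero$, the denominator is unchanged: $s^2(\bY-\bZ\bdelta_0)=(\bY-\bZ\bdelta_0)'(\bI-\pw)(\bY-\bZ\bdelta_0)/(n-p-d)=\bY'(\bI-\pw)\bY/(n-p-d)=s^2(\bY)$. And because the $\bdelta$-block of the OLS fit of $\bZ\bdelta_0$ on $[\bX,\bZ]$ is exactly $\bdelta_0$ (the last $d$ columns of $(\bW'\bW)^{-1}\bW'\bZ$ form $\bI_d$), linearity of the OLS map gives $\hat{\bdelta}(\bY-\bZ\bdelta_0)=\hat{\bdelta}(\bY)-\bdelta_0$. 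Substituting both into the expression for $f$ in \eqref{eq:fstatistic}, applied to $\bY-\bZ\bdelta_0$, yields $f(\bY-\bZ\bdelta_0)=(\hat{\bdelta}(\bY)-\bdelta_0)'\tilde{\bZ}'\tilde{\bZ}(\hat{\bdelta}(\bY)-\bdelta_0)/(d\,s^2(\bY))$, so the acceptance region $\{f(\bY-\bZ\bdelta_0)\le f_\alpha\}$ coincides with $\{\bdelta_0\in\mathcal{C}_\alpha(\bY)\}$.

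The second step is the coverage calculation. Fix the true parameter $\bdelta$ and apply the above with $\bdelta_0=\bdelta$. Then $\bY-\bZ\bdelta$ has mean $\bX\bbeta$ and covariance $\sigma^2\bI$, i.e.\ it obeys the null model with $\bdelta=\bzero_d$; hence by Theorem~\ref{thm:classicalf} the statistic $f(\bY-\bZ\bdelta)$ has the \emph{central} $F(d,n-p-d,0)$ distribution, because its noncentrality parameter $\|\tilde{\bZ}(\bdelta-\bdelta_0)\|_2^2/\sigma^2$ vanishes. Therefore $\mathbb{P}[f(\bY-\bZ\bdelta)\le f_\alpha]=1-\alpha$, and combining with the equivalence from the first step gives $\mathbb{P}[\bdelta\in\mathcal{C}_\alpha(\bY)]=1-\alpha$, which is the claim.

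I do not anticipate a genuine obstacle; the only points requiring care are already isolated above. First, one must confirm that $s^2(\bY-\bZ\bdelta_0)$ does not depend on $\bdelta_0$ --- otherwise $\mathcal{C}_\alpha(\bY)$ would not be the sublevel set of a quadratic in $\bdelta_0$, hence not an ellipsoid --- and this is exactly $(\bI-\pw)\bZ=\bzero$. Second, one must use the \emph{central} rather than noncentral $F$ quantile, which is justified precisely because evaluating at $\bdelta_0=\bdelta$ kills the noncentrality. One may additionally note that $\mathcal{C}_\alpha(\bY)$ is a bounded ellipsoid exactly when $\tilde{\bZ}'\tilde{\bZ}$ is positive definite, i.e.\ when $\bW'\bW$ has full rank, matching the standing assumption of the theorem.
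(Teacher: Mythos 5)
Your proposal is correct and follows exactly the route the paper takes (and leaves mostly implicit): invert the family of $F$-tests of $H_0:\bdelta=\bdelta_0$ obtained via the substitution $\bY\mapsto\bY-\bZ\bdelta_0$, using the identities $s^2(\bY-\bZ\bdelta_0)=s^2(\bY)$ and $\hat{\bdelta}(\bY-\bZ\bdelta_0)=\hat{\bdelta}(\bY)-\bdelta_0$, and note that at the true $\bdelta$ the noncentrality vanishes so the central $F$ quantile gives exact $1-\alpha$ coverage. You have merely supplied the details the paper states in one line, so nothing further is needed.
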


\section{Proofs for Section \ref{sec:linear_model_results}} 
\begin{lemma}
    \label{lem:Y_decomposition}
    $$\|\bY - \bW \bgamma\|_2^2 =\|\bX(\bbeta - \tilde{\bbeta})\|_2^2+\|\tilde{\bZ}(\hat{\bdelta} - \bdelta)\|_2^2+\|(\bI-\pw)\bY\|_2^2$$
where $\tilde{\bbeta} = \hat{\bbeta}+(\bX'\bX)^{-1}\bX'\bZ(\hat{\bdelta}-\bdelta)$.
\end{lemma}
\begin{proof}
\begin{equation*}
\begin{split}
\|\bY - \bW \bgamma\|_2^2 &= \|\pw(\bY - \bW\bgamma)\|_2^2+\|(\bI-\pw)(\bY - \bW\bgamma)\|_2^2\\
        &= \|\pw(\bY - \bW\bgamma)\|_2^2+\|(\bI-\pw)\bY\|_2^2\\
        &= \|\px\pw(\bY - \bW\bgamma)\|_2^2+\|(I-\px)\pw(\bY - \bW\bgamma)\|_2^2+\|(\bI-\pw)\bY\|_2^2\\
        &= \|X\hat{\bbeta}+\px\bZ\hat{\bdelta} - X\bbeta - \px\bZ\bdelta\|_2^2+\|(I-\px)\bZ(\hat{\bdelta} - \bdelta)\|_2^2\\
        &\hspace{1cm}+\|(\bI-\pw)\bY\|_2^2\\
        &= \|\bX(\bbeta - \tilde{\bbeta})\|_2^2+\|\tilde{\bZ}(\hat{\bdelta} - \bdelta)\|_2^2+\|(\bI-\pw)\bY\|_2^2\\
\end{split}
\end{equation*}
\end{proof}
\label{app:B_statistic_proof}

\subsection{Proof of Theorem \ref{thm:bayesian_lm_e_variable}}
We first prove the expression of $E_n$ using Wijsman's representation theorem, that is, manually computing the integrals in Lemma~\ref{lem:wijsman}.
We then examine the statistic through which the data enters the resulting expression.
Using a counterexample, we demonstrate that $\hat{\bxi}_n$ is not a maximal invariant.
We then prove that $\hat{\bxi}_n$ is, instead, an invariantly sufficient statistic.
\subsubsection*{Direct Calculation of $E_n$ from Wijsman's Representation Theorem}
Recall from Theorem~\ref{thm:bayesian_lm_e_variable} that the expression for $E_n$ is
    \label{sec:direct_calculation}
    \begin{equation}
            \begin{split}
                E_n \coloneqq& \frac{\int q(\bY_n| \bbeta, \sigma^2, \bdelta) \pi(\bbeta, \sigma^2) \pi(\bdelta|\sigma^2) d\bbeta d\sigma^2 d\bdelta}{\int q(\bY_n| \bbeta, \sigma^2, \bdelta) \pi(\bbeta, \sigma^2) d\bbeta d\sigma^2}\\
                =&\sqrt{\frac{\det(\bPhi)}{\det(\bPhi + \tilde{\bZ}_n'\tilde{\bZ}_n)}} \frac{\left(1+\frac{\hat{\bxi}_n'(\tilde{\bZ}_n'\tilde{\bZ}_n - \tilde{\bZ}_n'\tilde{\bZ}_n(\bPhi + \tilde{\bZ}_n'\tilde{\bZ}_n)^{-1}\tilde{\bZ}_n'\tilde{\bZ}_n)\hat{\bxi}_n}{\nu_n}\right)^{-\frac{\nu_n + d}{2}}}{\left(1+\frac{\hat{\bxi}_n'\tilde{\bZ}_n'\tilde{\bZ}_n\hat{\bxi}_n}{\nu_n}\right)^{-\frac{\nu_n + d}{2}}}.
            \end{split}
        \end{equation}
\begin{proof}
We prove this manually by computing the integrals in the integral representation.

To begin we compute the marginal likelihood of $\bY$ under the alternative $Q$ in the numerator.
\begin{equation}
    q(\bY) = \int \int \int q(\bY|\bbeta, \bdelta, \sigma^2) \pi(\bdelta|\sigma^2) \pi(\bbeta, \sigma^2) d\bbeta d\bdelta d\sigma^2.
\end{equation}
We use lemma \ref{lem:Y_decomposition} to decompose the likelihood throughout.

\begin{equation}
\begin{split}
    q(\bY|\bdelta, \sigma^2) =& \int q(\bY|\bbeta, \bdelta, \sigma^2) \pi(\bbeta) d\beta \\
    =&\left(\frac{1}{2\pi\sigma^2}\right)^{\frac{n}{2}}e^{-\frac{1}{2\sigma^2} \left(\|\tilde{\bZ}(\hat{\bdelta} - \bdelta)\|_2^2+\|(\bI-\pw)\bY\|_2^2 \right)}\int e^{-\frac{1}{2\sigma^2}\|\bX(\bbeta - \tilde{\bbeta}(\bY,\bdelta))\|_2}d\bbeta\\
    =&\left(\frac{1}{2\pi\sigma^2}\right)^{\frac{n-p}{2}}\left(\frac{1}{\det(\bX'\bX)}\right)^{\frac{1}{2}}e^{-\frac{1}{2\sigma^2}\|\tilde{\bZ}(\hat{\bdelta} - \bdelta)\|_2^2}e^{-\frac{1}{2\sigma^2}\|(\bI-\pw)\bY\|_2^2},
\end{split}
\end{equation}
where the last line follows from recognizing the integrand as the kernel of a multivariate Gaussian in $\bbeta$ with precision matrix $\bX'\bX/\sigma^2$.
To perform the marginalization over the prior on $\bdelta$ note that by completing the square
\begin{equation*}
    \begin{split}
            \|\tilde{\bZ}(\hat{\bdelta}-\bdelta)\|_2^2 + \bdelta'\bPhi\bdelta =& (\bdelta - \tilde{\bdelta})'(\bPhi + \tilde{\bZ}'\tilde{\bZ})(\bdelta - \tilde{\bdelta}) + \hat{\bdelta}'(\tilde{\bZ}'\tilde{\bZ} - \tilde{\bZ}'\tilde{\bZ}(\bPhi + \tilde{\bZ}'\tilde{\bZ})^{-1}\tilde{\bZ}'\tilde{\bZ})\hat{\bdelta}. 
    \end{split}
    \end{equation*}
\bigskip
where $\tilde{\bdelta} = (\bPhi + \tilde{\bZ}'\tilde{\bZ})^{-1}\tilde{\bZ}'\tilde{\bZ}\hat{\delta}$ is the posterior mean and $(\bPhi + \tilde{\bZ}'\tilde{\bZ})/\sigma^2$ the posterior precision.
Hence
\begin{equation*}
\begin{split}
       q(\bY|\sigma^2) =& \int q(\bY|\bdelta, \sigma^2) p(\bdelta|\sigma^2) d\bdelta\\ 
       =&\left(\frac{1}{2\pi\sigma^2}\right)^{\frac{n-p}{2}}\left(\frac{1}{\det(\bX'\bX)}\right)^{\frac{1}{2}}e^{-\frac{1}{2\sigma^2}\|(\bI-\pw)\bY\|_2^2}e^{-\frac{1}{2\sigma^2}\hat{\bdelta}'(\tilde{\bZ}'\tilde{\bZ} - \tilde{\bZ}'\tilde{\bZ}(\bPhi + \tilde{\bZ}'\tilde{\bZ})^{-1}\tilde{\bZ}'\tilde{\bZ})\hat{\bdelta}}\\
       &\left(\frac{1}{2\pi\sigma^2}\right)^{\frac{d}{2}}\det(\bPhi)^{\frac{1}{2}}\int e^{-\frac{1}{2\sigma^2}(\bdelta - \tilde{\bdelta})'(\bPhi + \tilde{\bZ}'\tilde{\bZ})(\bdelta - \tilde{\bdelta})}d\bdelta\\
    =&\left(\frac{1}{2\pi\sigma^2}\right)^{\frac{n-p}{2}}\left(\frac{1}{\det(\bX'\bX)}\right)^{\frac{1}{2}} \left(\frac{\det(\bPhi)}{\det(\bPhi + \tilde{\bZ}'\tilde{\bZ})}\right)^{\frac{1}{2}}\\
    &e^{-\frac{1}{2\sigma^2}\|(\bI-\pw)\bY\|_2^2}e^{-\frac{1}{2\sigma^2}\hat{\bdelta}'(\tilde{\bZ}'\tilde{\bZ} - \tilde{\bZ}'\tilde{\bZ}(\bPhi + \tilde{\bZ}'\tilde{\bZ})^{-1}\tilde{\bZ}'\tilde{\bZ})\hat{\bdelta}}\\
\end{split}
\end{equation*}
where the last step is achieved by recognizing the kernel of a $d$-dimensional multivariate Gaussian density in $\bdelta$ with precision $\bPhi+\tilde{\bZ}'\tilde{\bZ}$.
The marginalization over $\sigma^2$ is then
\begin{equation*}
\begin{split}
q(\bY) &= \int q(\bY|\sigma^2) \pi(\sigma^2)d\sigma^2\\
&= \left(\frac{1}{2\pi}\right)^{\frac{n-p}{2}}\left(\frac{1}{\det(\bX'\bX)}\right)^{\frac{1}{2}}\frac{\det(\bPhi)^{\frac{1}{2}}}{\det(\bPhi + \tilde{\bZ}'\tilde{\bZ})^{\frac{1}{2}}}\\
        &\int \left(\frac{1}{\sigma^2}\right)^{\frac{n-p}{2}+1}e^{-\frac{1}{2\sigma^2}\left(\|(\bI-\pw)\bY\|_2^2+\hat{\bdelta}'(\tilde{\bZ}'\tilde{\bZ} - \tilde{\bZ}'\tilde{\bZ}(\bPhi + \tilde{\bZ}'\tilde{\bZ})^{-1}\tilde{\bZ}'\tilde{\bZ})\hat{\bdelta}\right)}d\sigma^2\\
&= \left(\frac{1}{2\pi}\right)^{\frac{n-p}{2}}\left(\frac{1}{\det(\bX'\bX)}\right)^{\frac{1}{2}}\frac{\det(\bPhi)^{\frac{1}{2}}}{\det(\bPhi + \tilde{\bZ}'\tilde{\bZ})^{\frac{1}{2}}}\\
        &\Gamma\left(\frac{n-p}{2}\right)\left(\frac{\|(\bI-\pw)\bY\|_2^2}{2}+\frac{\hat{\bdelta}'(\tilde{\bZ}'\tilde{\bZ} - \tilde{\bZ}'\tilde{\bZ}(\bPhi + \tilde{\bZ}'\tilde{\bZ})^{-1}\tilde{\bZ}'\tilde{\bZ})\hat{\bdelta}}{2}\right)^{-\frac{n-p}{2}}\\
\end{split}
\end{equation*}
where the last line follows from recognizing the kernel of an Inverse Gamma. 
Rearranging terms yields 
\begin{equation*}
\begin{split}
        q(\bY)=& \left(\frac{1}{2\pi}\right)^{\frac{n-p}{2}}\left(\frac{1}{\det(\bX'\bX)}\right)^{\frac{1}{2}}\frac{\det(\bPhi)^{\frac{1}{2}}}{\det(\bPhi + \tilde{\bZ}'\tilde{\bZ})^{\frac{1}{2}}}\Gamma\left(\frac{n-p}{2}\right)\left(\frac{s^2(\bY)}{2}\right)^{-\frac{n-p}{2}}\\
        &\left(n-p-d\right)^{-\frac{n-p}{2}} \left(1+\frac{\hat{\bdelta}(\bY)'(\tilde{\bZ}'\tilde{\bZ} - \tilde{\bZ}'\tilde{\bZ}(\bPhi + \tilde{\bZ}'\tilde{\bZ})^{-1}\tilde{\bZ}'\tilde{\bZ})\hat{\bdelta}(\bY)}{s^2(\bY)(n-p-d)}\right)^{-\frac{n-p}{2}}
\end{split}
\end{equation*}
The steps to derive the marginal likelihood of $\bY$ under $P$ in the denominator proceed similarly.
\begin{equation*}
\begin{split}
     p(\bY|\sigma^2) =& \left(\frac{1}{2\pi\sigma^2}\right)^{\frac{n-p}{2}}\left(\frac{1}{\det(\bX'\bX)}\right)^{\frac{1}{2}}e^{-\frac{1}{2\sigma^2}\hat{\bdelta}'\tilde{\bZ}'\tilde{\bZ}\hat{\bdelta}}e^{-\frac{1}{2\sigma^2}\|(\bI-\pw)\bY\|_2^2}   \\
     =& \left(\frac{1}{2\pi\sigma^2}\right)^{\frac{n-p}{2}}\left(\frac{1}{\det(\bX'\bX)}\right)^{\frac{1}{2}}e^{-\frac{s^2(\bY)(n-p-d)}{2\sigma^2}\left(1+\frac{\hat{\bdelta}(\bY)'\tilde{\bZ}'\tilde{\bZ}\hat{\bdelta}(\bY)}{s^2(\bY)(n-p-d)}\right)}   
\end{split}
\end{equation*}
\begin{equation*}
\begin{split}
        p(\bY)=& \left(\frac{1}{2\pi}\right)^{\frac{n-p}{2}}\left(\frac{1}{\det(\bX'\bX)}\right)^{\frac{1}{2}}\Gamma\left(\frac{n-p}{2}\right)\left(\frac{s^2(\bY)}{2}\right)^{-\frac{n-p}{2}}\\
        &\left(n-p-d\right)^{-\frac{n-p}{2}} \left(1+\frac{\hat{\bdelta}(\bY)'\tilde{\bZ}'\tilde{\bZ}\hat{\bdelta}(\bY)}{s^2(\bY)(n-p-d)}\right)^{-\frac{n-p}{2}}
\end{split}
\end{equation*}

Therefore the final likelihood ratio is 
\begin{equation*}
\begin{split}
       E_n = \frac{q(\bY)}{p(\bY)} = \frac{\det(\bPhi)^{\frac{1}{2}}}{\det(\bPhi + \tilde{\bZ}'\tilde{\bZ})^{\frac{1}{2}}}\frac{\left(1+\frac{\hat{\bdelta}(\bY)'(\tilde{\bZ}'\tilde{\bZ} - \tilde{\bZ}'\tilde{\bZ}(\bPhi + \tilde{\bZ}'\tilde{\bZ})^{-1}\tilde{\bZ}'\tilde{\bZ})\hat{\bdelta}(\bY)}{s^2(\bY)(n-p-d)}\right)^{-\frac{n-p}{2}}}{\left(1+\frac{\hat{\bdelta}(\bY)'\tilde{\bZ}'\tilde{\bZ}\hat{\bdelta}(\bY)}{s^2(\bY)(n-p-d)}\right)^{-\frac{n-p}{2}}}
\end{split}
\end{equation*}
\end{proof}
\subsubsection*{Counterexample to Maximal Invariance of $\hat{\bxi}_n$}
\begin{proof}
Consider the following counterexample. For any $\bY_n$,
write $\bY^1_n = \pwn\bY_n + (\bI_n - \pwn)\bY_n$ and let $\bY^2_n = c\pwn\bY_n + c\bR(\bI_n - \pwn)\bY_n$ for an orthogonal matrix $\bR$.
The statistic $\hat{\bxi}_n(\bY_n^1) = \hat{\bxi}_n(\bY_n^2)$, but there is no $g \in G$ such that $\bY_n^2 = g\bY_n^1$. 
We have shown that the statistic $\hat{\bxi}_n$ takes the same value, even though $\bY_n^2$ and $\bY_n^1$ do not belong to the same orbit.
Hence $\hat{\bxi}_n$ cannot be a maximal invariant under the group action on $\mathbb{R}^n$ as it takes the same value
on two distinct orbits. 
\end{proof}
\subsubsection*{Invariant Sufficiency of $\hat{\bxi}_n$}
\label{sec:invariant_sufficiency}
\begin{proof}
The summary statistics for the linear model are given by $$S(\bY_n) = (\hat{\bbeta}_n(\bY_n), \hat{\bdelta}_n(\bY_n), s^2(\bY_n)) \in \mathcal{S} = \mathbb{R}^p \times \mathbb{R}^d\times \mathbb{R}_{\geq 0} $$. 
The group action $g: \bY_n \rightarrow c\bY_n + \bX_n\balpha$ on $\mathbb{R}^n$ induces a group action on $\mathcal{S}$ as
$$g: (\hat{\bbeta}_n, \hat{\bdelta}_n, s^2_n) \rightarrow (c\hat{\bbeta}_n + \balpha, c\hat{\bdelta}_n, c^2 s^2_n).$$
Note our statistic $\hat{\bxi}_n$ can be written as functions of summary statistics $\hat{\bxi}_n(\hat{\bdelta}_n, s^2_n) = \hat{\bdelta}_n / \sqrt{s^2_n}$, which is clearly invariant under the induced group.
To see that this is maximal invariant, consider two elements of the summary statistic space $(\bbeta_1, \bdelta_1, s^2_1)$ and $(\bbeta_2, \bdelta_2, s^2_2)$ which have the same value of $\hat{\bxi}_n$
$$
\hat{\bxi}_n(\bdelta_1, s^2_1) = \hat{\bxi}_n(\bdelta_2, s^2_2)
$$
Then $g (\bbeta_1, \bdelta_1, s^2_1) = (\bbeta_2, \bdelta_2, s^2_2)$ using $\balpha = \bbeta_2 - c\bbeta_1$ and $c = \sqrt{s^2_2/s^2_1}$, which implies both elements belong to the same orbit.
Hence $\hat{\bxi}_n$ is a maximal invariant for the action of the group on $\mathcal{S}$.
\end{proof}
\subsubsection*{$E_n$ as a likelihood ratio of $\hat{\bxi}_n$}
\begin{proof}
From Theorem~\ref{thm:stein} it follows that $E_n$ is equal to the likelihood ratio of the invariantly sufficient statistic $\hat{\bxi}_n$. 
To calculate the density of $\hat{\bxi}_n$ under $Q$ and $P$ first observe that $\hat{\bdelta}_n |\bdelta, \sigma^2 \sim N(\bdelta, \sigma^2 (\tilde{\bZ}_n'\tilde{\bZ}_n)^{-1})$. 
The Bayes marginal resulting from the prior $\bdelta |\sigma^2 \sim N(\bzero, \sigma^2\bPhi^{-1})$ is $\hat{\bdelta}_n | \sigma^2 \sim N(\bzero, \sigma^2 (\bPhi^{-1} + (\tilde{\bZ}_n'\tilde{\bZ}_n)^{-1}))$. Together this implies that $\hat{\bxi}_n \sim t_{\nu_n}(\bzero, (\bPhi^{-1} + (\tilde{\bZ}_n'\tilde{\bZ}_n)^{-1}))$ where $\nu_n = n-p-d$ under the alternative $Q$.
Similarly, $\hat{\bxi}_n \sim t_{\nu_n}(\bzero, (\tilde{\bZ}_n'\tilde{\bZ}_n)^{-1})$ under $P$.
\end{proof}
\begin{lemma}
\begin{equation*}
\begin{split}
    &(\bW'\bW)^{-1} = 
   \begin{pmatrix}
    \bX'\bX & \bX'\bZ \\
    \bZ'\bX & \bZ'\bZ \\
\end{pmatrix}^{-1}\\
&=
\begin{pmatrix}
    (\bX'\bX)^{-1} + (\bX'\bX)^{-1} \bX'\bZ (\tilde{\bZ}'\tilde{\bZ})^{-1} \bZ'\bX (\bX'\bX)^{-1} & -(\bX'\bX)^{-1} \bX'\bZ \tilde{\bZ}'\tilde{\bZ}^{-1} \\
    -\tilde{\bZ}'\tilde{\bZ}^{-1} \bZ'\bX (\bX'\bX)^{-1} & \tilde{\bZ}'\tilde{\bZ}^{-1} \\
\end{pmatrix}.
\end{split}
\end{equation*}
    \label{lem:block_inverse}
\end{lemma}

\begin{proof}
This follows immediately from block matrix inverse results and the observation that $\tilde{\bZ}'\tilde{\bZ} = (\bZ'\bZ - \bZ'\bX (\bX'\bX)^{-1}\bX'\bZ) = \bZ'(\bI- \bX (\bX'\bX)^{-1}\bX')\bZ = \bZ'(\bI-\px)\bZ = \bZ'(\pw-\px)\bZ$.
\end{proof}

\begin{lemma}
    \label{lem:strong_ols}
    Suppose $\frac{1}{n}\bW_n'\bW_n$ converges almost surely to a positive definite matrix $\bOmega_{\bW}$ and $\mathbb{E}[\varepsilon_i \bw_i ]=0$, then $\hat{\bgamma}_n \overset{a.s.}{\rightarrow} \bgamma$ and $s^2_n  \overset{a.s.}{\rightarrow} \sigma^2$ and $\frac{1}{n}\tilde{\bZ}_n'\tilde{\bZ}_n \overset{a.s.}{\rightarrow} \bOmega_{\tilde{\bZ}} = \bOmega_{\bZ} - \bOmega_{\bZ\bX}\bOmega_{\bX}^{-1}\bOmega_{\bX \bZ}$, where 
\begin{equation*}
        \bOmega_\bW = \begin{pmatrix}
        \bOmega_{\bX} & \bOmega_{\bX\bZ} \\ 
        \bOmega_{\bZ\bX} & \bOmega_{\bZ} \\ 
    \end{pmatrix}  
\end{equation*}
\end{lemma}
\begin{proof}
$\frac{1}{n}\bW_n'\bvarepsilon_n = \frac{1}{n}\sum_{i=1}^n \bw_i'\varepsilon_i \rightarrow \bzero$ almost surely by the strong law because $\mathbb{E}[\bw_i \varepsilon_i]=\bzero$.
By positive definiteness of $\bOmega_\bw$ and the continuous mapping theorem, $(\bW_n'\bW_n/n)^{-1} \rightarrow \bOmega_\bw^{-1}$ almost surely. By definition of the OLS estimator $$\hat{\bgamma}_n^{ols} = \bgamma + (\bW_n'\bW_n)^{-1}n n^{-1}\bW_n'\bvarepsilon \overset{a.s.}{\rightarrow} \bgamma + \bOmega_\bw^{-1}\bzero = \bgamma.$$ For strong consistency of $\sigma^2$, $\hat{\sigma}^2_n = (\bvarepsilon_n'\bvarepsilon_n - \bvarepsilon_n'\bW_n(\bW_n'\bW_n)^{-1}\bW_n'\bvarepsilon_n)/\nu_n$. Considering the last term, $$\bvarepsilon_n'\bW_n(\bW_n'\bW_n)^{-1}\bW_n'\bvarepsilon_n/\nu_n = \bvarepsilon_n'\bW_n n^{-1} n(\bW_n'\bW_n)^{-1} n^{-1}\bW_n'\bvarepsilon_n (n/\nu_n) \overset{a.s.}{\rightarrow} \bzero$$ while $\bvarepsilon_n'\bvarepsilon_n/\nu_n \overset{a.s.}{\rightarrow} \sigma^2 $ by the strong law ($\mathbb{E}[\varepsilon_i^2] = \sigma^2$). In addition
\begin{equation*}
    \frac{1}{n}\bW_n'\bW_n = \frac{1}{n}
   \begin{pmatrix}
    \bX_n'\bX_n & \bX_n'\bZ_n \\
    \bZ_n'\bX_n & \bZ_n'\bZ_n \\
\end{pmatrix}\overset{a.s.}{\rightarrow}
\begin{pmatrix}
    \bOmega_{\bX} & \bOmega_{\bX\bZ} \\ 
    \bOmega_{\bZ\bX} & \bOmega_{\bZ} \\ 
\end{pmatrix} = \bOmega_\bw.
\end{equation*}
By the continuous mapping theorem $\left((1/n)\bW_n'\bW_n\right)^{-1} \overset{a.s.}{\rightarrow}\bOmega_\bw^{-1}$, therefore $\frac{1}{n}\tilde{\bZ}_n'\tilde{\bZ}_n \overset{a.s.}{\rightarrow} \bOmega_{\bZ} - \bOmega_{\bZ\bX}\bOmega_{\bX}^{-1}\bOmega_{\bX\bZ}$
by lemma \ref{lem:block_inverse}
\end{proof}

\begin{lemma}
    \label{lem:neumann}
For a matrix $\bT$ with $\|\bT\| < 1$
$$\left(\bI - \bT\right)^{-1} = \sum_{k=0}^\infty \bT^k$$
\end{lemma}
\begin{proof}
This is simply the Neumann series, see \citet[Theorem 18.2.16]{harville2008matrix}
\end{proof}
\begin{lemma}
\label{lem:logdet}
For a matrix $\bT$ with $\|\bT\| < 1$
$$\log \det(\bI - \bT) = -\sum_{k=1}^\infty \frac{1}{k}\Tr(\bT^k)$$
\end{lemma}
\subsection{Proof of Theorem \ref{thm:almost_sure_limit}}
\begin{proof}
By assumption $(1/n)\tilde{\bZ}_n'\tilde{\bZ}_n\overset{a.s.}{\rightarrow}\bOmega_{\tilde{\bZ}}$.
Additionally $\mathbb{E}[\varepsilon_i| \bw_i ]=0$ is satisfied by the assumptions of the linear model, therefore
 $\bxi_n \overset{a.s.}{\rightarrow} \bxi$ by lemma \ref{lem:strong_ols}. First observe that
$$\frac{1}{2n}\log\det\bPhi \rightarrow 0.$$
Similarly
\begin{equation*}
    \begin{split}
    -\frac{1}{2n}\log\det(\bPhi + \tilde{\bZ}_n'\tilde{\bZ}_n) &= -\frac{1}{2n}\log\det n(\bPhi / n + \tilde{\bZ}_n'\tilde{\bZ}_n / n)\\
    &=-\frac{1}{2n} \log n^d - \frac{1}{2n}\log\det (\bPhi / n + \tilde{\bZ}_n'\tilde{\bZ}_n / n)\\
    &\overset{a.s.}{\rightarrow} 0.
    \end{split}
\end{equation*}
To handle the second term of $\log E_n$ write
$$\log \frac{\left(1+\frac{\hat{\bxi}_n'(\tilde{\bZ}_n'\tilde{\bZ}_n - \tilde{\bZ}_n'\tilde{\bZ}_n(\bPhi + \tilde{\bZ}_n'\tilde{\bZ}_n)^{-1}\tilde{\bZ}_n'\tilde{\bZ}_n)\hat{\bxi}_n}{\nu_n}\right)}{\left(1+\frac{\hat{\bxi}_n'\tilde{\bZ}_n'\tilde{\bZ}_n\hat{\bxi}_n}{\nu_n}\right)} = \log \frac{1+b_n}{1+a_n}$$
where $b_n = \bxi_n ' (\bPhi^{-1} + (\tilde{\bZ}_n'\tilde{\bZ}_n)^{-1})^{-1} \bxi_n / \nu_n$ and $a_n = \bxi_n' \tilde{\bZ}_n'\tilde{\bZ}_n \bxi_n / \nu_n$. Writing $$(\bPhi^{-1} + (\tilde{\bZ}_n'\tilde{\bZ}_n)^{-1})^{-1} = \bPhi(\bI -\bT_n)^{-1},$$ where $\bT_n = - (\tilde{\bZ}_n'\tilde{\bZ}_n)^{-1}\bPhi$, we have for
large $n$ by lemmas \ref{lem:strong_ols} and \ref{lem:neumann} 
\begin{equation*}
    \begin{split}
        b_n &= \bxi'\bPhi\bxi / \nu_n + O_{a.s.}(n^{-2}),\\
        a_n &= \bxi' \bOmega_{\tilde{\bZ}} \bxi + o_{a.s.}(1),
    \end{split}
\end{equation*}
and therefore
$$- \frac{1}{n} \frac{\nu_n + d}{2}\log \frac{1+b_n}{1+a_n} = - \frac{1}{2} \log \frac{1}{1 + \bxi'\bOmega_{\tilde{\bZ}}\bxi} + o_{a.s.}(1)$$.
\end{proof}

\subsection{Proof of Corollary \ref{cor:contrasts}}
Consider the regression $\bY_n \sim N(\bX_n\bbeta + \bZ_n\bdelta, \sigma^2 \bI_n)$ with the null hypothesis $H_0: \bE\bdelta = \bzero$ for $\bE \in \mathbb{R}^{k\times d}$ with $k\leq d$.
Let $P_E = \bE'(\bE\bE')^{-1}\bE$ be the projection onto the \textit{row} space (orthogonal complement of the null space) of $\bE$.
Let $\bN$ be an orthonormal basis that spans the null space of $\bE$.
Let $P_N = \bN\bN'$ denote the projection onto the nullspace of $\bE$, then $P_N = \bI - P_E$.
Decompose $\bdelta = \bdelta_1 + \bdelta_2$ where $\bdelta_1 = P_N\bdelta$ and $\bdelta_2 = P_E \bdelta$. 
Therefore we can write $\bdelta_1 = \bN\bv_1$ and $\bdelta_2 = \bE'\bv_2$ for some vectors $\bv_1 \in \mathbb{R}^{d-k}$ and $\bv_2 = (\bE\bE')^{-1}\bE\bdelta\in \mathbb{R}^k$.
Hence the outcome model can be written as 
$$\bY_n \sim N(\bX_n\bbeta + \bZ_n\bN\bv_1 + \bZ_n\bE'\bv_2, \sigma^2 \bI_n),$$ which can be redefined as
$$\bY_n \sim N(\bX_n^{\dagger}\bbeta^\dagger + \bZ_n^\dagger\bdelta^\dagger, \sigma^2 \bI_n)$$
where $\bX_n^{\dagger} = [\bX_n, \bZ_n\bN]$, $\bZ_n^\dagger = \bZ_n\bE'$, $\bbeta^\dagger = \begin{pmatrix} \bbeta \\ \bv_1 \end{pmatrix}\in \mathbb{R}^{p^\dagger}$, where $p^\dagger = p + d - k$, and $\bdelta^\dagger = \bv_2 \in \mathbb{R}^{d^\dagger}$ where $d^\dagger = k$.
Additionally the implied prior on $\bdelta^\dagger \sim N(\bzero, \sigma^2 (\bPhi^\dagger)^{-1})$ where $(\bPhi^\dagger)^{-1} =  (\bE\bE')^{-1}\bE\bPhi^{-1}\bE' (\bE\bE')^{-1}$

This is exactly the form of the linear model considered in theorem \ref{thm:bayesian_lm_e_variable}, and so $\bE\bdelta=0$ can be tested by using the $e$-variable for testing $\bdelta^\dagger = \bzero$.

\begin{equation}
    \label{eq:E_n_delta_version}
    E_n=\sqrt{\frac{\det(\bPhi)}{\det(\bPhi + \tilde{\bZ}_n^{\dagger '}\tilde{\bZ}_n^\dagger)}} \frac{\left(1+\frac{\hat{\bdelta}_n^{\dagger '}(\tilde{\bZ}_n^{\dagger '}\tilde{\bZ}_n^\dagger - \tilde{\bZ}_n^{\dagger '}\tilde{\bZ}_n^\dagger(\bPhi + \tilde{\bZ}_n^{\dagger '}\tilde{\bZ}_n^\dagger)^{-1}\tilde{\bZ}_n^{\dagger '}\tilde{\bZ}_n^\dagger)\hat{\bdelta}_n}{s^2_n\nu_n}\right)^{-\frac{\nu_n + k}{2}}}{\left(1+\frac{\hat{\bdelta}_n^{\dagger '}\tilde{\bZ}_n^{\dagger '}\tilde{\bZ}_n^\dagger\hat{\bdelta}_n^\dagger}{s^2_n\nu_n}\right)^{-\frac{\nu_n + k}{2}}},
\end{equation}
where we have substituted $\hat{\bxi}_n = \hat{\bdelta}_n / \sqrt{s^2_n}$.
As before, testing a specific point null $\bE\bdelta = \be_0$ can be achieved by replacing $\bY_n$ by $\bY_n - \bZ_n \be_0$.
However, we do not need to define $\bX_n^\dagger$ or $\bZ_n^\dagger$ explicitly, as we rewrite expressions in terms of the original parameterization.
First, observe that
$s^2_n = \frac{\bY_n'(\bI_n - \bP_{\bW^\dagger})\bY_n }{n-p^\dagger - d^\dagger}= \frac{\bY_n'(\bI-\pw)\bY_n}{n-p-d}$
Secondly, $\tilde{\bZ}_n^\dagger$ can be written in terms of $\tilde{\bZ}_n$ and $\bE$ as
\begin{equation*}
    \begin{split}
\tilde{\bZ_n^\dagger} &= (\bI_n - \bP_{\bX_n^\dagger})\bZ_n^\dagger\\
&= (\bI_n - \bP_{\bX_n^\dagger})\bZ_n\bE'\\
 &= (\bI_n - \bP_{\bX_n})\bZ_n\bE'\\
 &= \tilde{\bZ_n}\bE'\\
    \end{split}
\end{equation*}
where the second line follows because projecting $\bZ_n\bE'$ onto $C(\bX_n^\dagger)$ is equivalent to projecting onto $C(\bX_n)$ as the components $\bZ_n\bE'$ (in the row space of $\bE$) and $\bZ_n\bN$ (in the null space of $\bE$) are orthogonal.
Therefore by the Frisch-Waugh-Lovell theorem
$$\hat{\bdelta}_n^\dagger = (\tilde{\bZ_n^\dagger}'\tilde{\bZ_n^\dagger})^{-1}\tilde{\bZ_n^\dagger}'\bY_n = (\bE\tilde{\bZ_n}'\tilde{\bZ}_n\bE')^{-1}\bE\tilde{\bZ_n}' \bY_n = (\bE\tilde{\bZ_n}'\tilde{\bZ}_n\bE')^{-1}\bE\tilde{\bZ}_n'\tilde{\bZ}_n\hat{\bdelta}_n, $$
and so 
\begin{equation*}
    \begin{split}
        \hat{\bdelta}_n^{\dagger '}\tilde{\bZ_n^\dagger}'\tilde{\bZ_n^\dagger}\hat{\bdelta}_n^\dagger =& \hat{\bdelta}_n\tilde{\bZ}_n'\tilde{\bZ}_n\bE'(\bE\tilde{\bZ_n}'\tilde{\bZ}_n\bE')^{-1}\bE\tilde{\bZ}_n'\tilde{\bZ}_n\hat{\bdelta}_n\\
         =& \hat{\bdelta}_n\bE'(\bE(\tilde{\bZ_n}'\tilde{\bZ}_n)^{-1}\bE')^{-1}\bE\hat{\bdelta}_n.\\
    \end{split}
\end{equation*}

\subsection{Proof of Theorem \ref{thm:optimal_phi_frequentist}}
The expected value of the log $e$-variable for testing a single coefficient can be written as
\begin{equation*}
    \begin{split}
        \mathbb{E}[\log E_n] =& \frac{1}{2}\log \phi - \frac{1}{2}\log(\phi + \|\tilde{\bZ}_n\|^2)\\
        &+ \frac{\nu_n + 1}{2}\mathbb{E}\left[\log\left(1 + \frac{\|\tilde{\bZ}_n\|_2^2\hat{\xi}_n^2}{\nu_n}\right)\right]\\
        &- \frac{\nu_n + 1}{2}\mathbb{E}\left[\log\left(1 + \frac{\phi}{\phi + \|\tilde{\bZ}_n\|_2^2}\frac{\|\tilde{\bZ}_n\|_2^2\hat{\xi}_n^2}{\nu_n}\right)\right]\\
    \end{split}
\end{equation*}
Defining $X_n =  \frac{\phi}{\phi + \|\tilde{\bZ}_n\|_2^2}\frac{\|\tilde{\bZ}_n\|_2^2\hat{\xi}_n^2}{\nu_n}$, the last expectation can be expressed as
\begin{equation*}
    \begin{split}
        \mathbb{E}\left[\log(1+X_n)\right] &= \mathbb{E}\left[\log(1+X_n)1_{\{X_n < 1\}}\right] + \mathbb{E}\left[\log(1+X_n)1_{\{X_n \geq 1\}}\right]\\
        &= \mathbb{E}\left[\sum_{k=1}^\infty (-1)^{k+1}\frac{X_n^k}{k}1_{\{X_n < 1\}} \right] + \mathbb{E}\left[\log(1+X_n)1_{\{X_n \geq 1\}}\right]\\
        &= \mathbb{E}[X_n 1_{\{X_n < 1\}}] + \mathbb{E}\left[\sum_{k=2}^\infty (-1)^{k+1}\frac{X_n^k}{k}1_{\{X_n < 1\}} \right] + \mathbb{E}\left[\log(1+X_n)1_{\{X_n \geq 1\}}\right]\\
    \end{split}
\end{equation*}
The second and third terms are sandwiched as
\begin{equation*}
    \begin{split}
        -\frac{1}{2}\mathbb{E}[X_n^2] &\leq \mathbb{E}\left[\sum_{k=2}^\infty (-1)^{k+1}\frac{X_n^k}{k}1_{\{X_n < 1\}} \right] \leq 0 \\
        0 & \leq \mathbb{E}\left[\log(1+X_n)1_{\{X_n \geq 1\}}\right] \leq \mathbb{E}[X_n^2]\\
    \end{split}
\end{equation*}
where the last line follows from $\log(1+X_n)1_{\{X_n \geq 1\}} \leq X_n 1_{\{X_n \geq 1\}} \leq X_n^2$. 
Recall $\|\tilde{\bZ}_n\|_2^2\hat{\xi}_n^2 \sim F(1, \nu_n, \|\tilde{\bZ}_n\|_2^2\xi^{\star 2})$.
Let $\lambda_F =  \|\tilde{\bZ}_n\|_2^2\xi^{\star 2}$.
Then, the expectation of $\log(1+X_n)$ can be written as
\begin{equation*}
    \begin{split}
        \mathbb{E}\left[\log(1+X_n)\right] &= \mathbb{E}[X_n 1_{\{X_n < 1\}}] + O(n^{-2}),\\
    \end{split}
\end{equation*}
as $\mathbb{E}[X_n^2] = \left(\frac{\phi}{\phi + \|\tilde{\bZ}_n\|_2^2}\frac{1}{\nu_n}\right)^2 \mathbb{E}[(\|\tilde{\bZ}_n\|_2^2\hat{\xi}_n^2)^2]$ where $\mathbb{E}[(\|\tilde{\bZ}_n\|_2^2\hat{\xi}_n^2)^2]=\nu_n^2 (\lambda_F^2 + 6\lambda_F + 4)/ ((\nu_n - 2)(\nu_n-4))$.
Moreover because $E[X_n] = E[X_n 1_{\{X_n < 1\}}] + E[X_n 1_{\{X_n \geq 1\}}]$ and $E[X_n 1_{\{X_n \geq 1\}}] = O(n^{-2})$, we can write
$E[X_n 1_{\{X_n < 1\}}] = E[X_n] + O(n^{-2})$. Substituting this into the expression for $\mathbb{E}[\log E_n]$ and replacing $X_n$ with its definition gives
for the log $e$-variable becomes
\begin{equation*}
    \begin{split}
        \mathbb{E}[\log E_n] =& \frac{1}{2}\log \phi - \frac{1}{2}\log(\phi + \|\tilde{\bZ}_n\|^2)\\
        &+ \frac{\nu_n + 1}{2}\mathbb{E}\left[\log\left(1 + \frac{\|\tilde{\bZ}_n\|_2^2\hat{\xi}_n^2}{\nu_n}\right)\right]\\
        &- \frac{\nu_n + 1}{2}\frac{\phi}{\phi + \|\tilde{\bZ}_n\|_2^2}\frac{1}{\nu_n} \frac{\nu_n}{\nu_n - 2}\left(1+ \|\tilde{\bZ}_n\|_2^2\xi^{\star 2}  \right)+ O(n^{-2})\\
    \end{split}
\end{equation*}
The value $\phi^\star_n$ that maximizes the leading order term is
\begin{equation*}
    \phi^\star_n = \frac{1}{\frac{\nu_n+1}{\nu_n-2}\left(\frac{1}{\|\tilde{\bZ}_n\|_2^2} + \xi^{\star 2}\right) - \frac{1}{ \|\tilde{\bZ}_n\|_2^2}}
\end{equation*}
As $\nu_n \rightarrow \infty$ and $\|\tilde{\bZ}_n\|_2^2=\Omega(n)$, $\lim_{n \to \infty} \phi^\star_n = (\xi^\star)^{-2}$.

\subsection{Proof of Theorem \ref{thm:zellner_e_variable}}
Recall
\begin{equation*}
    \begin{split}
                E_n=&\sqrt{\frac{\det(g\bOmega_{\tilde{\bZ}})}{\det(g\bOmega_{\tilde{\bZ}} + \tilde{\bZ}_n'\tilde{\bZ}_n)}} \frac{\left(1+\frac{\hat{\bxi}_n'(\tilde{\bZ}_n'\tilde{\bZ}_n - \tilde{\bZ}_n'\tilde{\bZ}_n(g\bOmega_{\tilde{\bZ}} + \tilde{\bZ}_n'\tilde{\bZ}_n)^{-1}\tilde{\bZ}_n'\tilde{\bZ}_n)\hat{\bxi}_n}{\nu_n}\right)^{-\frac{\nu_n + d}{2}}}{\left(1+\frac{\hat{\bxi}_n'\tilde{\bZ}_n'\tilde{\bZ}_n\hat{\bxi}_n}{\nu_n}\right)^{-\frac{\nu_n + d}{2}}},\\
                G_n=&\left(\frac{g}{g+n}\right)^{\frac{d}{2}}\left(\frac{1 + \frac{g}{g+n} \frac{d}{\nu_n} F_n}{1+\frac{d}{\nu_n} F_n}\right)^{-\frac{\nu_n + d}{2}},
    \end{split}
\end{equation*}
where $F_n = \bxi_n'\tilde{\bZ}_n'\tilde{\bZ}_n\bxi_n / d$, $\nu_n = n-p-d$.
We would like to calculate the order of $| \log E_n - \log G_n|$.
By assumption $\frac{1}{n}\tilde{\bZ}_n'\tilde{\bZ}_n \overset{a.s.}{\rightarrow} \bOmega_{\tilde{\bZ}}$, which by lemma \ref{lem:strong_ols} implies $\bxi_n \rightarrow \bxi$ and $\frac{d}{\nu_n}F_n \rightarrow \bxi'\bOmega_{\tilde{\bZ}}\bxi$.
Let $\bR_n = \bOmega_{\tilde{\bZ}} - \frac{1}{n}\tilde{\bZ}_n'\tilde{\bZ}_n$, then by assumption $\bR_n = o_{a.s.}(1)$.
Consider the determinant terms
\begin{equation*}
    \frac{\det g\bOmega_{\tilde{\bZ}} }{\det(g\bOmega_{\tilde{\bZ}} + \tilde{\bZ}_n'\tilde{\bZ})} = \frac{\det g\bOmega_{\tilde{\bZ}} }{\det((g + n)\bOmega_{\tilde{\bZ}} - n \bR_n)} = \left(\frac{g}{g+n}\right)^d\frac{1}{\det \left(\bI - \frac{n}{g+n}\bOmega_{\tilde{\bZ}}^{-1}\bR_n\right)}.\\
\end{equation*}
The difference determinant can then be written via lemma \ref{lem:logdet} as 
\begin{equation*}
    \begin{split}
        \frac{1}{2}\log\left(\frac{\det g\bOmega_{\tilde{\bZ}} }{\det(g\bOmega_{\tilde{\bZ}} + \tilde{\bZ}_n'\tilde{\bZ})} \right)=& \frac{d}{2}\log\left(\frac{g}{g+n}\right) - \frac{1}{2}\log\det \left(\bI - \frac{n}{g+n}\bOmega_{\tilde{\bZ}}^{-1}\bR_n\right)\\
=&\frac{d}{2}\log\left(\frac{g}{g+n}\right) -\frac{1}{2}\sum_{k=1}^\infty \frac{1}{k}\Tr\left(\left(\frac{n}{g+n}\bOmega_{\tilde{\bZ}}^{-1}\bR_n\right)^k\right)\\
    \end{split}
\end{equation*}
Now consider the term
$$b_n = \frac{\bxi_n(\tilde{\bZ}_n'\tilde{\bZ}_n - \tilde{\bZ}_n'\tilde{\bZ}_n(g\bOmega_{\tilde{\bZ}} + \tilde{\bZ}_n'\tilde{\bZ}_n)^{-1}\tilde{\bZ}_n'\tilde{\bZ}_n)\bxi_n}{\nu_n}$$
First observe that
\begin{equation*}
    \begin{split}
        \tilde{\bZ}_n'\tilde{\bZ}_n(g\bOmega_{\tilde{\bZ}} + \tilde{\bZ}_n'\tilde{\bZ}_n)^{-1}\tilde{\bZ}_n'\tilde{\bZ}_n
        &=\tilde{\bZ}_n'\tilde{\bZ}_n(\frac{g+n}{n}\tilde{\bZ}_n'\tilde{\bZ}_n + g \bR_n)^{-1}\tilde{\bZ}_n'\tilde{\bZ}_n\\
        &=\frac{n}{g+n}(\bI_n - \bT_n)^{-1}\tilde{\bZ}_n'\tilde{\bZ}_n\\
        &=\frac{n}{g+n}\tilde{\bZ}_n'\tilde{\bZ}_n + \frac{n}{g+n} \sum_{k=1}^\infty \bT_n^k\tilde{\bZ}_n'\tilde{\bZ}_n\\
    \end{split}
\end{equation*}
where $\bT_n = -\frac{gn}{g+n}  \bR_n(\tilde{\bZ}_n'\tilde{\bZ}_n)^{-1}$, and so $b_n$ can be written as
\begin{equation*}
        b_n=\frac{g}{g+n}\frac{d}{\nu_n}F_n  + \frac{n}{\nu_n(g+n)}\sum_{k=1}^\infty \bxi_n ' \bT_n^k\tilde{\bZ}_n'\tilde{\bZ}_n\bxi_n.\\
\end{equation*}
It follows that 
\begin{equation*}
    \begin{split}
            \log(1+b_n) =& \log(1 + \frac{g}{g+n}\frac{d}{\nu_n}F_n)\\
            &+\log\left(1 + \frac{1}{1 + \frac{g}{g+n}\frac{d}{\nu_n}F_n}\frac{n}{\nu_n(g+n)}\sum_{k=1}^\infty \bxi_n ' \bT_n^k\tilde{\bZ}_n'\tilde{\bZ}_n\bxi_n\right)\\
        \end{split}
\end{equation*}
The term $(d / \nu_n) F_n \overset{a.s.}{\rightarrow} \bxi' \bOmega_{\tilde{\bZ}}\bxi$, and so the term preceding the sum behaves like $\sim n^{-1}$
\begin{equation*}
    \begin{split}
        -\frac{\nu_n + d}{2}\log(1+b_n) =& -\frac{\nu_n + d}{2}\log(1 + \frac{g}{g+n}\frac{d}{\nu_n}F_n)\\
        &- \frac{\nu_n + d}{2}\left(\frac{1}{1 + \frac{g}{g+n}\frac{d}{\nu_n}F_n}\frac{n}{\nu_n(g+n)}\right)\frac{gn}{g+n}\bxi_n'\bR_n\bxi_n\\
        &+ O_{a.s.}\left(\frac{\bR_n}{n}\right)
    \end{split}
\end{equation*}
Therefore
\begin{equation*}
    \begin{split}
        | \log E_n - \log G_n | &= \frac{1}{2}\frac{n}{g+n}\Tr\left(\bOmega_{\tilde{\bZ}}^{-1}\bR_n\right)\\
        &\hspace{1cm} - \frac{\nu_n + d}{2}\left(\frac{1}{1 + \frac{g}{g+n}\frac{d}{\nu_n}F_n}\frac{n}{\nu_n(g+n)}\right)\frac{gn}{g+n}\bxi_n'\bR_n\bxi_n\\
        &\hspace{1cm}+ O_{a.s.}\left(\frac{\bR_n}{n}\right)\\
        &= \frac{1}{2}\Tr\left(\bOmega_{\tilde{\bZ}}^{-1}\bR_n\right) - \frac{g}{2}\bxi_n ' \bR_n\bxi_n + O_{a.s.}\left(\frac{\bR_n}{n}\right)
    \end{split}
\end{equation*}

\subsection{Proof of Proposition \ref{prop:marginalization_consistency}}
We will show that the $(1,1)$-block of $(\tilde{\bZ}'\tilde{\bZ})^{-1}$ coincides with $(\bZ_1^{\star '}\bZ_1^\star)^{-1}$.
\begin{proof}
$\tilde{\bZ} = (\bI - \bP_{\bX})\bZ = (\bI - \bP_{\bX})[\bZ_1, \bZ_2] = [\tilde{\bZ}_1, \tilde{\bZ}_2]$.
From the block matrix inverse and Schur complement formulas
\begin{equation*}
    \begin{split}
        ((\tilde{\bZ}'\tilde{\bZ})^{-1})_{11} =& (\tilde{\bZ}_1'\tilde{\bZ}_1 - \tilde{\bZ}_1'\tilde{\bZ}_2(\tilde{\bZ}_2'\tilde{\bZ}_2)^{-1}\tilde{\bZ}_2'\tilde{\bZ}_1)^{-1},\\
        =& (\bZ_1'(\bI - \bP_{\bX} - \tilde{\bZ}_2(\tilde{\bZ}_2'\tilde{\bZ}_2)^{-1}\tilde{\bZ}_2')\bZ_1)^{-1},\\
        =& (\bZ_1'(\bI - \bP_{\bX} - \bP_{\tilde{\bZ}_2})\bZ_1)^{-1},\\
        =& (\bZ_1'(\bI - \bP_{[\bX, \bZ_2]})\bZ_1)^{-1},\\
        =& (\bZ_1'(\bI - \bP_{\bX^\star})\bZ_1)^{-1},\\
        =& (\bZ_1^{\star '}\bZ_1^\star)^{-1}.\\
    \end{split}
\end{equation*}
\end{proof}

\subsection{Comparisons with Fixed-$n$ Procedures}
\label{sec:fixed_n_comparison}
In this section we contrast our anytime-valid procedure against the analogous fixed-$n$ procedure under two different frequentist alternatives, and under the null hypothesis.
We simulate independent outcomes from the following regression model.
$$y_i = 1 + \bx_i'\bbeta + Z_i\delta + \varepsilon_i$$
where $Z_i = (T_i - \rho)$ with $T_i \sim \text{Bernoulli}(\rho)$ and $\rho = 0.5$, $\bbeta' = (0,1,2)$, $\bx_i \sim N(\bzero_3, \bSigma)$ with $\bSigma_{ij}=0.8^{|i-j|}$, and $\varepsilon_i \sim N(0,1)$.
Let $F_n = \hat{\xi}_n^2 \|\tilde{\bZ}_n\|^2_2 = (\hat{\delta}_n / s_n)^2 \|\tilde{\bZ}_n\|^2_2 \sim f(1,\nu_n, \|\tilde{\bZ}_n\|_2^2 \xi^2)$ be the $f$-statistic for testing the univariate $(d=1)$ null hypothesis $\delta = 0$.
The $\alpha$-level fixed-$n$ test rejects the null hypothesis when $F_n > f_{1,\nu_n}^{1-\alpha}$, where $f_{1,\nu_n}^{1-\alpha}$ is the $(1-\alpha)$-quantile of the $f(1,\nu_n, 0)$ distribution.
The sample size $n^\star(\xi_{MDE})$ required to achieve a power $\geq c$ for all $\xi$ with $|\xi| > \xi_{MDE}$ is determined by finding the smallest integer $n$ satisfying $\mathbb{P}[F_n^{\xi_{MDE}} > f_{1,\nu_n}^{1-\alpha}] \geq c$, where $F_n^{\xi_{MDE}} \sim f(1, \nu_n, \|\tilde{\bZ}_n\|_2^2 \xi_{MDE}^2)$.
In this simulation the regressors are random, not fixed, so we use the approximation $\|\tilde{\bZ}_n\|_2^2 \approx n \rho(1-\rho)$.
Note that being able to specify an MDE implies some prior knowledge about the expected effect size or, at least, the size of a practically relevant effect.
In order to provide a fair comparison, the same information should be input to the anytime-valid procedure, which is encoded by the mixture precision.

To ensure that the anytime-valid procedure is well calibrated against the fixed-$n$ procedure tuned to the MDE, we consider two strategies for choosing the hyperparameter $g$.
The first approach chooses $g$ to maximize the worst-case growth rate of the $e$-process as per theorem \ref{thm:optimal_phi_frequentist}, which we denote $g^g(\xi_{MDE}) = 1 / (\xi_{MDE}^2 \rho(1-\rho))$.
The second approach minimizes the width of the confidence sequence, following the discussion in Section \ref{sec:zellner_g_prior}, at the fixed-$n$ sample size $n^\star(\xi_{MDE})$, which we denote $g^w(n^\star(\xi_{MDE}))$.
For $\alpha = 0.01$, $c = 0.95$ and $\xi_{MDE} = 0.2$, we find $n^\star(\xi_{MDE}) = 1785$, $g^g(\xi_{MDE}) = 100$, and $g^w(\xi_{MDE}) = 151.29$. 
We perform $10^4$ simulations and examine the stopping time distributions.
We compare the proportion of nulls rejected by $n^\star(\xi_{MDE})$ and the average stopping times.
We also remark that the anytime-valid procedure has power 1 \citep{power1} by theorem \ref{thm:almost_sure_limit}, in the sense that it eventually rejects the null, under the alternative, with probability $1$.
We recognize that practitioners may not want to run the procedure indefinitely. Instead, we encourage practitioners to reason about the size of a practically meaningful effect and continue to monitor the confidence sequence.
Once the confidence sequence becomes a sufficiently small interval around zero such that all practically meaningful effects are excluded, then the experiment can be responsibly concluded.

\subsubsection{$\xi = \xi_{MDE}$}
\label{sec:fixed_n_comparison_xi_equal_mde}
This case is most favorable to the fixed-$n$ test, corresponding to a scenario in which the effect size has been guessed exactly, using $\xi = \xi_{MDE}$.
The stopping time distributions are visualized in Figure \ref{fig:stopping_times_comparison_0.2}.
At $n^\star(\xi_{MDE}) = 1785$ the fixed-$n$ test, anytime-valid test with $g=g^g(\xi_{MDE})$, and anytime-valid test with $g=g^w(n^\star(\xi_{MDE}))$, reject $95.02\%$, $81.69\%$, and $81.68\%$ of nulls respectively.
The average stopping times with $g = g^g(\xi_{MDE})$ and $g = g^w(n^\star(\xi_{MDE}))$ are $1214.45$ and $1236.03$ respectively.
In addition, the anytime-valid procedures rejected all nulls \textit{eventually}.
As expected, the tuning parameter $g^g(\xi_{MDE})$ leads to a smaller average stopping time than $g^w(n^\star(\xi_{MDE}))$ because of the optimality result in Theorem \ref{thm:optimal_phi_frequentist}.
\begin{figure}
\centering
\includegraphics[width=1.0\textwidth]{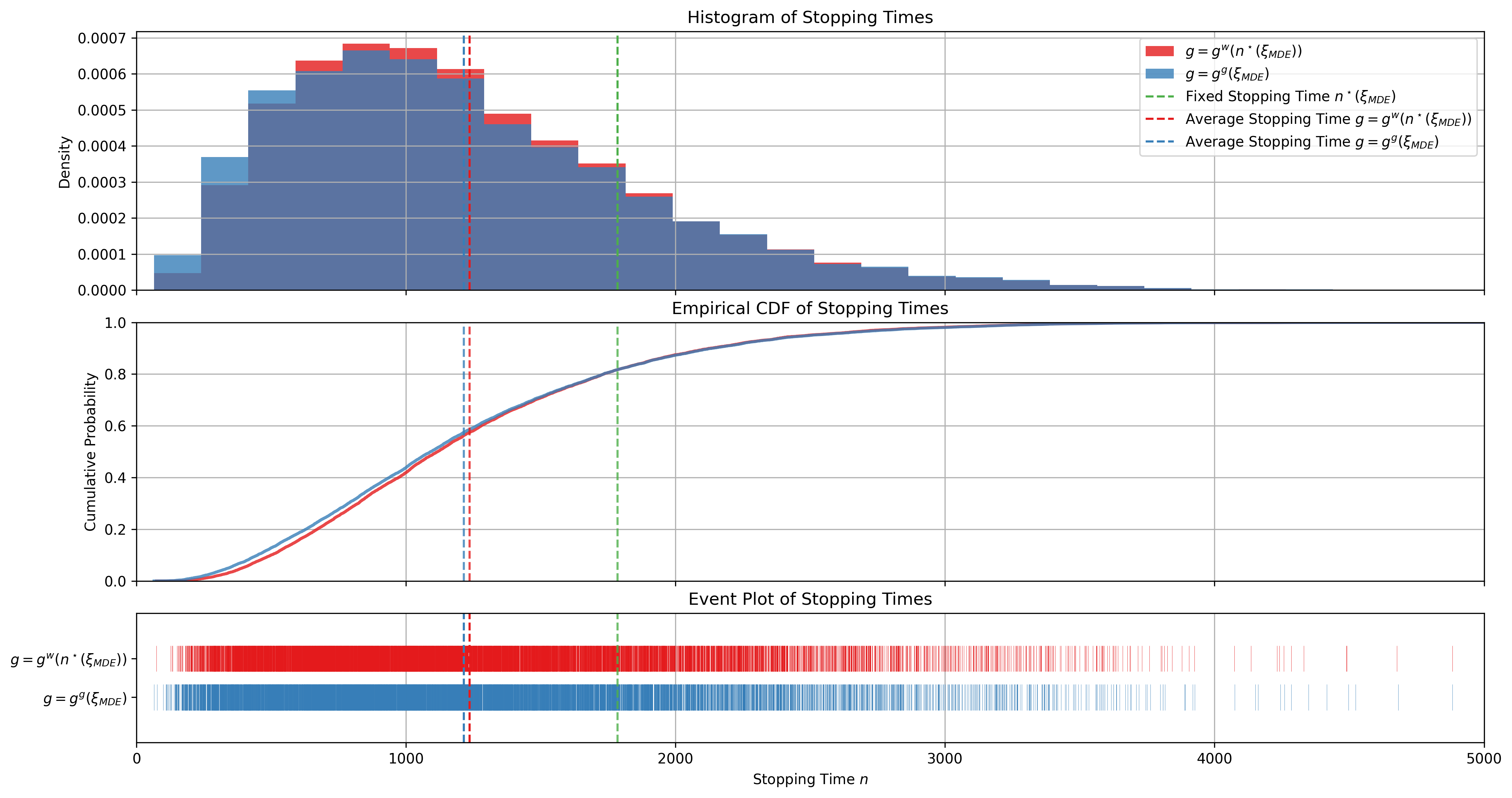}
\caption{Stopping time distributions for the fixed-$n$ analysis and the anytime-valid analyses, with $g^g(\xi_{MDE})$ and $g^w(n^\star(\xi_{MDE}))$, when $\xi = \xi_{MDE} = 0.2$.}
\label{fig:stopping_times_comparison_0.2}
\end{figure}

\subsubsection{$\xi > \xi_{MDE}$}
This case is more favorable to the anytime-valid test, corresponding to a scenario in which the effect size is larger than the MDE, $\xi = 2 \xi_{MDE} = 0.4$, which occurs when the treatment is unexpectedly harmful or effective, and early stopping even more desirable.
Note that the tuning parameter $g$ is still configured against $\xi_{MDE}$, not the true $\xi$.
Figure \ref{fig:stopping_times_comparison_0.4} shows the stopping time distributions for this case.
By $n^\star(\xi_{MDE}) = 1785$, all procedures had rejected $100\%$ of the nulls.
The average stopping times for the anytime-valid test with $g = g^g(\xi_{MDE})$ and $g = g^w(n^\star(\xi_{MDE}))$ are $350.37$ and $376.06$ respectively.
This simulation illustrates an important point that the average stopping time is dramatically less than $n^\star(\xi_{MDE}) = 1785$ when the effect size is under-estimated.
This is critical for mitigating the risk of experimentation - the more harmful a treatment, the larger the effect size and the sooner the experiment can be stopped, preventing further harm to subsequent units.
\begin{figure}
\centering
\includegraphics[width=1.0\textwidth]{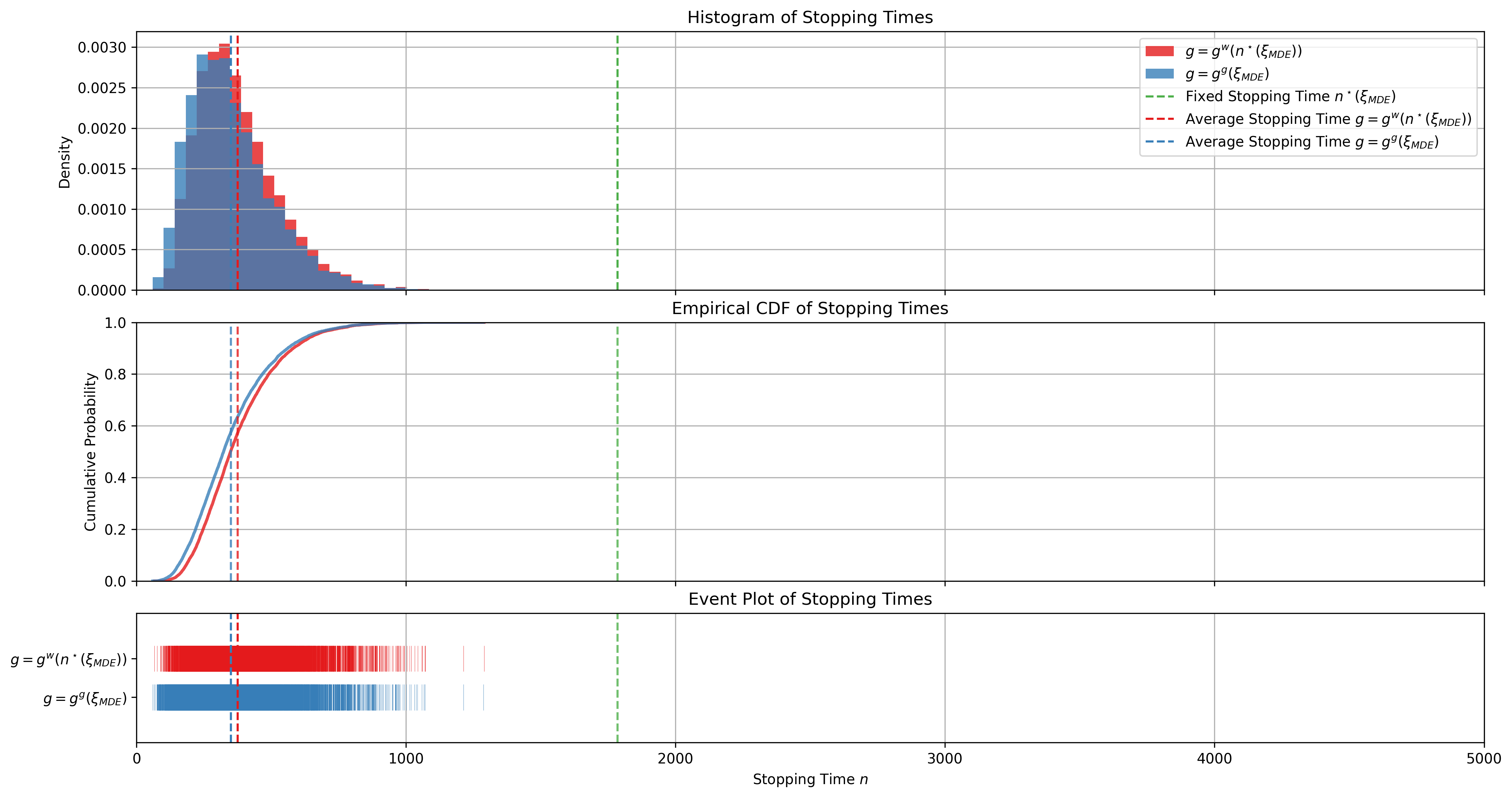}
\caption{Stopping time distributions for the fixed-$n$ analysis and the anytime-valid analyses, with $g^g(\xi_{MDE})$ and $g^w(n^\star(\xi_{MDE}))$, when $\xi = 2\xi_{MDE} = 0.4$.}
\label{fig:stopping_times_comparison_0.4}
\end{figure}
\subsubsection{$\xi = 0$}
We now consider when the null hypothesis is true. Figure \ref{fig:stopping_times_comparison_0.0} shows the Type-I error increasing with the sample size $n$ toward the nominal $\alpha$.
At $n^\star(\xi_{MDE}) = 1785$ the fixed-$n$ test, anytime-valid test with $g=g^g(\xi_{MDE})$, and anytime-valid test with $g=g^w(n^\star(\xi_{MDE}))$, reject $1.07\%$, $0.04\%$, and $0.03\%$ of nulls respectively.
\begin{figure}
\centering
\includegraphics[width=1.0\textwidth]{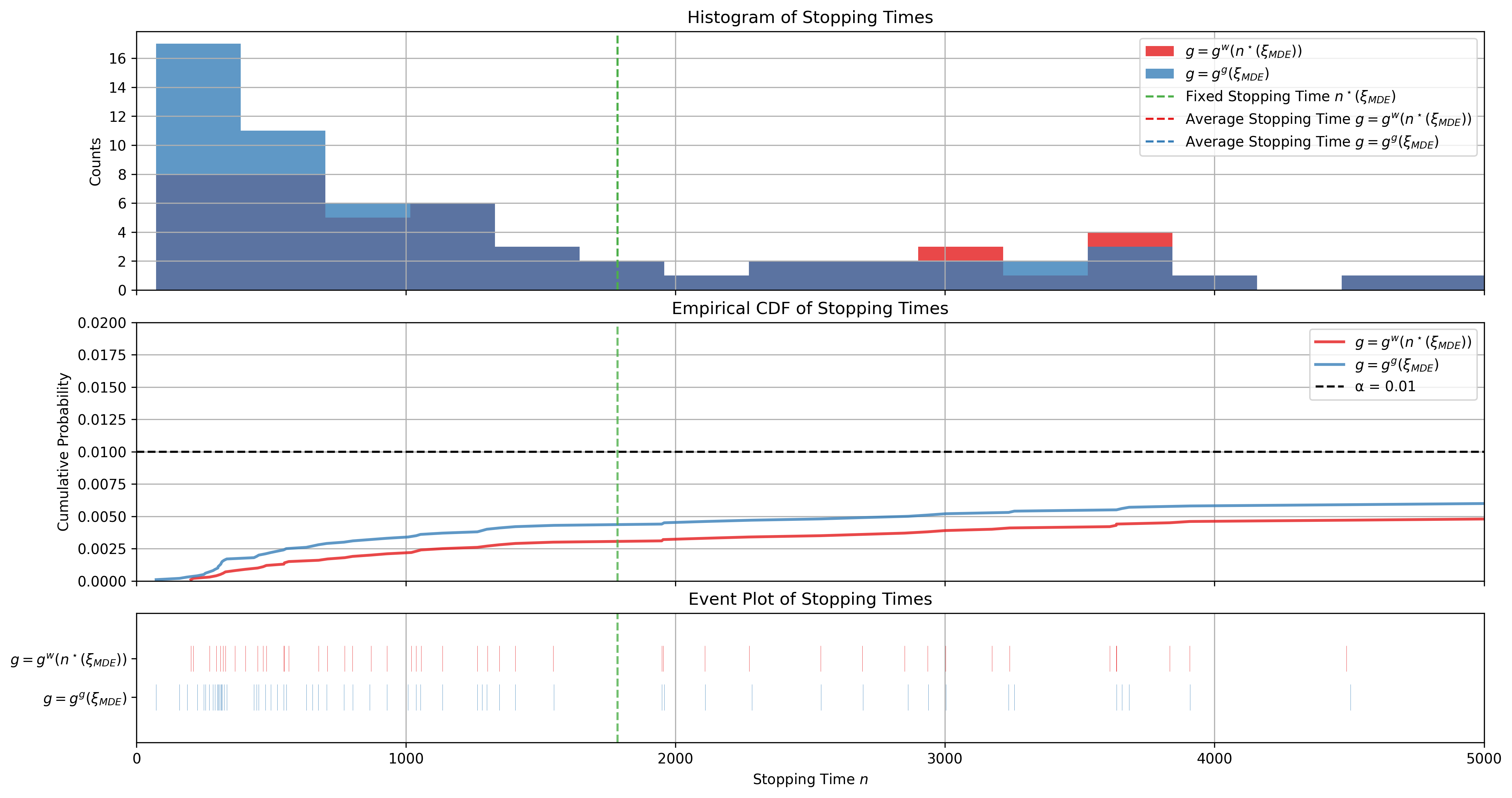}
\caption{Stopping time distributions for the fixed-$n$ analysis and the anytime-valid analyses, with $g^g(\xi_{MDE})$ and $g^w(n^\star(\xi_{MDE}))$, when $\xi = 0$.}
\label{fig:stopping_times_comparison_0.0}
\end{figure}

\section{Proofs of Section \ref{sec:heteroskedastic}}

\subsection{Proof of Lemma \ref{lem:multivariate_e_process}}
\begin{proof}
    Let
    \begin{equation}
        E_n(\bmu) := e^{\sum_{i=1}^{n} \left(\bG_i\bSigma^{-1}\bmu - \frac{1}{2}\bmu'\bSigma^{-1}\bmu\right)} = e^{\left(\bG_i\bSigma^{-1}\bmu - \frac{1}{2}\bmu'\bSigma^{-1}\bmu\right) }M_{n-1}(\bmu)
    \end{equation}
    Taking the conditional expectation gives
    \begin{equation}
        \mathbb{E}[E_n(\bmu)|\mathcal{F}_{n-1}] = \mathbb{E}\left[e^{\bG_i'\bSigma^{-1}\bmu} | \mathcal{F}_{n-1}\right]e^{-\bmu'\bSigma^{-1}\bmu}M_{n-1}(\bmu) = M_{n-1}(\bmu)  
    \end{equation}
    where the last line follows from the moment generating function of a $N(\bzero, \bSigma)$ random vector. 
    Hence, $E_n(\bmu)$ is a nonnegative supermartingale.
    Define
    \begin{equation}
        E_n := \int E_n(\bmu) dF(\bmu)= \left(\frac{g}{g + n}\right)^{\frac{d}{2}}e^{\frac{1}{2}\frac{n^2}{g+n}\left(\bar{\bG}_n \bSigma^{-1}\bar{\bG}_n\right)},
    \end{equation}
    where $\bar{\bG}_n = (1/n)\sum_i \bG_i$ and $F$ is a Gaussian measure with mean 0 and variance $g^{-1}\bSigma$.
    Then by Fubini's theorem
    \begin{align*}
            \mathbb{E}[E_n | \mathcal{F}_{n-1}] &= \mathbb{E}[\int E_n(\bmu) dF(\bmu) |\mathcal{F}_{n-1}] = \int \mathbb{E}[E_n(\bmu)  |\mathcal{F}_{n-1}]dF(\bmu) = \int E_{n-1}(\bmu)dF(\bmu)\\
            &=E_{n-1}
    \end{align*}
    hence $E_n$ is a nonnegative supermartingale.
    \end{proof}

\subsection{Proof of Proposition \ref{prop:w_ie_i_strong_approximation}}
We first restate the strong approximation result from \citet{einmahl1989} in the context of our problem.
\begin{lemma}[\citet{einmahl1989}]
    \label{lem:einmahl_iid}
    Let $(\bX_i)_{i=1}^\infty$ be an i.i.d.\ sequence of random vectors in $\mathbb{R}^k$ with $\mathbb{E}[\bX_1] = 0$, $\mathrm{Cov}(\bX_1) = \Sigma \in \mathbb{R}^{k\times k}$, and
    $\mathbb{E}\bigl[\|X_1\|^s\bigr] \leq \infty$ for some $s > 3$.
    Then on a sufficiently rich probability space there is a sequence $(\bG_i)_{i=1}^\infty$ of i.i.d.\ Gaussian random vectors $\bG_i\sim N(\bzero,\bSigma)$, such that
    
    \begin{equation}
        \bigl\|\,\sum_{i=1}^n \bX_i - \sum_{i=1}^n \bG_i\,\bigr\|=O_{a.s.}\!\bigl(n^{\alpha}\bigr)
    \end{equation}
    for some $0<\alpha<\tfrac12$. 
    \end{lemma}

By assumption $\mathbb{E}[\varepsilon_i \bw_i]=0$ and $\bSigma = \mathbb{E}[\varepsilon_i \bw_i\bw_i']$. The result therefore follows from the lemma \ref{lem:einmahl_iid} by setting $\bX_i = \bw_i(y_i - \bw_i'\bgamma) = \bw_i \varepsilon_i$.

\subsection{Proof of Proposition \ref{prop:strong_consistency}}
\begin{proof}
Write the OLS estimator as 
$$\hat{\bgamma}_n = \left(\frac{1}{n}\sum_{i=1}^n \bw_i \bw_i'\right)^{-1}\left(\frac{1}{n}\sum_{i=1}^n \bw_i y_i\right).$$
By the strong law $\frac{1}{n}\sum_{i=1}^n \bw_i y_i \overset{a.s.}{\rightarrow} \mathbb{E}[\bw_i y_i]$ and with the addition of continuity of the inverse $\left(\frac{1}{n}\sum_{i=1}^n \bw_i \bw_i'\right)^{-1}\overset{a.s.}{\rightarrow} \bOmega_{\bW}$, hence $\hat{\bgamma}_n \overset{a.s.}{\rightarrow}\bgamma = \bOmega_{\bW}^{-1}\mathbb{E}[\bw_i y_i]$.
Define the true residual $\varepsilon_i = y_i - \bw_i'\bgamma$ and estimator $\hat{\varepsilon}_{in} = y_i - \bw_i'\hat{\bgamma}_n$ and write
$$\hat{\varepsilon}_{in} = \varepsilon_i + \bw_i'(\bgamma - \hat{\bgamma}_n).$$
As $\mathbb{E}[\|\bw_i\|] < \infty$, $\|\bw_i\|$ is finite almost surely, therefore $\bw_i'(\bgamma - \hat{\bgamma}_n) \overset{a.s.}{\rightarrow} 0$ and $\hat{\varepsilon}_{in} \overset{a.s.}{\rightarrow} \varepsilon_i$.
The squared residual is
$$\hat{\varepsilon}_{in}^2 = \varepsilon_i^2 + 2\varepsilon_i\bw_i'(\bgamma - \hat{\bgamma}_n)+ (\bw_i'(\bgamma - \hat{\bgamma}_n))^2.$$
Therefore 
$$\hat{\bSigma}_n = \frac{1}{n}\sum_{i=1}^n \varepsilon_i^2\bw_i\bw_i' + 2\frac{1}{n}\sum_{i=1}^n \varepsilon_i\bw_i\bw_i' \bw_i'(\bgamma - \hat{\bgamma}_n)+ \frac{1}{n}\sum_{i=1}^n (\bw_i'(\bgamma - \hat{\bgamma}_n))^2 \bw_i\bw_i'.$$
Considering the second term
$$\| \frac{1}{n}\sum_{i=1}^n \varepsilon_i\bw_i\bw_i' \bw_i'(\bgamma - \hat{\bgamma}_n)\| \leq \|\hat{\bgamma}_n - \bgamma\| \frac{1}{n}\sum_{i=1}^n |\varepsilon_i|\|\bw_i\|^3,$$
where $\frac{1}{n}\sum_{i=1}^n |\varepsilon_i|\|\bw_i\|^3 \overset{a.s.}{\rightarrow} \mathbb{E}[|\varepsilon_i|\|\bw_i\|^3] < \infty$ by the strong law, whereas $\|\hat{\bgamma}_n - \bgamma\| \overset{a.s.}{\rightarrow} 0$, hence $\frac{1}{n}\sum_{i=1}^n \varepsilon_i\bw_i\bw_i' \bw_i'(\bgamma - \hat{\bgamma}_n) \overset{a.s.}{\rightarrow} 0$. 
Similarly $$\|\frac{1}{n}\sum_{i=1}^n (\bw_i'(\bgamma - \hat{\bgamma}_n))^2 \bw_i\bw_i'\| \leq \|\bgamma - \hat{\bgamma}_n\|^2 \frac{1}{n}\sum_{i=1}^n \|\bw_i\|^4$$ where $\frac{1}{n}\sum_{i=1}^n \|\bw_i\|^4 \overset{a.s.}{\rightarrow} \mathbb{E}[\|\bw_i\|^4] < \infty$ by the strong law, hence $$\frac{1}{n}\sum_{i=1}^n (\bw_i'(\bgamma - \hat{\bgamma}_n))^2 \bw_i\bw_i' \overset{a.s.}{\rightarrow} 0.$$
The second and third terms go to zero almost surely, whereas the first term
$$\frac{1}{n}\sum_{i=1}^n \varepsilon_i^2\bw_i\bw_i' \overset{a.s.}{\rightarrow}\bSigma,$$
hence $\hat{\bSigma}_n \overset{a.s.}{\rightarrow}\bSigma$.
\end{proof}

\subsection{Proof of Theorem \ref{thm:robust_e_process}}
We build up the proof in three steps

\subsubsection*{Step 1: Strong Approximation of OLS Estimates}
From proposition \ref{prop:w_ie_i_strong_approximation} we have that $\sum_{i=1}^n \bw_i \varepsilon_i =\sum_{i=1}^n \bG_i + o_{a.s.}(n^{\alpha})$ for some $0 < \alpha < 1/2$ where $\bG_i \overset{iid}{\sim} N(\bzero, \bSigma)$.
Under the assumption $\frac{1}{n}(\bW_n'\bW_n)\overset{a.s.}{\rightarrow} \bOmega_{\bW}$ we may write $(\bW_n'\bW_n)^{-1} = n^{-1} \bOmega_{\bW} ^{-1}+ o_{a.s.}(n^{-1})$.
Consequently,
$$\hat{\bgamma}_n-\bgamma = (\bW_n'\bW_n)^{-1}\sum_{i=1}^n \bw_i \varepsilon_i = \frac{1}{n}\sum_{i=1}^n \bOmega_{\bW}^{-1}\bG_i + O_{a.s.}(n^{\alpha - 1}) + o_{a.s.}(n^{-1})\sum_{i=1}^n \bG_i.$$
The sum $\sum_{i=1}^n \bG_i = O_{a.s.}(\sqrt{n\log\log n})$ by the law of iterated logarithm which, when combined with the $o_{a.s.}(n^{-1})$ term, is larger than $O_{a.s.}(n^{\alpha - 1})$.
Hence,
$$\bE(\hat{\bgamma}_n-\bgamma) = \frac{1}{n}\sum_{i=1}^n \tilde{\bG}_i + o_{a.s.}\left(\sqrt{\frac{\log \log n}{n}}\right),$$ where $\tilde{\bG}_i \sim N(\bzero, \bE\bOmega_{\bW}^{-1}\bSigma \bOmega_{\bW}^{-1}\bE')$,
This implies the quadratic form
$n(\hat{\bgamma}_n-\bgamma)'\bE'\bSigma_{\bE}^{-1}\bE(\hat{\bgamma}_n-\bgamma)  = n \bar{\bG}_n\bSigma_{\bE}^{-1}\bar{\bG}_n + o_{a.s.}(\log \log n)$, where $\bSigma_{\bE} = \bE\bOmega_{\bW}^{-1}\bSigma \bOmega_{\bW}^{-1}\bE'$ and $\bar{\bG}_n = (1/n)\sum_{i=1}^n \tilde{\bG}_i$.
In other words, $n(\hat{\bgamma}_n-\bgamma)'\bE'\bSigma_{\bE}^{-1}\bE(\hat{\bgamma}_n-\bgamma)$ can be approximated by a $\chi^2_r$ random variable with approximation error $o_{a.s.}(\log \log n)$. 

\subsubsection*{Step 2: An Asymptotic $e$-process}
\begin{lemma}
    \label{lem:asymptotic_e_process}
Define,
\begin{equation}
    \begin{split}
        E_n^S &\coloneqq  \left(\frac{g}{g + n}\right)^{\frac{r}{2}}e^{\frac{1}{2}\frac{n}{g+n}n\bar{\bG}_n \bSigma_{\bdelta}^{-1}\bar{\bG}_n},\\
        E_n^G &\coloneqq  \left(\frac{g}{g + n}\right)^{\frac{r}{2}}e^{\frac{1}{2}\frac{n}{g+n}n(\hat{\bgamma}_n-\bgamma)'\bE'\bSigma_{\bE}^{-1}\bE(\hat{\bgamma}_n-\bgamma)},
    \end{split}
\end{equation}
Then $E_n$ is an asymptotic $e$-process with $\log E_n^S / \log E_n^G \rightarrow 1$ almost surely.
\end{lemma}
\begin{proof}
    By lemma \ref{lem:multivariate_e_process}, $E_n^S$ is a nonnegative supermartingale.
\begin{equation}
    \begin{split}
        \frac{\log E^S_n}{\log E^G_n} &= 1 + \frac{\frac{1}{2}\frac{n}{g+n}o_{a.s.}(\log \log n)}{E^G_n}\\
        &= 1 + \frac{\frac{1}{2}\frac{n}{g+n}o_{a.s.}(\log \log n)}{\frac{d}{2} \log\left(\frac{g}{g+n}\right) + \frac{1}{2}\frac{n}{g+n}n(\hat{\bgamma}_n - \bgamma)\bE'\bSigma_{\bE}^{-1}\bE(\hat{\bgamma}_n - \bgamma)}\\
    \end{split}
\end{equation}
The term $\frac{1}{2}\frac{n}{g+n}n(\hat{\bgamma}_n - \bgamma)\bE'\bSigma_{\bE}^{-1}\bE(\hat{\bgamma}_n - \bgamma) = O_{a.s.}(\log \log n)$ by the law of iterated logarithm whereas the term $-\frac{d}{2}\log(g+n) = O(\log n)$. The denominator, therefore, grows more quickly in $n$ and
    \begin{equation}
            \frac{\log E^S_n}{\log E^G_n} = 1 + o_{a.s.}(1)
    \end{equation}
\end{proof}
We may write $E_n^S = E_n^G e^{\frac{1}{2}\frac{n}{g+n}r_n}$ where $r_n = o_{a.s.}(\log \log n)$. Although $e^{\frac{1}{2}\frac{n}{g+n}r_n}$ grows with $n$, it does not grow fast enough to meaningfully affect the probability of $E_n^G$ crossing $\alpha^{-1}$.
To see this observe that
\begin{equation*}
    \begin{split}
    &\mathbb{P}[\exists n \in \mathbb{N}:E_n^S \geq \alpha^{-1}] = \mathbb{P}[\exists n \in \mathbb{N}:E_n^G e^{\frac{1}{2}\frac{n}{g+n}r_n} \geq \alpha^{-1}]\\
    &= \mathbb{P}\left[\exists n \in \mathbb{N}: (\hat{\bgamma}_n-\bgamma)'\bE'\bSigma_{\bE}^{-1}\bE(\hat{\bgamma}_n-\bgamma) \geq \frac{1}{n}\left(\frac{g+n}{n}\right)\log\left(\frac{1}{\alpha^2}\left(\frac{g+n}{g}\right)^r\right) - \frac{r_n}{n}\right] \leq \alpha.
    \end{split}
\end{equation*}
The main boundary term $$b_n \coloneqq \frac{1}{n}\left(\frac{g+n}{n}\right)\log\left(\frac{1}{\alpha^2}\left(\frac{g+n}{g}\right)^r\right) = O\left(\frac{\log n}{n}\right),$$ whereas $\frac{r_n}{n} = o_{a.s.}\left(\frac{\log \log n}{n}\right)$ which tends to zero faster than the main boundary term.
This is aided by even smaller values of $\alpha$ which make the main boundary term relatively larger than $r_n / n$.

\subsubsection*{Step 3: Using a Plugin Estimator for $\bSigma_{\bE}$}
By Step 1, the quadratic form $n(\hat{\bgamma}_n - \bgamma)'\bE'\bSigma_{\bE}^{-1}\bE(\hat{\bgamma}_n - \bgamma)$ can be approximated as 
$$n(\hat{\bgamma}_n - \bgamma)'\bE'\bSigma_{\bE}^{-1}\bE(\hat{\bgamma}_n - \bgamma)  = n \bar{\bG}_n\bSigma_{\bdelta}^{-1}\bar{\bG}_n + o_{a.s.}(\log \log n).$$
By proposition \ref{prop:strong_consistency}, $\hat{\bSigma}_{\bE}$ is a strongly consistent estimator of $\bSigma_{\bE}$, and so we may write
$$ \bSigma_{\bE}^{-1} = \hat{\bSigma}_{\bE}^{-1} + o_{a.s.}(1)\bSigma_{\bE}^{-1}.$$
Substituting into the strong approximation gives
\begin{equation*}
    \begin{split}
            n(\hat{\bdelta}_n-\bdelta)'\hat{\bSigma}_{\bdelta_n}^{-1}(\hat{\bdelta}_n-\bdelta) =& n(\hat{\bgamma}_n - \bgamma)'\bE'\bSigma_{\bE}^{-1}\bE(\hat{\bgamma}_n - \bgamma) + o_{a.s.}(1)n(\hat{\bgamma}_n - \bgamma)'\bE'\bSigma_{\bE}^{-1}\bE(\hat{\bgamma}_n - \bgamma)\\
            =& n \bar{\bG}_n\bSigma_{\bdelta}^{-1}\bar{\bG}_n + o_{a.s.}(\log \log n)+ o_{a.s.}(1)n(\hat{\bgamma}_n - \bgamma)'\bE'\bSigma_{\bE}^{-1}\bE(\hat{\bgamma}_n - \bgamma)\\
    \end{split}
\end{equation*}
The quadratic form $n(\hat{\bgamma}_n - \bgamma)'\bE'\bSigma_{\bE}^{-1}\bE(\hat{\bgamma}_n - \bgamma)=O_{a.s.}(\log \log n)$ by the law of the iterated logarithm, and so the last term is also $o_{a.s.}(\log \log n)$. Therefore
\begin{equation*}
    n(\hat{\bgamma}_n - \bgamma)'\bE'\hat{\bSigma}_{\bE}^{-1}\bE(\hat{\bgamma}_n - \bgamma) = n(\hat{\bgamma}_n - \bgamma)'\bE'\bSigma_{\bE}^{-1}\bE(\hat{\bgamma}_n - \bgamma) + r_n
\end{equation*}
where $r_n = o_{a.s.}(\log \log n)$. The $e$-process using the plugin estimator can be written as 
\begin{equation*}
        E^p_n \coloneqq  \left(\frac{g}{g + n}\right)^{\frac{r}{2}}e^{\frac{1}{2}\frac{n}{g+n}n(\hat{\bgamma}_n - \bgamma)'\bE'\hat{\bSigma}_{\bE}^{-1}\bE(\hat{\bgamma}_n - \bgamma)}=E_n^S e^{\frac{1}{2}\frac{n}{g+n}r_n},
\end{equation*}
The quotient $\log E^p_n / \log E_n^S \rightarrow 1$ almost surely by the same argument as lemma \ref{lem:asymptotic_e_process}, therefore $E_n$ is an asymptotic $e$-process.

\subsubsection*{Step 4: Swap With Asymptotic ``$t$'' $e$-process}
Let $Q_n= n(\hat{\bgamma}_n - \bgamma)'\bE'\hat{\bSigma}_{\bE}^{-1}\bE(\hat{\bgamma}_n - \bgamma)$ define
\begin{equation*}
    \begin{split}
        E_n &= \left(\frac{g}{g+n}\right)^{\frac{r}{2}}\left(\frac{1 + \frac{g}{g+n} \frac{r}{\nu_n} \frac{Q_n}{r}}{1+\frac{r}{\nu_n} \frac{Q_n}{r}}\right)^{-\frac{\nu_n + r}{2}},\\
        E^p_n &\coloneqq  \left(\frac{g}{g + n}\right)^{\frac{r}{2}}e^{\frac{1}{2}\frac{n}{g+n}Q_n}.
    \end{split}
\end{equation*}
The logarithmic difference is simply
\begin{align*}
\log E_n - \log E_n^p
&= - \frac{\nu_n + r}{2} \log\left( \frac{1 + \frac{g}{g+n} \cdot \frac{Q_n}{\nu_n}}{1 + \frac{Q_n}{\nu_n}} \right) - \frac{1}{2} \cdot \frac{n}{g+n} Q_n.
\end{align*}
Observe $\frac{r}{\nu_n} \frac{Q_n}{r} = \frac{Q_n}{\nu_n} \overset{a.s.}{\rightarrow} 0$.
Let $x = \frac{Q_n}{\nu_n}$ and $a = \frac{g}{g+n}$. Using the expansion \( \log(1 + x) = x - \frac{1}{2}x^2 + o(x^2) \), we have:
\begin{align*}
\log\left( \frac{1 + a x}{1 + x} \right)
&= \log(1 + a x) - \log(1 + x) \\
&= (a - 1)x - \frac{1}{2}(a^2 - 1)x^2 + o_{a.s.}(x^2),
\end{align*}
as $x \overset{a.s.}{\rightarrow} 0$. Hence,
\begin{align*}
\log\left( \frac{1 + \frac{g}{g+n} \cdot \frac{Q_n}{\nu_n}}{1 + \frac{Q_n}{\nu_n}} \right) = -\frac{n}{n+g}\frac{Q_n}{\nu_n} + O_{a.s.}\left( \frac{Q_n^2}{\nu_n^2} \right),\\
\end{align*}
and
\begin{align*}
- \frac{\nu_n + r}{2}\log\left( \frac{1 + \frac{g}{g+n} \cdot \frac{Q_n}{\nu_n}}{1 + \frac{Q_n}{\nu_n}} \right) = \frac{\nu_n + r}{2}\frac{n}{n+g}\frac{Q_n}{\nu_n} + O_{a.s.}\left( \frac{Q_n^2}{\nu_n} \right),\\
\end{align*}

Plugging back in:
\begin{align*}
\log E_n - \log E_n^p 
&=  \frac{1}{2}\left(\frac{\nu_n + r}{\nu_n} - 1 \right)\frac{n}{n+g}Q_n + O_{a.s.}\left( \frac{Q_n^2}{\nu_n} \right),\\
&=  \frac{r}{2} \frac{n}{n+g}\frac{Q_n}{\nu_n} + O_{a.s.}\left( \frac{Q_n^2}{\nu_n} \right),\\
\end{align*}
and since $\frac{Q_n}{\nu_n} \overset{a.s.}{\rightarrow} 0$, we conclude
$$\log E_n = \log E_n^p + o_{a.s.}(1).$$
As neither $\log E_n$ nor $\log E_n^p$ converge to zero, we have that $E_n$ is an asymptotic $e$-process,
$$\frac{\log E_n}{\log E_n^p} \overset{a.s.}{\rightarrow} 1.$$

\subsection{Proof of Theorem \ref{thm:asmyptotic_radius}}
\begin{lemma}
    \label{lem:inequality}
$(v^{\lambda} - v^{1-\lambda}) \leq (1-\lambda)(1-v)$ for $v \in (0,1)$ and $\lambda \in (0,1)$
\end{lemma}
\begin{proof}
consider the function:
$$
f(v) = v^{\lambda} - v^{1-\lambda} - (1-\lambda)(1-v).
$$
We will show that $f(v) \leq 0$ for all $v \in (0,1)$ and $\lambda \in (0,1)$. 
The boundaries of the functions are $f(1) = 0$ and $f(0) = - (1-\lambda) < 0$.
Clearly when $\lambda = 0.5$, we have $f(v) = -0.5(1-v) \leq 0$.
The first and second derivatives are
\begin{align*}
f'(v) &= \lambda v^{\lambda - 1} - (1-\lambda)v^{-\lambda} + (1-\lambda),\\
f''(v) &= \lambda(1-\lambda) (v^{-\lambda - 1} - v^{\lambda - 2})\\
&=\lambda(1-\lambda) v^{-(\lambda + 1)}(1-v^{2\lambda - 1})
\end{align*}
When $\lambda > 0.5$, $2\lambda - 1 > 0$ and $f''(v) \geq 0$ for all $v \in (0,1)$ and therefore $f(v)$ is strictly concave, decreasing from $-(1-\lambda)$ to a minimum on $(0,1)$ and then increasing toward 0, therefore $f(v) \leq 0$ for all $v$.
When $\lambda < 0.5$, $2\lambda - 1 > 0$ and $f''(v) \leq 0$ for all $v \in (0,1)$ and therefore $f(v)$ is strictly convex - increasing from $-(1-\lambda)$ to 0 as $v$ goes from 0 to 1,  therefore $f(v) \leq 0$ for all $v$.

\end{proof}
\subsubsection{Part 1: }
\begin{proof}
Let $\nu_n = n - k$, $m = \nu_n + r$ and set
\begin{align*}
    u &= \left(\frac{\alpha^{\frac{2}{r}}g}{n+g}\right)^{\frac{r}{\nu_n + r}}\\
    v &=\frac{g}{n+g}.
\end{align*}
Note $v \in (0,1)$ and $u \in (0, v^{\frac{r}{\nu_n + r}})$. When $u \leq v$, we have that $R(g,n,\alpha)$ is infinite and $R(g,n,\alpha) \geq R^G(g,n,\alpha)$ is satisfied.
We now consider the case $u \in (v, v^{\frac{r}{\nu_n+r}})$, where

\begin{align*}
R(g,n,\alpha) = \frac{\nu_n}{r}\frac{1 - u}{u - v},\\
\end{align*}
Note
$$
R^G(g,n,\alpha)=-\frac{\nu_n + r}{r}\frac{n+g}{n} \log u.
$$
Therefore the statement
$$R(g,n,\alpha)  - R^G(g,n,\alpha) = \frac{\nu_n}{r}\frac{1 - u}{u - v} + \frac{\nu_n + r}{r}\frac{n+g}{n} \log u > 0,$$
is equivalent to
$$
\nu_n(1 - u)(1-v) +(\nu_n + r)(u-v) \log u > 0.
$$
We prove this by considering the function
$$
 F(b)\coloneqq \nu_n(1 - bw)(1-v) +(\nu_n + r)(bw-v) \log bw,
$$
where $w = v^{\frac{r}{\nu_n + r}}$ and $b \in (v^{\frac{\nu_n}{\nu_n + r}} , 1)$.
In particular $u > v$ implies $\alpha^\frac{2}{\nu_n + r} \in (v^{\frac{\nu_n}{\nu_n + r}} , 1)$.
If we can show that $F(b) > 0 $ for all $b \in (v^{\frac{\nu_n + r}{r}} , 1)$, then $F(\alpha^{\frac{2}{\nu_n + r}}) > 0$ which implies the result because $u = \alpha^{\frac{2}{\nu_n + r}} v^{\frac{r}{\nu_n + r}}$.
First we observe that $F(b)$ is strictly convex on $(v^{\frac{\nu_n + r}{r}} , 1)$
$$
F''(b) = (\nu_n + r) \left(\frac{wb + v}{b^2}\right)> 0.
$$
Second note that $F(b)$ is strictly decreasing on $(v^{\frac{\nu_n + r}{r}} , 1)$, since

\begin{align*}
    F'(b) &= w\left(-\nu_n(1-v) + (\nu_n + r)(\log(bw) + 1 - \frac{v}{bw})\right)\\
    &\leq w\left(-\nu_n(1-v) + (\nu_n + r)(bw - \frac{v}{bw})\right)\\
    &\leq w\left(-\nu_n(1-v) + (\nu_n + r)(w - \frac{v}{w})\right)\\
    &= w\left(-\nu_n(1-v) + (\nu_n + r)(v^{\lambda} - v^{1-\lambda})\right)\\
    &\leq w\left(-\nu_n(1-v) + (\nu_n + r)(1-\lambda)(1-v)\right)\\
    &= w\left(-\nu_n(1-v) + \nu_n(1-v)\right)\\
    &=0,
\end{align*}
where $\lambda = \frac{r}{\nu_n + r}$ and by using lemma \ref{lem:inequality} to achieve the 5th line.
Therefore $F(b)$ achieves a minimum at the largest value of $b$, which is $b = 1$.
Therefore $F(b) \geq F(1)$ which implies $F(\alpha^{\frac{2}{\nu_n + r}}) \geq  0$. We now show that $F(1) \geq 0$.

\begin{align*}
    F(1) &= \nu_n(1 - w)(1-v) +(\nu_n + r)(w-v) \log w\\
    &=(v^{\lambda}-v)\left(\nu_n(1-v)\frac{(1-v^{\lambda})}{v^{\lambda}-v} + (\nu_n+r)\lambda \log v\right)\\
    &=\frac{(v^{\lambda}-v)}{(\nu_n + r)}\left((1-\lambda)(1-v)\frac{(1-v^{\lambda})}{v^{\lambda}-v} + \lambda \log v\right)\\
    &\geq 0.
\end{align*}

\end{proof}
\subsubsection*{Proof of $\underset{n\rightarrow \infty}{\lim} \frac{R(g, n, \alpha)}{R^G(g, n, \alpha)} = 1$}

\begin{proof}
First we observe that $R^G(g, n, \alpha) \sim \log n$,
\begin{align*}
        R^G(g, n, \alpha) &= \frac{g+n}{n} \log n + \frac{g+n}{n} \log\left(\frac{1 + \frac{g}{n}}{\alpha^{\frac{2}{r}}g}\right),\\
        &= \frac{g+n}{n} \log n -\frac{g+n}{n}\log(\alpha^{\frac{2}{r}}g)+ \frac{g+n}{n} \log\left(1 + \frac{g}{n}\right),\\
        &= \frac{g+n}{n} \log n -\frac{g+n}{n}\log(\alpha^{\frac{2}{r}}g)+ \frac{g+n}{n} \left(\frac{g}{n} + O(n^{-2})\right),\\
        &= \log n - \log(\alpha^{\frac{2}{r}}g) +O\left(\frac{\log n}{n}\right).
\end{align*}
To see $R(g, n, \alpha)\sim \log n$, first write
$$
\left(\frac{\alpha^{\frac{2}{r}}g}{g+n}\right)^{\frac{r}{\nu_n + r}
} = e^{\frac{r}{r + \nu_n}\log\left(\frac{\alpha^{\frac{2}{r}}g}{g+n}\right)},
$$
and consider the exponent. The first term 
$$
\frac{r}{\nu_n + r} = \frac{r}{\nu_n}\frac{1}{1 + \frac{r}{\nu_n}} =\frac{r}{\nu_n}\left(1 + O(n^{-1})\right).
$$
The logarithmic term 
\begin{align*}
    \log\left(\frac{\alpha^{\frac{2}{r}}g}{g+n}\right) &= \log \alpha^{\frac{2}{r}}g - \log n - \log\left(1 + \frac{g}{n}\right) \\
    &= \log \alpha^{\frac{2}{r}}g - \log n - O(n^{-1}). \\
\end{align*}
Multiplying the two terms together we have 
$$
\frac{r}{\nu_n + r}\log\left(\frac{\alpha^{\frac{2}{r}}g}{g+n}\right) = \frac{r}{\nu_n}\log \alpha^{\frac{2}{r}}g - \frac{r}{\nu_n}\log n - O\left(\frac{\log(n)}{n^2}\right),
$$
which converges to zero as $n\rightarrow \infty$. Therefore when we exponentiate this expression 
$$
\left(\frac{\alpha^{\frac{2}{r}}g}{g+n}\right)^{\frac{r}{\nu_n + r}
} = e^{\frac{r}{r + \nu_n}\log\left(\frac{\alpha^{\frac{2}{r}}g}{g+n}\right)} = 1 + \frac{r}{\nu_n}\log \alpha^{\frac{2}{r}}g - \frac{r}{\nu_n}\log n + O\left(\frac{\log(n)}{n^2}\right).
$$
Substituting back into the expansion for $R(g, n, \alpha)$ we have, after collecting terms,
$$
R(g, n, \alpha) = \log n - \log(\alpha^{\frac{2}{r}}g)  + O\left(\frac{\log(n)}{n}\right).
$$
Therefore,
$$
\underset{n\rightarrow \infty}{\lim} \frac{R(g, n, \alpha)}{R^G(g, n, \alpha)} = \frac{\log n - \log(\alpha^{\frac{2}{r}}g)  + O\left(\frac{\log(n)}{n}\right)}{\log n - \log(\alpha^{\frac{2}{r}}g)  + O\left(\frac{\log(n)}{n}\right)} = 1.
$$
\end{proof}

\section{Proofs for Section \ref{sec:ate}}

\subsection{Proof of Lemma \ref{lem:population_least_squares}}
The first order conditions satisfy
\begin{equation*}
    \begin{pmatrix}
     \mathbb{E}[   y_i - \alpha^\star - (\bm_i - \bmu_{\bm})'\bzeta^\star - (T_i-\rho)\tau^\star - T_i(\bm_i - \bmu_{\bm})'\beeta^\star]\\
        \mathbb{E}[ (\bm_i - \bmu_{\bm})(y_i - \alpha^\star - (\bm_i - \bmu_{\bm})'\bzeta^\star - (T_i-\rho)\tau^\star - T_i(\bm_i - \bmu_{\bm})'\beeta^\star)]\\
        \mathbb{E}[ (T_i-\rho)(y_i - \alpha^\star - (\bm_i - \bmu_{\bm})'\bzeta^\star - (T_i-\rho)\tau^\star - T_i(\bm_i - \bmu_{\bm})'\beeta^\star)]\\
        \mathbb{E}[ T_i(\bm_i - \bmu_{\bm})(y_i - \alpha^\star - (\bm_i - \bmu_{\bm})'\bzeta^\star - (T_i-\rho)\tau^\star - T_i(\bm_i - \bmu_{\bm})'\beeta^\star)]\\
    \end{pmatrix} = \bzero
\end{equation*}
These expressions simplify to
\begin{equation*}
    \begin{pmatrix}
     \mathbb{E}[y_i] - \alpha^\star\\
        \mathbb{E}[ (\bm_i - \bmu_{\bm})y_i] - \bOmega_{\bm}\bzeta^\star - \rho\bOmega_{\bm}\beeta^\star\\
        \mathbb{E}[ (T_i-\rho)y_i] -  \rho(1-\rho)\tau^\star\\
        \mathbb{E}[ T_i(\bm_i - \bmu_{\bm})y_i] - \rho\bOmega_{\bm}\bzeta^\star - \rho\bOmega_{\bm}\beeta^\star\\
    \end{pmatrix} = \bzero
\end{equation*}
Clearly $\alpha^\star = \mathbb{E}[y_i] = \rho\mathbb{E}[y_i(1)] + (1-\rho)\mathbb{E}[y_i(0)]$ and $\tau^{\star} = \mathbb{E}[y_i(1)] - \mathbb{E}[y_i(0)]$. 
Hence if the average treatment effect is zero, then $\tau^{\star}=0$. From the last element we have $\bOmega_{\bm}^{-1}\mathbb{E}[(\bm_i - \bmu_{\bm})y_i(1)] = \bzeta^\star + \beeta^\star $, and from the second line we have $\bOmega_{\bm}^{-1}\mathbb{E}[(\bm_i - \bmu_{\bm})y_i] = \bzeta^\star + \rho\beeta^\star$. 
Therefore
\begin{equation*}
\begin{split}
        \beeta^{\star} &= (1-\rho)^{-1}\bOmega_{\bm}^{-1}(\mathbb{E}[(\bm_i - \bmu_{\bm})y_i(1)] - \mathbb{E}[(\bm_i - \bmu_{\bm})y_i] )\\
         &= \bOmega_{\bm}^{-1}(\mathbb{E}[(\bm_i - \bmu_{\bm})y_i(1)] - \mathbb{E}[(\bm_i - \bmu_{\bm})y_i(0)] )\\
\bzeta^{\star} &=\bOmega_{\bm}^{-1} \mathbb{E}[(\bm_i - \bmu_{\bm})y_i(0)]\\
\end{split}
\end{equation*}
Observe that $\mathbb{E}[y_i(1) - y_i(0)|\bm_i = \bm]  =0$ for all $\bm$ (no conditional average treatment effect) implies $\beeta^\star = \bzero$ because
\begin{equation*}
\begin{split}
        \beeta^{\star} &= \bOmega_{\bm}^{-1}(\mathbb{E}[(\bm_i - \bmu_{\bm})y_i(1)] - \mathbb{E}[(\bm_i - \bmu_{\bm})y_i(0)] )\\
        &= \bOmega_{\bm}^{-1}(\mathbb{E}[(\bm_i - \bmu_{\bm})\mathbb{E}[y_i(1) - y_i(0)|\bm_i]] )\\
        &=\bzero
\end{split}
\end{equation*}

\subsection{Proof of Lemma \ref{lem:asymptotic_ols_params}}
\begin{equation*}
\begin{split}
&\mathbb{E}[\bw_i\varepsilon_i]=\\
&\mathbb{E} \left[\begin{pmatrix}
    1\\
    \bm_i-\bmu_\bm\\
    T_i - \rho \\
    T_i(\bm_i-\bmu_\bm)\\
\end{pmatrix}(y_i - \alpha^{\star} -(\bm_i-\bmu_\bm)'\bzeta^{\star} -(T_i-\rho)\tau^{\star} - T_i(\bm_i-\bmu_\bm)'\beeta^{\star})\right]
\end{split}
\end{equation*}
We show each element is zero, starting with the first.
    \begin{equation*}
        \begin{split}
                &\mathbb{E}[y_i - \alpha^{\star} - (\bm_i-\bmu_\bm)'\bzeta^{\star} - (T_i-\rho)\tau^\star - T_i(\bm_i-\bmu_\bm)'\beeta^{\star})] \\
                &= \mathbb{E}[y_i] - \alpha^{\star} - 0 - 0\\
                &= \rho \mathbb{E}[y_i(1)] + (1-\rho) \mathbb{E}[y_i(0)]  - \alpha^{\star} = 0
        \end{split}
    \end{equation*}
    Line 2 follows from $\mathbb{E}[T_i-\rho]=0$, $\mathbb{E}[\bm_i-\bmu_{\bm}]=0$, and the final line from the definition of $\alpha^{\star}$ in lemma \ref{lem:population_least_squares}.
    \begin{equation*}
            \begin{split}
                &\mathbb{E}[(\bm_i-\bmu_\bm)(y_i - \alpha^{\star}- (\bm_i-\bmu_\bm)'\bzeta^{\star} - (T_i-\rho)\tau^\star - T_i(\bm_i-\bmu_\bm)'\beeta^{\star})]\\
                &=\mathbb{E}[(\bm_i-\bmu_\bm)y_i] - 0  - \bOmega_\bm\bzeta^{\star} - 0 - \rho \bOmega_{\bm}\beeta^{\star}\\
                &=0
        \end{split}
    \end{equation*}
    The first equality follows from $\mathbb{E}[(\bm_i-\bmu_\bm)\alpha^{\star}]=0$ and $\mathbb{E}[(\bm_i-\bmu_\bm)(T_i-\rho)]=0$ by independence of $T_i$ and $\bm_i$. 
    The last line follows from the definition of $\bzeta^{\star}$ and $\beeta^{\star}$.
        \begin{equation*}
            \begin{split}
                &\mathbb{E}[(T_i-\rho)(y_i - \alpha^{\star} - (\bm_i-\bmu_\bm)'\bzeta^{\star} - (T_i-\rho)\tau^\star - T_i(\bm_i-\bmu_\bm)'\beeta^{\star})]\\
                &=\mathbb{E}[(T_i-\rho)y_i] - 0 -0 - \mathbb{E}[(T_i-\rho)^2]\tau^\star - 0\\
                &=\rho(1-\rho)\mathbb{E}[y_i(1)] - \rho(1-\rho)\mathbb{E}[y_i(0)] - \rho(1-\rho)\tau^\star\\
                &=0
        \end{split}
    \end{equation*}
    The first line follows because  $\mathbb{E}[(T_i - \rho)\alpha^{\star}]=0$ and $\mathbb{E}[(\bm_i-\bmu_\bm)(T_i-\rho)]=0$ by independence of $T_i$ and $\bm_i$. The last line follows from the definition of $\tau^\star$.
    \begin{equation*}
            \begin{split}
                &\mathbb{E}[T_i(\bm_i-\bmu_\bm)(y_i - \alpha^{\star}- (\bm_i-\bmu_\bm)'\bzeta^{\star} - (T_i-\rho)\tau^\star - T_i(\bm_i-\bmu_\bm)'\beeta^{\star})]\\
                &=\rho\mathbb{E}[(\bm_i-\bmu_\bm)y_i(1)] - 0  - \rho\bOmega_\bm\bzeta^{\star} - 0 - \rho \bOmega_{\bm}\beeta^{\star}\\
                &=0
        \end{split}
    \end{equation*}
    where the last line follows from the definitions of $\beeta^{\star}$ and $\bzeta^{\star}$.
Having established that this vector is zero mean, $\mathbb{V}[\bw_i(y_i - \bw_i'\bgamma^{\star})] = \mathbb{E}[(y_i - \bw_i'\bgamma^{\star})^2\bw_i\bw_i']=\mathbb{E}[\mathbb{E}[(y_i - \bw_i'\bgamma^{\star})^2|\bw_i]\bw_i\bw_i']$.
The outer product of the vector $\bw_i$ is given by
\begin{equation*}
    \begin{split}
        &\bw_i \bw_i' =\\
        & \begin{pmatrix} 1 & (\bm_i-\bmu_\bm)' & T_i-\rho & \bigl[T_i(\bm_i-\bmu_\bm)\bigr]' \\ \bm_i-\bmu_\bm & (\bm_i-\bmu_\bm)(\bm_i-\bmu_\bm)' & (T_i-\rho)(\bm_i-\bmu_\bm) & T_i(\bm_i-\bmu_\bm)(\bm_i-\bmu_\bm)' \\ T_i-\rho & (T_i-\rho)(\bm_i-\bmu_\bm)' & (T_i-\rho)^2 & (T_i-\rho)T_i(\bm_i-\bmu_\bm)' \\ T_i(\bm_i-\bmu_\bm) & T_i(\bm_i-\bmu_\bm)(\bm_i-\bmu_\bm)' & T_i(\bm_i-\bmu_\bm)(T_i-\rho) & T_i^2 (\bm_i-\bmu_\bm)(\bm_i-\bmu_\bm)' \end{pmatrix}.
    \end{split}
\end{equation*}
Hence
\begin{equation*}
    \bOmega_{\bW} = \mathbb{E}[\bw_i \bw_i'] = \begin{pmatrix} 1 & \mathbf{0}' & 0 & \mathbf{0}' \\ \mathbf{0} & \bOmega_{\bm} & \mathbf{0} & \rho\,\bOmega_{\bm} \\ 0 & \mathbf{0}' & \rho(1-\rho) & \mathbf{0}' \\ \mathbf{0} & \rho\,\bOmega_{\bm} & \mathbf{0} & \rho\,\bOmega_{\bm} \end{pmatrix}.
\end{equation*}

\section{Comparison with AIPW and Empirical Bernstein Confidence Sequences}
\label{sec:aipw}
The AIPW write estimator, using a linear regression function, is identical the OLS estimator from the linear model.
Specifically, consider the influence functions 
 \begin{equation*}
  \phi_i(m^0, m^1) 
= \frac{Z_i -\rho}{\rho(1-\rho)}(Y_i -  m^{T_i}(\bX_i))
+ m^1(\bX_i) - m^0(\bX_i),
\end{equation*}
and write AIPW estimate of the average treatment effect using regression functions $m^0$ and $m^1$ as
\begin{equation}
  \label{eq:aipw}
  \hat{\delta}_n^{AIPW}(m^0, m^1) = \frac{1}{n} \sum_{i=1}^n \phi_i(m^0, m^1),
\end{equation}
Suppose we fit a linear model such that $\hat{m}^{Z_i}_n(\bx_i) = \beta_0 + \bx_i'\hat{\bbeta}_n + Z_i\hat{\delta}_n$. 
Then the following equality holds between the OLS and AIPW estimators:

$$\hat{\delta}_n = \hat{\delta}_n^{AIPW}(\hat{m}^0_n, \hat{m}^1_n).$$

An important distinction, however, is that recent literature on anytime-valid inference uses a different \textit{sequential version} of the AIPW estimator \citep{ham2022,howard}.
Given a sequence of observations $(y_i, \bX_i, Z_i)_{i=1}^\infty$ we can define a new sequence of $(\phi_i)_{i=1}^\infty$ using the AIPW approach as:

\begin{equation*}
  \phi_i 
= \frac{Z_i - \rho}{\rho(1-\rho)}(Y_i - \hat m^{T_i}_i(\bX_i))
+ \hat m^1_i(\bX_i) - \hat m^0_i(\bX_i)
\end{equation*}
where $\hat{m}^0_i$ and $\hat{m}^1_i$ are regression functions trained on data up to—but not including—time $i$, ensuring measurability with respect to the filtration $\mathcal{F}_{i-1}$.
Under this construction we have $\mathbb{E}[\phi_i | \mathcal{F}_{i-1}] = \mathbb{E}[y_i | Z_i=1] - \mathbb{E}[y_i|Z_i=0]$. This yields the \textit{sequential} AIPW estimator:
\begin{equation}
  \label{eq:aipw_seq}
  \hat{\delta}_n^{SAIPW} = \frac{1}{n} \sum_{i=1}^n \phi_i(\hat{m}^0_{i}, \hat{m}^1_{i}),
\end{equation}
where, in contrast to $\hat{\delta}_n^{AIPW}(\hat{m}^0_n, \hat{m}^1_n)$, $\phi_i(\hat{m}^0_{n}, \hat{m}^1_{n})$ have been replaced by $\phi_i(\hat{m}^0_{i}, \hat{m}^1_{i})$.
In general, $\hat{\delta}_n^{SAIPW}  \neq \hat{\delta}_n^{AIPW}(\hat{m}^0_n, \hat{m}^1_n)$, even when both estimators are constructed from linear models.
The strength of this approach is that is can accommodate nonlinear regression functions and varying propensities. A weakness is that is necessitates an online learning procedure for estimating $m^0$ and $m^1$, iterating through each outcome in sequence, which can be computationally intensive.

In contrast, our approach allows the practitioner to compute a single OLS fit at time $n$ and evaluate the corresponding $e$-process directly, without requiring sequential updates for each observation. This is especially advantageous in scenarios where only summary statistics (e.g., the point estimate and standard error) are available, such as in published regression tables. It also simplifies integration into existing software pipelines: users can apply our procedure using standard OLS output without additional model refitting or data iteration.

The PlayDelay outcome in section \ref{sec:examples} is bounded between 0 and 1, which enables the usage of the Empirical Bernstein confidence sequence for bounded outcomes by \citet{howard}.
The AIPW-transformed outcomes $\phi_i$ are bounded between $[-2, 2]$ due to the propensity score reweighting with $\rho = 1/2$.
To configure the Empirical Bernstein confidence sequence, we use a scale of $4$ and use the Gamma-Exponential conjugate mixture \citep[Proposition~9]{howard} with tuning parameter $100$ to provide an equitable comparison
to the Gaussian mixture with precision $100$, as suggested by \citet[Section~3.5]{howard}.
Figure~\ref{fig:asymp_t_playdelay_alt} illustrates the average stopping time of the Empirical Bernstein procedure is much longer, which communicates that conservativeness is the price paid for nonasymptotic guarantees.
Indeed, the bound on the outcomes, required by the Empirical Bernstein procedure, is very large due to the presence of extreme values in the PlayDelay data, which translates to a very large stopping time.

\section{Supplemental Figures}

\begin{figure}[ht]
        \centering
        \includegraphics[width=1\linewidth]{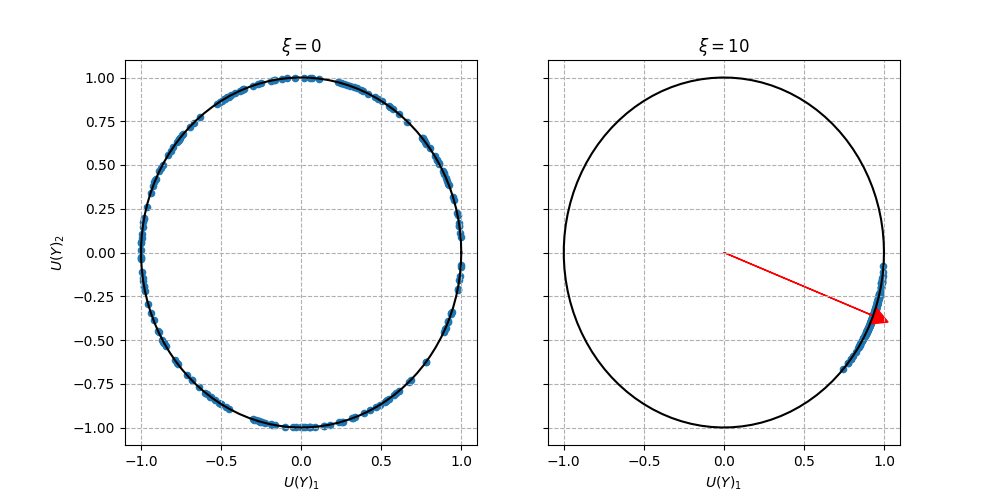}
        \caption{Two hundred samples of $\bm_n$ from the linear model $\bY_n = \bX_n\beta + \bZ_n\xi\sigma + \sigma\epsilon_n$ where $\bX_n$ and $\bZ_n$ are $5 \times 3$ and $5 \times 1$ matrices of standard normals respectively, $\beta = (1,2,3)$ and $\sigma = 1.5$. (Left) The null hypothesis is true and $\bm_n$ is uniformly distributed over the unit circle. (Right) The alternative hypothesis is true and $\bm_n(\bY_n)$ has mean direction $\bA_n\bZ_n / \|\bA_n\bZ_n\|_2$ indicated by the red arrow.}
        \label{fig:U_scatter}
    \end{figure}

\begin{figure}[ht]
    \centering
    \includegraphics[width=\textwidth]{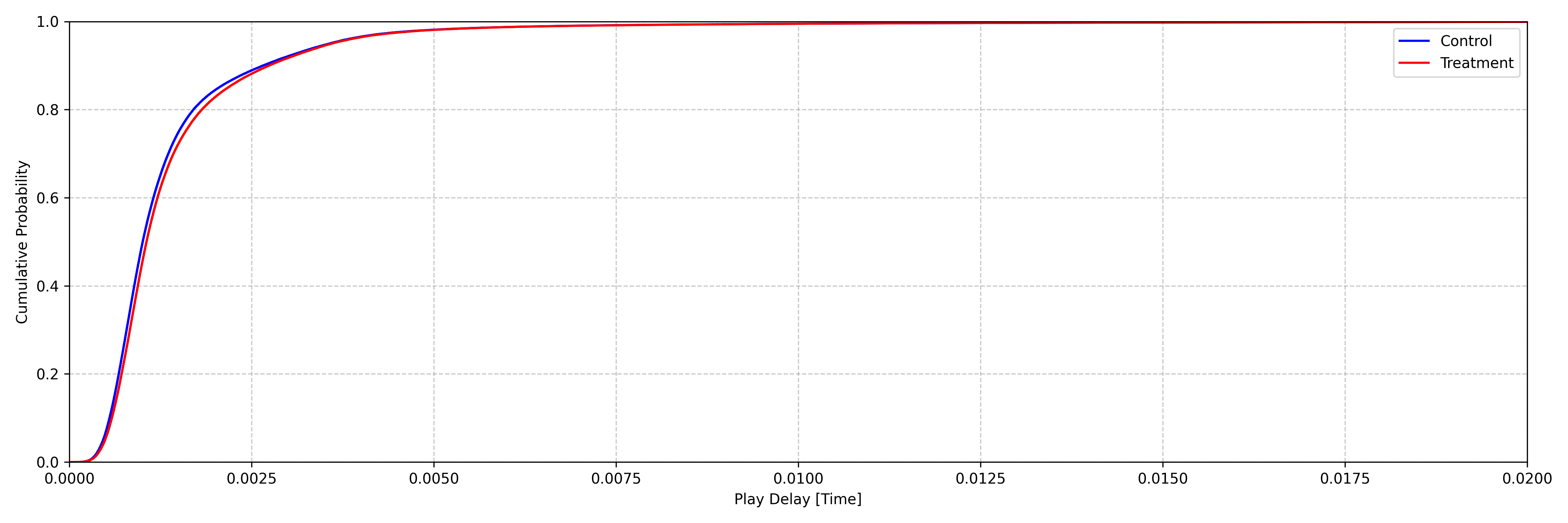}
    \caption{Empirical CDF's of normalized \textit{Play Delay} measurements for treatment and control groups}
    \label{fig:jasa_play_delay_cdf}
\end{figure}

\begin{figure}[ht]
    \centering
    \includegraphics[width=\textwidth]{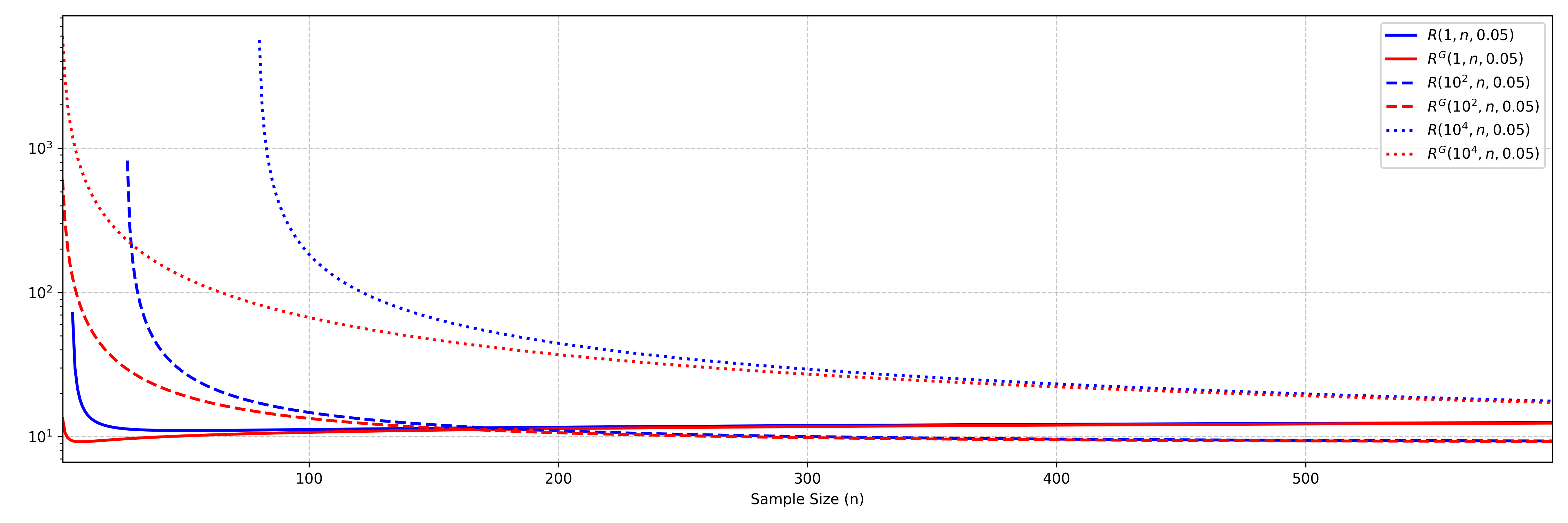}
    \caption{Comparison of the functions $R(g,n,\alpha)$ and $R^G(g,n,\alpha)$ from equations \eqref{eq:radii} for $\alpha = 0.05$, $r=k=1$, and $g \in \{1,100,1000\}$.
    There are three distinct features. 
    Firstly, $R(g,n,\alpha) \geq R^G(g,n,\alpha)$ for all $n$.
    Secondly, $\underset{n\rightarrow \infty}{\lim} \frac{R(g,n,\alpha)}{R^G(g,n,\alpha)} = 1$.
    Thirdly, $R(g,n,\alpha)$ is infinite for small $n$, whereas $R^G(g,n,\alpha)$ is always finite.}
    \label{fig:radii_comparison}
\end{figure}
\end{document}